\newtheorem{definition}{Definition}
\newtheorem{example}[definition]{Example}
\newtheorem{theorem}[definition]{Theorem}
\newtheorem{lemma}[definition]{Lemma}
\newtheorem{remark}[definition]{Remark}
\newtheorem{proposition}[definition]{Proposition}
\DeclareMathOperator{\Res}{Res}
\DeclareMathOperator{\mathAE}{\textup{\AE}}
\begin{document}
	
\title[BTR from extended loop equations]{Blobbed topological recursion
  from extended loop equations}

\author{Alexander Hock \and Raimar Wulkenhaar}
	
\address{Institut für Mathematik, Ruprecht-Karls-Universit\"at
  Heidelberg, \newline Im Neuenheimer Feld 205, 69120, Heidelberg,
  Germany \newline {\itshape e-mail:} \normalfont  
	\texttt{ahock@mathi.uni-heidelberg.de}}
	
\address{Mathematisches Institut, 
Universit\"at M\"unster, \newline
Einsteinstr.\ 62, 48149 M\"unster, Germany \newline
{\itshape e-mail:} \normalfont
\texttt{raimar@math.uni-muenster.de}}

\begin{abstract}
We consider the $N\times N$ Hermitian matrix model with measure
$d\mu_{E,\lambda}(M)=\frac{1}{Z} \exp(-\frac{\lambda N}{4}
\mathrm{tr}(M^4)) d\mu_{E,0}(M)$, where $d\mu_{E,0}$ is the
Gau\ss{}ian measure with covariance $\langle M_{kl}M_{mn}\rangle
=\frac{\delta_{kn}\delta_{lm}}{N(E_k+E_l)}$ for given $E_1,...,E_N>0$.
It was previously understood that this setting gives rise to
two ramified coverings $x,y$
of the Riemann sphere strongly tied by $y(z)=-x(-z)$ and a family
$\omega^{(g)}_{n}$ of meromorphic differentials conjectured to
obey blobbed topological recursion due to Borot and Shadrin. 
We develop a new approach to this problem via 
a system of six meromorphic functions which satisfy extended loop equations.
Two of these functions are symmetric in the
preimages of $x$ and can be determined from their consistency
relations. An expansion at $\infty$ gives global linear and
quadratic loop equations for the $\omega^{(g)}_{n}$. These global
equations provide the $\omega^{(g)}_{n}$ not only in the vicinity of the
ramification points of $x$ but also in the vicinity of all other poles
located at opposite diagonals $z_i+z_j=0$ and at $z_i=0$. We deduce a
recursion kernel representation valid at least for $g\leq 1$.
\end{abstract}

\subjclass[2020]{05A15, 14H70, 14N10, 30F30, 32A20}
\keywords{(Blobbed) topological recursion,
  matrix models, exactly solvable models, enumerative geometry,
  Dyson-Schwinger equations}

	\maketitle

\section{Introduction}
\subsection{Historical comments}

This paper is the fifth in a series
\cite{Grosse:2019jnv,Schurmann:2019mzu,Branahl:2020yru,Hock:2021tbl}
which investigates and solves the \textit{quartic Kontsevich
  model}. See \cite{Branahl:2021slr} for a review.
Back in 1991, Kontsevich \cite{Kontsevich:1992ti}
constructed his classical matrix model to prove Witten's conjecture
\cite{Witten:1990hr} that the generating series of intersection
numbers on the moduli space $\overline{\mathcal{M}}_{g,n}$ of stable
complex curves is a tau function of the integrable KdV hierarchy.  It
is formulated as an $N\times N$-Hermitian matrix model with covariance
$\langle M_{kl}
M_{mn}\rangle=\frac{\delta_{kn}\delta_{lm}}{N(E_k+E_l)}$ for $E_k>0$,
deformed by a cubic potential
$\frac{\mathrm{i}N}{6}\mathrm{Tr}(M^3)$. Explicit results were
computed for the correlation function in the classical Kontsevich
model for instance in
\cite{Makeenko:1991ec,Eynard:2007kz,Eynard:2016yaa} and more recently
for higher spectral dimension with smooth covariance renormalised by
quantum field theoretical techniques in
\cite{Grosse:2016pob,Grosse:2016qmk,Grosse:2019nes}.  The quartic
Kontsevich model has the same covariance
$\langle M_{kl}
M_{mn}\rangle=\frac{\delta_{kn}\delta_{lm}}{N(E_k+E_l)}$ for $E_k>0$,
but deformed by a quartic potential
$\frac{\lambda N}{4}\mathrm{Tr}(M^4)$. It is different from the
generalised Kontsevich model \cite{Belliard:2021jtj} (see for more
details \cite[\S \, 2.1]{Branahl:2020uxs}).

The Hermitian 1-matrix models with trivial covariance but arbitrary
polynomial deformation were properly understood in
\cite{BertrandEynard2004}.  The solution of the Hermitian 2-matrix
model \cite{Chekhov:2006vd} provided an unexpectedly simple
formula. It was the first time that the formula for topological
recursion was written down, explicitly incorporating the local deck
transformation (Galois involution) around a ramification point
$\beta_i$. Eynard and Orantin took this formula in
\cite{Eynard:2007kz} as a universal definition of the theory of
\emph{topological recursion} (TR) for general initial data
$(\Sigma,x,y,B)$, the so-called \textit{spectral curve}, and proved
several properties.

We assume in this article to
have $\Sigma \backsimeq \hat{\mathbb{C}}$ and two coverings
$x,y:\Sigma\to \Sigma$ with simple distinct ramification points. These
two coverings build in TR a meromorphic 1-form
$\omega^{(0)TR}_{1}=y\,dx$ on $\Sigma$; $\omega^{(0)TR}_{2}=B$ is the
symmetric meromorphic bilinear differential on $\Sigma^2$ with double
pole on the diagonal and no residue. From the initial data
$(\Sigma,x,y,B)$, Eynard and Orantin defined recursively in the
negative Euler characteristic $-\chi=2g+n-2$ a family of symmetric
meromorphic differentials $\omega^{(g)TR}_{n}$ on $\Sigma^n$ with poles just
at the ramification points of $x$ for $-\chi>0$.


The proof that the 2-matrix model is governed by the formula of TR needed
to develop a completely
new technique because the number of ramification points
can be arbitrarily large, depending on the deformation (the potential)
of the model. The starting point was to look at two different classes
of mixed correlators, where one of them is rational in $x(z)$ rather
than $z$ \cite{Chekhov:2006vd}. We denote these functions by
$H^{(g),TR}_{n+1}(y(w);z;I)$ and
$P^{(g),TR}_{n+1}(y(w);x(z);I)$\footnote{In \cite{Chekhov:2006vd}, we
  identify $U^{(g)}\to \prod_{u_i\in I} \partial_{x(u_i)} H^{(g),TR}$
  and $E^{(g)}\to \prod_{u_i\in I} \partial_{x(u_i)}
  P^{(g),TR}$}, where $I=\{u_1,...,u_n\}$.
Here, $H^{(g),TR}$ is rational in $\{y(w),z\}\cup I$
and $P^{(g),TR}$ in $\{y(w),x(z)\} \cup I$ as indicated by the
arguments.

The two functions $H^{(g),TR}$ and $P^{(g),TR}$ together with
$W^{(g),TR}_n$ defined by
$d_{u_1}...d_{u_n}W^{(g),TR}_{n+1}(z;u_1,...,u_n)dx(z)
:=\omega^{(g)TR}_{n+1}(z,u_1,...,u_n)$
satisfy a Dyson-Schwinger equation (DSE)
\begin{align}\label{DSEHP}
  &(y(w)-y(z))H^{(g),TR}_{n+1}(y(w);z;I)+P^{(g),TR}_{n+1}(y(w);x(z);I)
 \\\nonumber
 &=- \hspace*{-5mm}
 \sum_{\substack{g_1+g_2=g\\ I_1\uplus I_2=I\\ (g_2,I_2)\neq (0,\emptyset)}}
 \hspace*{-5mm} H^{(g_1),TR}_{|I_1|+1}(y(w);z;I_1)W^{(g_2),TR}_{|I_2|+1}(z;I_2)
 -\partial_{x(z')} H^{(g-1),TR}_{n+2}(y(w);z;z',I)\big|_{z'=z}\,.
\end{align}
Exactly the same DSE for some $H^{(g),TR}$ and $P^{(g),TR}$ (but with
different $x,y$) appears in several other examples which are governed
by TR (see for instance \cite{Belliard:2021jtj,Branahl:2022uge}).

The important observation is that the solution of $P^{(g),TR}$ has a
representation in terms of symmetric functions of $W^{(g'),TR}_{n'}$,
which are symmetric in \emph{all} preimages of $x$ in the variable
$z$, i.e.\ symmetric in $(\hat{z}^k)_{k=0,1,...,d}$ with
$x(z)=x(\hat{z}^k)$ and $\hat{z}^0=z$. Furthermore, the function
$H^{(g),TR}$ was chosen such that its asymptotic expansion in $y(w)$
has as leading order $W^{(g),TR}$, i.e.\
\begin{align*}
  H^{(g),TR}_{n+1}(y(w);z;I)
  =\frac{1}{y(w)}W^{(g),TR}_{n+1}(z;I)+\mathcal{O}(y(w)^{-2}) \,,
\end{align*}
plus additional terms for $(g,n)\in\{(0,0),(0,1)\}$.  From this one
can show that the asymptotic expansion of the solution of $P^{(g),TR}$
recovers the so-called \textit{linear and quadratic loop equations}
\cite{Borot:2013lpa}), which are describing the local behaviour of
$W^{(g),TR}$ around the ramification points. Finally, from the linear
and quadratic loop equations, the well-known recursion formula for the
$W^{(g),TR}_{n+1}(z;I)$ or $\omega^{(g),TR}_{n+1}(z,I)$ can be
deduced. Vice versa, for any spectral curve in TR, r\^{o}le and
formulae for $H^{(g),TR}, P^{(g),TR}$ are completely
determined. In fact, to prove that an example is governed by TR,
the starting point is, in general, the equation \eqref{DSEHP}.

A family $W^{(g),TR}_{n+1}(z;I)$ satisfies topological recursion if
and only if it satisfies the linear and quadratic loop equations
\textit{and} all of its poles are for $2g+n-1>0$ at the ramifications
points of $x$  (assuming a compact curve).  It is very natural 
to assume the linear and quadratic loop
equations but to relax the assumption on the pole
structure. This gave birth to a
generalisation of TR by \textit{blobbed topological recursion} (BTR)
\cite{Borot:2015hna}.  The motivation of formulating BTR came from the
enumeration of stuffed maps \cite{Borot:2013fla}, which is a natural
generalisation of the Hermitian 1-matrix model through a deformation
(the potential) of higher topological structure. BTR is built not just
by the initial data $(\Sigma,x,y,B)$, but enriched by additional blobs
$\phi_{g',n'}$ associated with a topology contributing if
$2g+n-2\leq 2g'+n'-2$. The meromorphic functions
$W^{(g)}_{n+1}(z;I)=\mathcal{P}_z W^{(g)}_{n+1}(z;I)+\mathcal{H}_z
W^{(g)}_{n+1}(z;I)$ decompose in BTR into a part
$\mathcal{P}_z W^{(g)}_{n+1}$ with poles at the ramification points of
$x$ and a part $\mathcal{H}_z W^{(g)}_{n+1}$ with poles elsewhere.
The decomposition is constructed by orthogonal projectors
$\mathcal{P}_z,\mathcal{H}_z$.  Due to the linear and quadratic loop
equations, it is rather easy to show that the polar part
$\mathcal{P}_z W^{(g)}_{n+1}(z;I)$ is still determined by the TR
recursion formula.  However, the part
$\mathcal{H}_z W^{(g)}_{n+1}(z;I)$ is highly dependent on the model under
consideration and therefore on the initial enriched data $\phi_{g,n}$.

\subsection{Main result}

Consider the $N\times N$ Hermitian matrix model with measure
$d\mu_{E,\lambda}(M)=\frac{1}{Z} \exp(-\frac{\lambda N}{4}
\mathrm{tr}(M^4)) d\mu_{E,0}(M)$ where $d\mu_{E,0}$ is the Gau\ss{}ian
measure with covariance
$\langle M_{kl}M_{mn}\rangle
=\frac{\delta_{kn}\delta_{lm}}{N(E_k+E_l)}$ for given $E_1,...,E_N>0$.
Let $e_1,...,e_d$ be the pairwise distinct values in $(E_1,...,E_N)$
and $r_1,...,r_d$ their multiplicities\footnote{For different choices
  of $ d$, which is the number of pairwise distinct eigenvalues
  $ e_i $ of $ E$, we can interpolate between different applications
  that we have in mind. For $ d = N $, the model possesses the same
  covariance with the same multiplicity as the classical Kontsevich
  model but with a quartic deformation instead of a cubic one. For
  $ d = 1 $, the classical quartic Hermitian 1-matrix model is
  recovered. For other values of $ d $ and $r_i$, the model finds
  applications in quartic scalar quantum field theories on
  noncommutative space-time, known as the Grosse-Wulkenhaar model.  }.
In previous work \cite{Schurmann:2019mzu,Branahl:2020yru} we showed
that this setting (called the quartic Kontsevich model) gives rise to
a generic ramified covering $x:\hat{\mathbb{C}}\to \hat{\mathbb{C}}$
of degree $d+1$ with simple poles (one located at $\infty$) and simple
ramification points. The key observation was that the other ramified
covering $y$ of the spectral curve is simply given by
\begin{align}\label{glInvolution}
	y(z)=-x(-z).
\end{align}
This strong $x,y$-entanglement has exceptional consequences. We show
in this paper that the Dyson-Schwinger equations established in
\cite{Branahl:2020yru} give rise to a system of 
seven functions where six of them come in two pairs of triples
\begin{align*}
&(P^{(g)}_{n+1}(x(w),x(z);I),\;
  H^{(g)}_{n+1}(x(w);z;I),\;
  U^{(g)}_{n+1}(w,z;I)), 
  \\
  &(Q^{(g)}_{n+1}(x(w),x(z);I),\;
  M^{(g)}_{n+1}(x(w);z;I),\;
  V^{(g)}_{n+1}(w,z;I))\quad \text{and} \\
  &W^{(g)}_{n+1}(z;I).
\end{align*}
The goal is to derive a recursive procedure to compute
$W^{(g)}_{n+1}(z;I)$ in terms of $W^{(g')}_{n'+1}(z;I')$ with
$2g'+n'-2<2g+n-2$ from the intricate system of equations of all seven
functions. Compared with the two functions
$(P^{(g),TR}_{n+1}(y(w),x(z);I), H^{(g),TR}(y(w);z;I))$ in TR, the
r\^{o}le of $x,y$ is partly flipped.  Partial fraction decompositions
given in Definition~\ref{def:partialfractions} and the equations
themselves given in Proposition \ref{Prop:DSE} are interwoven. By
extending the known techniques from the literature (applied for
instance in \cite{Belliard:2021jtj,Branahl:2022uge}), we show how our
system of equations can be solved and how the Taylor expansion about
$\frac{1}{x(w)}=0$ gives rise to \emph{globally defined} linear and
quadratic loop equations for $W^{(g)}_{n+1}(z;I)$ in the variable
$z$. The loop equations indicate poles at the ramification points
$z=\beta_i$ of $x$, at the opposite diagonal $z+u_i=0$ and (for
$g\geq 1$) at $z=0$. From the detailed structure a formula to compute
$\omega^{(g)}_n$ recursively in decreasing Euler characteristic is
deduced:
\begin{theorem}
\label{thm:mainrecursion}
Let $I=\{u_1,...,u_n\}$ and $x$ be a generic ramified cover of
$\hat{\mathbb{C}}$. Let $x,y$ be related via \eqref{glInvolution} and
$\omega^{(0)}_2(z,w)=\frac{dz dw}{(z-w)^2}+\frac{dz dw}{(z+w)^2}$.  If
the seven functions $W^{(g)}_{n+1}$, $(P^{(g)},H^{(g)},U^{(g)})$ and
$(Q^{(g)},M^{(g)},V^{(g)})$ satisfy the interwoven DSEs of Proposition
\ref{Prop:DSE} and the partial fraction decomposition of
Definition~\ref{def:partialfractions}, then the solution for
$\omega^{(g)}_{n+1}(z,u_1,...u_n)=\lambda^{2g+n-1} d_{u_1}\cdots
d_{u_n} W^{(g)}_{n+1}(z;u_1,...,u_n) dx(z)$, where
$H^{(g)}_{n+1}(x(v);z;I)=- \frac{\lambda
  W^{(g)}_{n+1}(z;I)}{x(v)}+\mathcal{O}(x(v)^{-2})$, is recursively
computed for all $I$ with $2g+|I|\geq 2$ and for $g<2$ via the
recursion kernel representation
\begin{align}
\omega^{(g)}_{|I|+1}(z,I)
&= \sum_{\beta_i}\Res\displaylimits_{q\to \beta_i}
K_i(z;q)\Big\{\hspace*{-3mm}
\sum_{\substack{ I_1\uplus I_2=I\\ g_1+g_2=g\\
(g_i,I_i)\neq (0,\emptyset)   }} \hspace*{-3mm}
\omega^{(g_1)}_{|I_1|+1}(q,I_1)
\omega^{(g_2)}_{|I_2|+1}(q,I_2)
+\omega^{(g-1)}_{|I|+2}(q,q,I)
	 	\Big\}
\nonumber
\\
& +\sum_{j=1}^{|I|} d_{u_j}\Big[\Res\displaylimits_{q\to - u_j}
K_{u_j}(z;q)
\Big\{\hspace*{-3mm}\sum_{\substack{ I_1\uplus I_2=I\\ g_1+g_2=g\\
(g_i,I_i)\neq (0,\emptyset)   }}\hspace*{-3mm}
d_{u_j}^{-1}(\omega^{(g_1)}_{|I_1|+1}(q,I_1)
	 \omega^{(g_2)}_{|I_2|+1}(q,I_2))
\nonumber\\
&
\qquad\qquad
+ d_{u_j}^{-1}\omega^{(g-1)}_{|I|+2}(q,q,I)
+\frac{(dx(q))^2}{6} \frac{\partial^2}{\partial (x(q))^2}
\Big(\frac{\omega^{(g-1)}_{|I|+1}(q,I))}{dx(q) dx(u_j)}\Big)\Big\}\Big]
\nonumber
\\
&+\Res\displaylimits_{q\to 0}
K_0(z;q)	 	\Big\{\hspace*{-3mm}
\sum_{\substack{ I_1\uplus I_2=I\\ g_1+g_2=g\\
(g_i,I_i)\neq (0,\emptyset)   }}\hspace*{-3mm}
\omega^{(g_1)}_{|I_1|+1}(q,I_1)
\omega^{(g_2)}_{|I_2|+1}(q,I_2)
+\omega^{(g-1)}_{|I|+2}(q,q,I)
	 	\nonumber\\
&\qquad\qquad +\frac{(d x(q))^2}{2}
\frac{\partial}{\partial x(q)}
\Big( \frac{d_{q'}^{-1} 
\omega^{(g-1)}_{|I|+2}(q,q',I)}{dx(q)}\Big|_{q'=q}
\Big)\Big\}\;,
\label{eq:mainformula}
\end{align}
where $\beta_i$ are the ramification points of $x$,
$\omega^{(0)}_{2}(q,q)$ should be replaced by
$\lim_{q'\to q}(\omega^{(0)}_{2}(q,q')-\frac{1}{(x(q)-x(q'))^2})$ and
the recursion kernels are given by
$K_i(z;q)=-\frac{(\frac{dz}{z-q}-\frac{dz}{z-\sigma_i(q)})
}{2(y(q)-y(\sigma_i(q))) dx(q)}$,
$K_{u}(z;q)= - \frac{(\frac{dz}{z-q}-\frac{dz}{z+ u})}{2
  (y(q)+x(u))dx(q)}$ and
$K_0(z;q)=-\frac{ (\frac{dz}{z-q}-\frac{dz}{z}) }{2(y(q)+x(
  q))dx(q)}$.
\end{theorem}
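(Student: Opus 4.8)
The plan is to derive the recursion kernel representation \eqref{eq:mainformula} directly from the global linear and quadratic loop equations, following the standard philosophy of the literature (as in \cite{Belliard:2021jtj,Branahl:2022uge}) but now accounting for the three distinct families of poles dictated by the $x,y$-entanglement \eqref{glInvolution}. First I would extract, from the interwoven DSEs of Proposition~\ref{Prop:DSE} together with the partial fraction decomposition of Definition~\ref{def:partialfractions}, the explicit Taylor expansion of the symmetric functions $P^{(g)}$ and $Q^{(g)}$ about $\tfrac{1}{x(w)}=0$. Because $H^{(g)}$ was normalised so that its leading coefficient in $\tfrac{1}{x(w)}$ reproduces $W^{(g)}$, the vanishing of the subleading coefficients in this expansion should translate, after multiplying by $dx(z)\,d_{u_1}\cdots d_{u_n}$, into the global linear loop equation $\omega^{(g)}_{n+1}(z,I)+\omega^{(g)}_{n+1}(\sigma_i(z),I)=O(1)$ near each ramification point $z=\beta_i$, and analogous regularity statements encoding the involution $y(z)=-x(-z)$ at the loci $z+u_j=0$ and $z=0$. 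The quadratic loop equation will emerge at the next order of the same expansion, producing the bilinear combination $\sum \omega^{(g_1)}\omega^{(g_2)}+\omega^{(g-1)}_{n+2}(z,z,I)$ that appears inside every residue of \eqref{eq:mainformula}.

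Next I would localise. Having established that $\omega^{(g)}_{n+1}$, for $2g+|I|\geq 2$, is a meromorphic differential whose poles lie only at $z=\beta_i$, $z+u_j=0$ and $z=0$, I would reconstruct it from its principal parts via a Cauchy-type kernel. At each ramification point $\beta_i$ one inverts the linear loop equation against $\omega^{(0)}_2$ in the standard Eynard--Orantin manner: the combination $\tfrac{dz}{z-q}-\tfrac{dz}{z-\sigma_i(q)}$ is antisymmetric under the deck transformation $\sigma_i$, so pairing it against the $\sigma_i$-odd part of the quadratic loop equation isolates exactly the claimed kernel $K_i(z;q)=-\tfrac{(\tfrac{dz}{z-q}-\tfrac{dz}{z-\sigma_i(q)})}{2(y(q)-y(\sigma_i(q)))\,dx(q)}$. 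At the opposite-diagonal poles $z+u_j=0$ the relevant "deck-type" map is $q\mapsto -u_j$ furnished by \eqref{glInvolution}, which is why $K_{u_j}$ carries the denominator $y(q)+x(u_j)$ and why an overall $d_{u_j}$ with its inverse appears; at $z=0$ the self-reflection $q\mapsto -q$ combined with $y(z)=-x(-z)$ produces the denominator $y(q)+x(q)$ of $K_0$. The extra derivative terms $\tfrac{(dx(q))^2}{6}\partial^2_{x(q)}(\cdots)$ and $\tfrac{(dx(q))^2}{2}\partial_{x(q)}(\cdots)$ are precisely the finite corrections that the global (as opposed to purely local) loop equations contribute at these non-ramification poles, and I would read them off from the finite part of the expansion computed in the first step.

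Finally I would assemble the three residue contributions and verify that their sum is $\omega^{(g)}_{n+1}(z,I)$ by checking that both sides have identical principal parts at every pole and identical (vanishing) behaviour at $\infty$, invoking the normalisation $\omega^{(0)}_2(z,w)=\tfrac{dz\,dw}{(z-w)^2}+\tfrac{dz\,dw}{(z+w)^2}$ which is itself adapted to the involution. The replacement prescription for the coincident limit $\omega^{(0)}_2(q,q)$, namely $\lim_{q'\to q}(\omega^{(0)}_2(q,q')-\tfrac{1}{(x(q)-x(q'))^2})$, handles the diagonal contribution of the quadratic loop equation consistently.

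I expect the main obstacle to be the treatment of the two \emph{new} families of poles at $z+u_j=0$ and $z=0$, which have no counterpart in ordinary topological recursion and arise solely from the entanglement \eqref{glInvolution}. Unlike the ramification points, these loci are not zeros of $dx$, so the naive Eynard--Orantin inversion does not apply verbatim; one must instead show that the global loop equations supply exactly the right principal parts there, and that the additional derivative terms in the kernels $K_{u_j}$ and $K_0$ close the recursion without spurious poles. Controlling the interplay between the $d_{u_j}$-exact structure at the opposite diagonal and the symmetry of $\omega^{(g)}_{n+1}$, and confirming that everything collapses correctly, is delicate; this is presumably also the reason the recursion kernel representation is established only for $g<2$ at this stage.
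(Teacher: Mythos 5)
Your skeleton does coincide with the paper's actual proof --- derive \emph{global} linear and quadratic loop equations from the $x(v)^{-1}$- and $x(v)^{-2}$-coefficients of $D_I\hat{P}^{(g)}_1/P^{(0)}_1$, then reconstruct $\omega^{(g)}$ by a Cauchy-kernel argument localised at the three pole families, with the Eynard--Orantin antisymmetrisation at the $\beta_i$ exactly as you describe. But your mechanism at the two \emph{new} pole families is not correct, and it is precisely the point you flag as delicate. At $z=-u_j$ and $z=0$ there is no deck-type map to pair against: these are not zeros of $dx$, and no antisymmetrisation argument applies. In the paper the subtraction terms $\frac{dz}{z+u_j}$ and $\frac{dz}{z}$ in $K_{u_j}$ and $K_0$ are admissible for an entirely different reason: the linear loop equations \eqref{lle-g0} and \eqref{lleq-g1} exhibit $x'(q)\,W^{(g)}_{|I|+1}(q;I)\,dq$ as a total differential near these points, so its residues at $q=-u_j$ and $q=0$ vanish and the extra terms can be added for free to \eqref{Wg1-residue-1}. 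The denominators $y(q)+x(u_j)$ and $y(q)+x(q)$ then arise not from a reflection $q\mapsto -q$ but because the linear loop equations identify the terms $x(u_j)W^{(g)}_{|I|+1}(z;I)$ (inside the lines (\ddag) of \eqref{qleq-g1}, resp.\ the $j$-summand of the second line of \eqref{qle-g0}) and $x(z)W^{(1)}_{|I|+1}(z;I)$ (inside line (\S) of \eqref{qleq-g1}), which are moved to the left-hand side before dividing by $-(y(q)+x(u_j))$, resp.\ $-(y(q)+x(q))$.

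The second gap is that your first step black-boxes what is in fact the bulk of the proof. ``Extracting the Taylor expansion of $P^{(g)}$ and $Q^{(g)}$'' presupposes solving the interwoven system, which occupies Sections~\ref{sec:g0}--\ref{sec:g1}: the combinatorial $D_I\log$-factorisations (Lemma~\ref{lemma-DIlogPH}, Lemma~\ref{lemma:DIUH}, Proposition~\ref{prop:DlogHP}), the determination of $\hat{Q}^{(0)}$ including the symbolic insertions $D^0_Ix(0)$ of Proposition~\ref{prop:DIlogQ} --- which are the sole source of the pole at $z=0$ and of the correction term in $K_0$ --- a third-order Laurent analysis of $D_I\hat{P}^{(1)}_1/P^{(0)}_1$ at $x(v)=x(z)$ together with the residues at $x(v)=x(u_j)$ (Proposition~\ref{prop:hatFj}, which in turn needs Lemmas~\ref{lemma:hU1} and \ref{lemma:DIUsH}), and finally Lemma~\ref{lemma:qle-simplify}, which converts the finite parts into the universal $\Omega^{(0)reg}_2$- and derivative terms appearing in \eqref{eq:mainformula}. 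Your phrase ``read them off from the finite part of the expansion'' hides exactly these steps; they are also where the restriction $g<2$ actually bites, since for higher genus the analogous diagonal expansion must be pushed to order growing like $4g$. As a plan your proposal points in the right direction and matches the paper's architecture, but as a proof it is incomplete at both of these junctures.
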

We do not see any obstacle to extend the result to all $g$;
only the combinatorics becomes extremely involved and Taylor expansions
of $P^{(g)}_{n+1}(x(w),x(z) ;I)$ up to an order which increases as $4g$ become
necessary. Already in the proof up to genus $g\leq 1$ the employment of a
\textit{loop insertion operator} as an abbreviation for
particular rational functions of 
$(P^{(g)},H^{(g)},U^{(g)})$ and
  $(Q^{(g)},M^{(g)},V^{(g)})$ was essential to master the combinatorics.

  A few remarks:\vspace*{-.5ex} 
  
\begin{itemize}  
\item We establish the linear and quadratic loop equations
  \emph{globally} on $\hat{\mathbb{C}}$ and not only in small
  neighbourhoods of the ramification points as required in the general
  formulation \cite{Borot:2015hna}.  Therefore, we not only get a
  recursion formula for the `polar' part
  $\mathcal{P}_z\omega^{(g)}_{n+1}(z,I)$ but for the entire
  $\omega^{(g)}_{n+1}(z,I)$.

\item In \cite{Hock:2021tbl} we solved the genus zero sector under the
  much weaker assumption that $x,y$ are related as $y(z)=-x(\iota z)$
  for some holomorphic involution $\iota$ on $\hat{\mathbb{C}}$.  We
  noticed that although one can solve
  $(Q^{(0)}_1,M^{(0)}_1,V^{(0)}_1)$ also for $y(z)=-x(\iota z)$
  along the lines of
  \cite{Schurmann:2019mzu}, the resulting expressions are of
  completely different type to which our tools do not apply.
  Nevertheless we would
  like to remark that the involution identity discovered in
  \cite{Hock:2021tbl} was vastly extended in
  \cite{Hock:2022wer,Hock:2022pbw,Alexandrov:2022ydc} to a general
  approach to the $x$-$y$ symmetry in topological recursion.
  
\item In the theory of TR, one can consider a more general version of
  \eqref{DSEHP} that includes intermediate correlators
  \( W^{(g),TR}_{n,m} \) (see, for example, the 2-matrix model in
  \cite{Chekhov:2006vd}). From the perspective of \( x \)-\( y \)
  duality \cite{Alexandrov:2022ydc}, new results have been obtained
  \cite{Hock:2022wer,Alexandrov:2023oov}, which show that these
  intermediate correlators satisfy the linear and quadratic loop
  equations and are thus governed by BTR. It would be interesting to
  compare the recursive structures, possibly also in the context of
  Generalised Topological Recursion \cite{Alexandrov:2024tjo}.

\item To have a meaningful extension of TR, the multidifferentials
  generated by eq.\ \eqref{eq:mainformula} should be symmetric in
  their arguments. Symmetry is a highly non-trivial property due to
  the special role of the variable $ z $ in comparison to
  $ u_i \in I $. We will not prove symmetry here. However, in a
  companion article \cite{hock2025}, we prove together with Shadrin
  the symmetry of all genus zero multidifferentials
  $ \omega^{(0)}_n $. In \cite{hock2025}, the starting point is an
  involution identity satisfied by $\omega^{(0)}_n $, which in
  \cite{Hock:2021tbl} was derived from the Dyson-Schwinger equations
  of the quartic Kontsevich model.
  
\item The assumption of a genus zero spectral curve greatly simplifies
  explicit computations but is not mandatory. However, since our
  result builds upon previous work \cite{Grosse:2019jnv,
    Schurmann:2019mzu} where this assumption was made, we must retain
  it. It might be possible to relax the assumption on the genus of the
  spectral curve, but then the results \textit{op.\ cit.} need to be
  rederived. Having a higher genus spectral curve can have an impact
  on the recursive formula for $\omega^{(g)}_n$ since these are
  assumed to be rational in this article by the rationality of $x,y$.
  
\item Important questions concerning integrability for TR were
  recently settled in \cite{Alexandrov:2024hgu,Alexandrov:2024qfe}. In
  short, any genus zero spectral curve gives rise to a KP
  \(\tau\)-function, and for higher genus, the so-called
  non-perturbative TR ensures KP-integrability. On the other hand,
  polynomial deformations of Kontsevich matrix models with \emph{any}
  potential were proven to provide a BKP $\tau$-function
  \cite{Borot:2023thu}, with the Pfaffian
  $\mathrm{Pf}(\frac{e_k-e_l}{e_k+e_l})$ of the eigenvalues $e_k$ of
  the external matrix $E$ playing a decisive role. It remains to be
  seen whether the symmetry between $e_k-e_l$ and $e_k+e_l$, which is
  also shared by $\omega^{(0)}_2$ of the quartic Kontsevich model (see
  \cite{Branahl:2020yru} and Thm.\ \ref{thm:mainrecursion}), is more
  than a coincidence.

\end{itemize}

The paper is organised as follows. In sec.~\ref{sec:setup} we derive
from previous results a system of equations for
$(P^{(g)}_n,H^{(g)}_n,U^{(g)}_n)$ and
$(Q^{(g)}_n,M^{(g)}_n,V^{(g)}_n)$ and introduce the loop insertion
operator.  Sec.~\ref{sec:g0} solves $P^{(0)}_{|I|+1}(x(v),x(z);I)$ by
exploiting its symmetry in the preimages of $x$ and its residues at
$x(v)=x(u_i)$.  The outcome gives the linear and quadratic loop
equations for $W^{(0)}_n$.  By essentially the same methods (but
including poles at $x(v)=x(0)$) we identify
$Q^{(0)}_{|I|+1}(x(v),x(z);I)$ in sec.\ \ref{sec:g0-Q} and, with much
larger effort, in particular in view of poles at $x(v)=x(z)$,
the functions $P^{(1)}_{|I|+1}(x(v),x(z);I)$ in secs.~\ref{sec:g1-0}
and \ref{sec:g1-I}.  Again, this allows us to derive the global linear
and quadratic loop equations for $W^{(1)}_n$ from which we get in
sec.~\ref{sec:recursion} the recursion formula stated in
Theorem~\ref{thm:mainrecursion}.

\subsection*{Acknowledgements}

We are grateful to anonymous referees for their valuable comments and
suggestions, which have increased the readability of the article. The
major part of the work of AH on this paper was done during his
research stay at the University of Oxford.  He would like to thank the
University of Oxford for providing a great work environment. RW thanks
the the University of Oxford for hospitality during a research
visit. AH was supported through a Walter \mbox{Benjamin}
fellowship\footnote{``Funded by
the Deutsche Forschungsgemeinschaft (DFG, German Research
Foundation) -- Project-ID 465029630.''}.
RW was supported\footnote{``Funded by
  the Deutsche Forschungsgemeinschaft (DFG, German Research
  Foundation) -- Project-ID 427320536 -- SFB 1442, as well as under
  Germany's Excellence Strategy EXC 2044 390685587, Mathematics
  M\"unster: Dynamics -- Geometry -- Structure.''} by the Cluster of
Excellence \emph{Mathematics M\"unster} and the CRC 1442 \emph{Geometry:
  Deformations and Rigidity}.

\section{Setup}
\label{sec:setup}

\subsection{Summary of previous results}

In \cite{Branahl:2020yru} we have shown that a quartic analogue of the
Kontsevich matrix model gives rise to, and is completely determined
by, a coupled system of Dyson-Schwinger equations for three families
of meromorphic functions $\Omega^{(g)}_n(z_1,...,z_n)$,
$\mathcal{T}^{(g)}_{n+1}(u_1,...,u_n\|z,w|)$ and
$\mathcal{T}^{(g)}_{n+1}(u_1,...,u_n\|z|w|)$ on the Riemann sphere
$\hat{\mathbb{C}}=\mathbb{C}\cup\{\infty\}$. Of particular interest
are the $\Omega^{(g)}_n$ which give rise to meromorphic differentials
\begin{subequations}
  \begin{align}
  \omega^{(g)}_{n}(z_1,...,z_n)&=  \Omega^{(g)}_n(z_1,...,z_n)
  \prod_{j=1}^n dx(z_j)\;,\qquad \text{where}
\\
x(z)&= z -\frac{\lambda}{N} \sum_{k=1}^d
\frac{\varrho_k}{z+\varepsilon_k}\;.
\label{R}
\end{align}
\end{subequations}
The ramified cover\footnote{The ramified cover $x$ was called $R$ in
  \cite{Schurmann:2019mzu,Branahl:2020yru}.}
$x: \hat{\mathbb{C}}\to \hat{\mathbb{C}}$ forms with its reflection
$y(z)=-x(-z)$ and
$\omega^{(0)}_{2}(z_1,z_2)=\frac{dz_1\,dz_2}{(z_1-z_2)^2}+
\frac{dz_1\,dz_2}{(z_1+z_2)^2}$ a spectral curve\footnote{In
  \cite{Branahl:2020yru}, building on \cite{Grosse:2019jnv,
    Schurmann:2019mzu}, we assumed a genus zero solution for the
  spectral curve, implying that $x,y$ are rational functions and,
  consequently, that all $\Omega^{(g)}_n$ are rational. This
  assumption can clearly be relaxed, which should be investigated in
  future work.} in the spirit of \emph{topological recursion}
\cite{Eynard:2007kz}. The parameters
$(\lambda,N,d,\{\varrho_k,\varepsilon_k\})$ are defined by the initial
data of the quartic Kontsevich model: It is a matrix model for
$N\times N$-Hermitian matrices $M$ with covariance
$\langle M_{kl}
M_{mn}\rangle=\frac{\delta_{kn}\delta_{lm}}{N(E_k+E_l)}$ for $E_k>0$,
deformed by a quartic potential $\frac{\lambda N}{4}\mathrm{Tr}(M^4)$.
If $(e_1,...,e_d)$ are the pairwise different values in $(E_k)$, which
arise with multiplicities $r_1,...,r_d$, then
$(\varepsilon_k,\varrho_k)$ are determined by $x(\varepsilon_k)=e_k$
and $\varrho_k x'(\varepsilon_k)=r_k$ with
$\lim_{\lambda \to 0} \varepsilon_k =e_k$ and
$\lim_{\lambda \to 0} \varrho_k =r_k$
\cite{Grosse:2019jnv,Schurmann:2019mzu}. The following expectation
values of the quartic Kontsevich model in a formal
$\frac{1}{N}$-expansion are recovered from the meromorphic functions
$\Omega^{(g)}_n$ by the specialisation to the points
$z=\varepsilon_i$, i.e.
\begin{align*}
    &\sum_{g=0}^\infty N^{2-2g-n}\Omega^{(g)}_n(\varepsilon_1,...,\varepsilon_n)-\frac{\delta_{2,n}\delta_{g,0}}{(E_1-E_2)^2}\\
    =&\frac{(-N)^n\partial^n}{\partial E_1...\partial E_n}\log \int_{M\in H_N} \exp\bigg(-\frac{\lambda N}{4} \mathrm{tr}(M^4)\bigg)d\mu_{E,0}(M).
\end{align*}
For the derivation in \cite{Branahl:2020yru}, it was assumed that
$\Omega^{(g)}_n(z_1,...,z_n)$ is independent of $N$, whereas
$(x(z),\varepsilon_i,\varrho_i)$ depend explicitly on $N$ with
$\varrho_i$ of order $N$ and $\frac{r_i}{N}\in \mathbb{Q}$.

The system of Dyson-Schwinger equations (DSE) for $\Omega^{(g)}_n$ and the
two variants of $\mathcal{T}^{(g)}$ in \cite{Branahl:2020yru}
is most conveniently expressed in terms of
meromorphic functions $(W^{(g)}_n, U^{(g)}_n,V^{(g)}_n)$ where multiple
derivatives $\frac{\partial}{\partial x(u_i)}$ are taken out:
\begin{align}
  \Omega^{(g)}_{n+1}(z,u_1,...,u_n) &=:
  \frac{\partial^n W^{(g)}_{n+1}(z;u_1,...,u_n)}{\partial   
    x(u_1)\cdots \partial x(u_n)} \;,
  \label{eq:OmegaW}
\\
\mathcal{T}^{(g)}_{n+1}(u_1,...,u_n\|z,w|) &=:
  \frac{\partial^n U^{(g)}_{n+1}(z,w;u_1,...,u_n)}{\partial   
x(u_1)\cdots \partial x(u_n)} \;,
\nonumber
\\
\mathcal{T}^{(g)}_{n+1}(u_1,...,u_n\|z|w|) &=:
  \frac{\partial^n V^{(g)}_{n+1}(z,w;u_1,...,u_n)}{\partial   
x(u_1)\cdots \partial x(u_n)} \;.
\nonumber
\end{align}
In terms of $(U,V,W)$ and with $I:=\{u_1,...,u_n\}$ and $|I|=n$,
the system reads:
\\[1ex]
{\bf (a) DSE for $U^{(g)}_{n+1}$:}
\begin{align}
& (x(w)+y(z))U^{(g)}_{|I|+1}(z,w;I)
+\frac{\lambda}{N}\sum_{k=1}^d \frac{r_k
U^{(g)}_{|I|+1}(\varepsilon_k,w;I)}{x(z)-x(\varepsilon_k)}
\label{DSE-U}
\\
&=\delta_{|I|,0}\delta_{g,0}
+\lambda \hspace*{-3mm}
\sum_{\substack{ I_1\uplus I_2=I \\ g_1+g_2=g \\ (g_1,I_1) \neq (0,\emptyset)  }}
\hspace*{-3mm}
W^{(g_1)}_{|I_1|+1}(z;I_1) U^{(g_2)}_{|_2|+1}(z,w;I_2)
-\lambda 
  \sum_{j=1}^{|I|}
\frac{U^{(g)}_{|I|}(u_j,w;I\setminus u_j)}{ x(u_j)-x(z)}
\nonumber
\\
&
-\lambda \frac{\partial}{\partial x(s)}U^{(g-1)}_{|I|+2}(z,w;I\cup\{s\})
\Big|_{s=z}
- \lambda \frac{V^{(g-1)}_{|I|+1}(z,w;I)-V^{(g-1)}_{|I|+1}(w,w;I)}{
  x(w)-x(z)}  \;.
\nonumber
\end{align}
{\bf (b) DSE for $V^{(g)}_{n+1}$:}
\begin{align}
&(x(z)+y(z)) V^{(g)}_{|I|+1}(z,w;I)
+\frac{\lambda}{N}\sum_{k=1}^d r_k\frac{
V^{(g)}_{|I|+1}(\varepsilon_k,w;I)}{x(z)-x(\varepsilon_k)}
\label{DSE-V}
\\
&=-
\lambda \hspace*{-3mm}
\sum_{\substack{ I_1\uplus I_2=I \\ g_1+g_2=g \\ (g_1,I_1) \neq (0,\emptyset)  }}
\hspace*{-3mm}
W^{(g_1)}_{|I_1|+1}(z;I_1) V^{(g_2)}_{|I_2|+1}(z,w;I_2)
-\lambda \sum_{j=1}^{|I|}
\frac{V^{(g)}_{|I|}(u_j,w;I{\setminus} u_i)}{x(u_j)-x(z)}
\nonumber
\\
&
-\lambda \frac{\partial}{\partial x(s)} V^{(g-1)}_{|I|+2}(z,w;I\cup\{s\})
\Big|_{s=z}
-\lambda \frac{U^{(g)}_{|I|+1}(z,w;I)
 -U^{(g)}_{|I|+1}(w,w;I)}{x(w)-x(z)} 
\;.
\nonumber
\end{align}
{\bf (c) Connecting equation for $W^{(g)}_{n+1}$:}
\begin{align}
W^{(g)}_{|I|+1}(z;I)&=\frac{\delta_{|I|,1}\delta_{g,0}}{x(z)-x(u_1)}
+\frac{\delta_{|I|,0}\delta_{g,0}}{\lambda}
\Big(x(z)+\frac{\lambda}{N}\sum_{k=1}^d \frac{r_k}{x(\varepsilon_k)-x(z)}\Big)
\label{DSE-W}
\\*
&+\frac{1}{N} \sum_{l=1}^d r_l
U^{(g)}_{|I|+1}(z,\varepsilon_l;I)
-\sum_{j=1}^{|I|} U^{(g)}_{|I|}(z,u_j;I{\setminus}u_j)
+V^{(g-1)}_{|I|+1}(z,z;I)\;.
\nonumber
\end{align}
The connecting equation (c) has been extended to include the
consistency equation $W^{(0)}_1(z;\emptyset) :=\frac{1}{\lambda} y(z)$
for the ramified cover $x$, see \cite{Schurmann:2019mzu}. This
consistency equation turned the originally non-linear equation
\cite{Grosse:2009pa, Grosse:2012uv} for
$U^{(0)}_1(z,w)\equiv U^{(0)}_1(z,w;\emptyset) $ into the linear
equation (\ref{DSE-U}) for $I=\emptyset$; its solution is
\cite{Grosse:2019jnv,Schurmann:2019mzu}
\begin{align}
U^{(0)}_1(z,w) = \frac{1}{x(z)+y(w)}
  \frac{ \prod_{k=1}^d (x(w)+y(\hat{z}^k))}{
    \prod_{k=1}^d (x(w)-x(\varepsilon_k))}
\label{U01}
\end{align}
where $\{z=\hat{z}^0,\hat{z}^1,...,\hat{z}^d\}:=x^{-1}(x(z))$ is the
set of preimages of $x$. Straightforward manipulations show the
symmetry $U^{(0)}_1(z,w)=U^{(0)}_1(w,z)$.
The basic equation
for $V^{(0)}_1(z,w)\equiv V^{(0)}_1(z,w;\emptyset)$ has been solved in
\cite{Schurmann:2019mzu}:
\begin{align}
&V^{(0)}_1(z,w)
\label{V01}
\\
&=\frac{\lambda}{(x(w)-x(z))^2}
\Big( U^{(0)}_1(z,w)
-
\frac{(x(w)+x(z)-2x(0))\mathAE(x(z))\mathAE(x(w))}{
  (x(w)+y(w))(x(z)+y(z))}
\Big)\;,
\nonumber
\end{align}
where 
$  \mathAE(x(z))
  := \prod_{k=1}^d \frac{x(z)-x(\alpha_k)}{
    x(z)-x(\varepsilon_k)}$ and $\{0,\pm \alpha_1,...,\pm \alpha_d\}$ are the solutions of $x(z)+y(z)=0$. 
The diagonal function
is also expressed in terms of $\mathAE(x(z))$
\cite{Schurmann:2019mzu}:
\begin{align}
U^{(0)}_1(z,z)
=
\frac{2(x(z)-x(0))(\mathAE(x(z)))^2}{
  (x(z)+y(z))^2}  \;.
\label{U01-diag}
\end{align}

\begin{remark}
  \label{remark:sym}
  The equations \eqref{DSE-U} and \eqref{DSE-V} were derived in
  \cite{Branahl:2020yru} from identities for the two-point functions
  $\langle M_{ab} M_{ba}\rangle= \int_{H_N} M_{ab} M_{ba}\,
  d\mu_{E,\lambda}(M)$ and
  $\langle M_{aa} M_{bb}\rangle= \int_{H_N} M_{aa} M_{bb}\,
  d\mu_{E,\lambda}(M)$, which are symmetric in $a,b$. The derivation
  made a choice in the order of $a,b$.  Repeating all steps in the
  other order one would get an equation
\begin{align}
& (x(z)+y(w))U^{(g)}_{|I|+1}(z,w;I)
+\frac{\lambda}{N}\sum_{k=1}^d \frac{r_k
U^{(g)}_{|I|+1}(z,\varepsilon_k;I)}{x(w)-x(\varepsilon_k)}
\tag{\ref{DSE-U}'}
\label{DSE-Uprine}
\\
&=\delta_{|I|,0}\delta_{g,0}
+\lambda \hspace*{-3mm}
\sum_{\substack{ I_1\uplus I_2=I \\ g_1+g_2=g \\ (g_1,I_1) \neq (0,\emptyset)  }}
\hspace*{-3mm}
W^{(g_1)}_{|I_1|+1}(w;I_1) U^{(g_2)}_{|_2|+1}(z,w;I_2)
-\lambda 
  \sum_{j=1}^{|I|}
\frac{U^{(g)}_{|I|}(z,u_i;I\setminus u_j)}{
  x(u_j)-x(w)}
\nonumber
\\
&
-\lambda \frac{\partial}{\partial x(s)}
U^{(g-1)}_{|I|+2}(z,w;I\cup\{s\})\Big|_{s=w}
- \lambda \frac{V^{(g-1)}_{|I|+1}(z,w;I)-V^{(g-1)}_{|I|+1}(z,z;I)}{
  x(z)-x(w)}  
\nonumber
\end{align}
and similarly for $V^{(g)}_{|I|+1}(z,w;I)$. Exchanging $w\leftrightarrow z$
in \eqref{DSE-Uprine} and comparing with \eqref{DSE-U} shows that  the pairs
$(U^{(g)}_{|I|+1}(z,w;I),V^{(g)}_{|I|+1}(z,w;I))$ and
$(U^{(g)}_{|I|+1}(w,z;I),V^{(g)}_{|I|+1}(w,z;I))$ satisfy the same system of
equations. Since the solution is unique according to
the construction below, we necessarily have symmetry
$U^{(g)}_{|I|+1}(z,w;I)=U^{(g)}_{|I|+1}(w,z;I)$ and
$V^{(g)}_{|I|+1}(z,w;I)=V^{(g)}_{|I|+1}(w,z;I))$ in the first two arguments.
\end{remark}

The solution of the system (\ref{DSE-U}),(\ref{DSE-V}) and (\ref{DSE-W})
in \cite{Branahl:2020yru} for
$(g,n)=\{(0,2),(0,3),(0,4),(1,1)\}$
provided strong support for the
conjecture that the meromorphic differentials $\omega^{(g)}_{n}$ obey
\emph{blobbed topological recursion}, a systematic extension of TR due
to Borot and Shadrin \cite{Borot:2015hna}.
In \cite{Hock:2021tbl} we succeeded in solving
the genus $g=0$ sector in larger generality. The result of
\cite{Hock:2021tbl}, restricted to $y(z)=-x(-z)$, is covered by
Thm.~\ref{thm:mainrecursion} for $g=0$.

\subsection{Auxiliary functions}

Our proof of Theorem \ref{thm:mainrecursion} follows a very different
strategy than the one for genus $g=0$ given in
\cite{Hock:2021tbl}. The key idea is to rewrite the Dyson-Schwinger
equations \eqref{DSE-U} and \eqref{DSE-V} into equations for auxiliary
functions which in one or two arguments are symmetric in the preimages
of $x$ (given in (\ref{R})).  For genus $g\geq 1$ it is essential that
$y(z)=-x(-z)$.  We separate the arguments in the functions below by a
comma if the function is symmetric when exchanging the arguments,
otherwise by a semicolon.
\begin{definition}
\label{def:partialfractions}
  For $I=\{u_1,...,u_n\}$ the following combinations of the functions $U,V$
  are introduced:
  \begin{subequations}
    \label{def-all}
  \begin{align}
  &  H^{(g)}_{|I|+1}(x(v);z;I)
  \label{def-H}
  \\
  &:=
  \delta_{g,0}\delta_{I,\emptyset}
  -\frac{\lambda}{N} \sum_{l=1}^d 
\frac{r_l U^{(g)}_{|I|+1}(z,\varepsilon_l;I)}{
    x(v)-x(\varepsilon_l)}  
  +\lambda \sum_{j=1}^{|I|} \frac{U^{(g)}_{|I|}(z,u_j;I{\setminus}u_j)}{
    x(v)-x(u_j)}
-\lambda \frac{V^{(g-1)}_{|I|+1}(z,z;I)}{x(v)-x(z)}\;,\nonumber
\\
&P^{(g)}_{|I|+1}(x(v),x(z);I)
\label{def-P}
\\
&:= \frac{\lambda \delta_{|I|,1}\delta_{g,0}}{x(v)-x(u_1)}
+\delta_{|I|,0}\delta_{g,0}
\Big(x(v)+x(z)
-\frac{\lambda}{N}\sum_{k=1}^d \frac{r_k}{x(v)-x(\varepsilon_k)}\Big)
\nonumber
\\
&-\frac{\lambda}{N} \sum_{k=1}^d 
\frac{r_k H^{(g)}_{|I|+1}(x(v);\varepsilon_k;I)}{x(z)-x(\varepsilon_k)}
+\lambda 
\sum_{j=1}^{|I|}\frac{H^{(g)}_{|I|}(x(v);u_j;I\setminus u_j)}{
  x(z)-x(u_j)}\;,
  \nonumber
  \\
&M^{(g)}_{|I|+1}(x(v);z;I)
  \label{def-M}
  \\
  &:=  -\frac{\lambda}{N}\sum_{l=1}^d 
\frac{r_l V^{(g)}_{|I|+1}(z,\varepsilon_l;I)}{
    x(v)-x(\varepsilon_l)}
+\lambda \sum_{j=1}^{|I|} \frac{V^{(g)}_{|I|}(z,u_j;I{\setminus}u_j)}{
  x(v)-x(u_j)}
-\lambda \frac{U^{(g)}_{|I|+1}(z,z;I)}{x(v)-x(z)}\;,
\nonumber
\\
&Q^{(g)}_{|I|+1}(x(v),x(z);I)  
\label{def-Q}
\\
&:=-\frac{\lambda}{N}\sum_{k=1}^d r_k 
  \frac{M^{(g)}_{|I|+1}(x(v);\varepsilon_k;I)}{x(z)-x(\varepsilon_k)}
+\lambda \sum_{j=1}^{|I|}
\frac{M^{(g)}_{|I|+1}(x(v);u_j;I{\setminus} u_j)}{x(z)-x(u_j)}\;.
\nonumber
\end{align}
\end{subequations}
\end{definition}

\begin{proposition}\label{Prop:DSE}
  The Dyson-Schwinger equations \eqref{DSE-U}, \eqref{DSE-V} and 
\eqref{DSE-W}  imply:  
\begin{subequations}
  \label{DSE-all}
  \begin{align}
&H^{(g)}_{|I|+1}(x(v);z;I)
\label{DSE-UH}
\\
&= (x(z)+y(v)) U^{(g)}_{|I|+1}(v,z;I)
+\lambda
\sum_{\substack{ I_1\uplus I_2=I \\ g_1+g_2=g \\ (g_1,I_1)\neq (0,\emptyset)}}
W^{(g_1)}_{|I_1|+1}(v;I_1)
U^{(g_2)}_{|I_2|+1}(v,z;I_2)
\nonumber
\\
&+\lambda \frac{\partial}{\partial x(s)}U^{(g-1)}_{|I|+2}(v,z;I\cup s)\Big|_{s=v}
+ \lambda \frac{V^{(g-1)}_{|I|+1}(v,z;I)}{x(z)-x(v)}\;,
\nonumber
\\
&P^{(g)}_{|I|+1}(x(v),x(z);I)
\label{DSE-HP}
\\
&=  (x(v)+y(z)) H^{(g)}_{|I|+1}(x(v);z;I)
+\lambda
\sum_{\substack{ I_1\uplus I_2=I \\g_1+g_1=g \\ (g_1,I_1)\neq (0,\emptyset)}}
W^{(g_1)}_{|I_1|+1}(z;I_1)
H^{(g)}_{|I_2|+1}(x(v);z;I_2)
\nonumber
\\
&+\lambda
\frac{\partial}{\partial x(s) } H^{(g-1)}_{|I|+2}(x(v);z;I\cup s)\Big|_{s=z}
+ \lambda \frac{ M^{(g-1)}_{|I|+1}(x(v);z;I)}{x(v)-x(z)}\;,
\nonumber
\\
&M^{(g)}_{|I|+1}(x(v);z;I)
\label{DSE-VM}
\\
&=(x(v)+y(v)) V^{(g)}_{|I|+1}(v,z;I)
+\lambda
\sum_{\substack{ I_1\uplus I_2=I\\  g_1+g_2=g \\ (g_1,I_1)\neq (0,\emptyset)} }
W^{(g_1)}_{|I_1|+1}(v;I_1) V^{(g_2)}_{|I_2|+1}(v,z;I_2)
\nonumber
\\
&+\lambda \frac{\partial}{\partial x(s)}
V^{(g-1)}_{|I|+2}(v,z;I\cup s)\Big|_{s=v}
+\lambda \frac{U^{(g)}_{|I|+1}(v,z;I)}{x(z)-x(v)} \;,
\nonumber
\\
&Q^{(g)}_{|I|+1}(x(v),x(z);I)
\label{DSE-MQ}
\\
&=  (x(z)+y(z))  M^{(g)}_{|I|+1}(x(v);z;I)
 +\lambda \sum_{\substack{ I_1\uplus I_2=I\\  g_1+g_2=g \\ (g_1,I_1)\neq (0,\emptyset)} }
W^{(g_1)}_{|I_1|+1}(z;I_1)
M^{(g_2)}_{|I_2|+1}(x(v);z;I_2)
\nonumber
\\
&+\lambda \frac{\partial}{\partial x(s)}
M^{(g-1)}_{|I|+2}(x(v);z;I\cup s)\Big|_{s=z}
+
\lambda \frac{ H^{(g)}_{|I|+1}(x(v);z;I)}{x(v)-x(z)}\;.
\nonumber
\end{align}
\end{subequations}
\begin{proof}
  Equations (\ref{DSE-UH}) and (\ref{DSE-VM}) are an obvious rewriting
  of \eqref{DSE-U} and \eqref{DSE-V}, respectively.  For the proof of
  (\ref{DSE-HP}) one starts from (\ref{DSE-UH}) for
  $z\mapsto \varepsilon_k$, multiplied by multiplies by
  $\frac{\lambda}{N} \frac{r_k}{x(z)-x(\varepsilon_k)}$ and summed
  over $k$. A rearrangement taking \eqref{DSE-W} into account gives the
  assertion. Similarly for (\ref{DSE-MQ}).
\end{proof}  
\end{proposition}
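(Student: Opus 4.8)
The plan is to organise the four identities into two pairs and prove them in cascade. The first pair \eqref{DSE-UH} and \eqref{DSE-VM} I expect to be a purely algebraic rewriting of the input equations \eqref{DSE-U} and \eqref{DSE-V}; the second pair \eqref{DSE-HP} and \eqref{DSE-MQ} I expect to follow from the first by applying a partial-fraction operator in the non-symmetric slot and then collapsing the resulting sums through the connecting equation \eqref{DSE-W}.

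For \eqref{DSE-UH} I would take \eqref{DSE-U}, relabel $(z,w)\mapsto(v,z)$, and solve for $(x(z)+y(v))U^{(g)}_{|I|+1}(v,z;I)$. Invoking the symmetry $U^{(g)}_{|I|+1}(a,b;I)=U^{(g)}_{|I|+1}(b,a;I)$ and $V^{(g)}_{|I|+1}(a,b;I)=V^{(g)}_{|I|+1}(b,a;I)$ from Remark~\ref{remark:sym}, together with the sign flips $\frac{1}{x(u_j)-x(v)}=-\frac{1}{x(v)-x(u_j)}$ and $\frac{1}{x(z)-x(v)}=-\frac{1}{x(v)-x(z)}$, turns the three inhomogeneous sums of \eqref{DSE-U} (over $\varepsilon_k$, over $u_j$, and the diagonal part of the telescoping $V^{(g-1)}$ term) into precisely the three partial-fraction sums defining $H^{(g)}_{|I|+1}(x(v);z;I)$ in \eqref{def-H}, the constant matching because $\delta_{|I|,0}\delta_{g,0}=\delta_{g,0}\delta_{I,\emptyset}$. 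Everything left over --- the genus-split $W\cdot U$ convolution, the $\partial_{x(s)}$-term and the off-diagonal $V^{(g-1)}$ contribution --- is by construction the right-hand side of \eqref{DSE-UH}. The identity \eqref{DSE-VM} is obtained verbatim from \eqref{DSE-V} using \eqref{def-M}.

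For \eqref{DSE-HP} I would apply to both sides of the already established \eqref{DSE-UH} the operator that builds $P$ from $H$ in \eqref{def-P}: evaluate the middle argument of $H$ at $\varepsilon_k$ resp.\ $u_j$, weight by $-\frac{\lambda}{N}\frac{r_k}{x(z)-x(\varepsilon_k)}$ resp.\ $\lambda\frac{1}{x(z)-x(u_j)}$, and sum. Three partial-fraction manipulations then drive the computation. First, the spectral prefactor of \eqref{DSE-UH}, which at middle argument $\varepsilon_k$ reads $x(\varepsilon_k)+y(v)$, is split via $\frac{x(\varepsilon_k)+y(v)}{x(z)-x(\varepsilon_k)}=\frac{x(z)+y(v)}{x(z)-x(\varepsilon_k)}-1$ (and its $u_j$-analogue), so the leading $U$-term reassembles into $(x(z)+y(v))H^{(g)}(x(z);v;I)$; the required leading factor $(x(v)+y(z))$ of \eqref{DSE-HP} then emerges after using \eqref{DSE-UH} once more in its $v\leftrightarrow z$ guise to trade $(x(z)+y(v))H^{(g)}(x(z);v;I)$ for $(x(v)+y(z))H^{(g)}(x(v);z;I)$ modulo terms routed into the convolution and derivative contributions. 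Second, the off-diagonal $V^{(g-1)}$ term of \eqref{DSE-UH}, which at middle argument $\varepsilon_k$ carries a denominator $x(\varepsilon_k)-x(v)$, is handled through the product split $\frac{1}{(x(z)-x(\varepsilon_k))(x(\varepsilon_k)-x(v))}=\frac{1}{x(z)-x(v)}\big(\frac{1}{x(\varepsilon_k)-x(v)}-\frac{1}{x(\varepsilon_k)-x(z)}\big)$, which extracts a global prefactor $\frac{1}{x(v)-x(z)}$ and reconstitutes $M^{(g-1)}(x(v);z;I)$ from \eqref{def-M}, yielding the last term of \eqref{DSE-HP}. Third --- and this is where \eqref{DSE-W} is indispensable --- the ``$-1$'' residue pieces left behind by the first split collect the bare sums $\frac{\lambda}{N}\sum_k r_k U^{(g)}(v,\varepsilon_k;I)-\lambda\sum_j U^{(g)}(v,u_j;I{\setminus}u_j)$, which the connecting equation \eqref{DSE-W} rewrites as $W$-data; these recombine with the transformed $W\cdot U$ convolution to produce the genus-split $W\cdot H$ sum and the explicit $\delta$/polynomial terms of \eqref{def-P}, all of which must cancel or match. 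The identity \eqref{DSE-MQ} follows by the identical procedure applied to \eqref{DSE-VM}, under $(U,V,H,M)\mapsto(V,U,M,H)$ and with \eqref{def-Q} in place of \eqref{def-P}.

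The hard part will be the bookkeeping in the second pair for general $(g,I)$: keeping track of which of $v,z$ plays the symmetric role after each partial-fraction split, and verifying that the many residue remainders telescope exactly, so that after inserting \eqref{DSE-W} nothing survives except the four terms on the right of \eqref{DSE-HP}. The base case $(g,I)=(0,\emptyset)$, where the $W\cdot H$ sum and the $\partial_{x(s)}$- and $M^{(g-1)}$-terms all drop out and \eqref{DSE-HP} reduces to $P^{(0)}_1=(x(v)+y(z))H^{(0)}_1$, is a clean sanity check; the genuine difficulty is the simultaneous presence of the convolution sum and the derivative term at higher $(g,|I|)$, whose cross-matching against the definitions is the only step I would not expect to be mechanical.
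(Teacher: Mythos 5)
Your proposal is correct and follows essentially the same route as the paper: you obtain \eqref{DSE-UH} and \eqref{DSE-VM} as direct rewritings of \eqref{DSE-U} and \eqref{DSE-V} using the symmetry of Remark~\ref{remark:sym}, and \eqref{DSE-HP}, \eqref{DSE-MQ} by applying the partial-fraction construction of \eqref{def-P}, \eqref{def-Q} to \eqref{DSE-UH}, \eqref{DSE-VM} with the middle argument set to $\varepsilon_k$ (resp.\ $u_j$) and then rearranging via the connecting equation \eqref{DSE-W} --- which is exactly the paper's three-line proof, only spelled out. Your explicit handling of the $v\leftrightarrow z$ role swap (the construction first delivers the identity with leading term $(x(z)+y(v))\,H^{(g)}_{|I|+1}(x(z);v;I)$ and the swapped $M^{(g-1)}_{|I|+1}(x(z);v;I)$, to be converted to the stated form via the $U,V$-symmetry of Remark~\ref{remark:sym} and a second use of \eqref{DSE-UH}) makes precise a bookkeeping point that the paper's sketch silently absorbs.
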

Inserting $U^{(0)}_1(v,z)$ given by (\ref{U01}) into (\ref{DSE-UH}) and
the result into (\ref{DSE-HP}) gives
\begin{subequations}
\begin{align}
H^{(0)}_1(x(v);z):=
H^{(0)}_1(x(v);z;\emptyset)&=
\prod_{k=1}^d \frac{x(v)+y(\hat{z}^k)}{x(v)-x(\varepsilon_k)}\;,
\label{H01}
\\
P^{(0)}_1(x(v),x(z)):=
P^{(0)}_1(x(v);x(z);\emptyset)
&= 
\frac{\prod_{k=0}^d(x(v)+y(\hat{z}^k))}{
  \prod_{k=1}^d(x(v)-x(\varepsilon_k))}\;.
\label{P01}
\end{align}
\end{subequations}

\begin{proposition}
  The functions $P^{(g)}_n$ and $Q^{(g)}_n$ are symmetric in their
  first two arguments.
 \begin{proof} 
   Expressing all $H^{(g')}_{n'}$ and $M^{(g')}_{n'}$ in
   (\ref{DSE-HP}) and  (\ref{DSE-MQ}) in terms of 
$U^{(g'')}_{n''}$ and $V^{(g'')}_{n''}$ via 
(\ref{DSE-UH}) and  (\ref{DSE-VM}) gives a
manifestly symmetric expression when taking
the symmetries $U^{(g)}_{|I|+1}(z,v;I)=U^{(g)}_{|I|+1}(v,z;I)$
and $V^{(g)}_{|I|+1}(z,v;I)=V^{(g)}_{|I|+1}(v,z;I)$ according to
Remark~\ref{remark:sym} into account.
\end{proof}
\end{proposition}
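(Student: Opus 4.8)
The plan is to eliminate $H^{(g)}$ and $M^{(g)}$ completely: I substitute the expression \eqref{DSE-UH} for $H^{(g)}$ into \eqref{DSE-HP}, and the expression \eqref{DSE-VM} for $M^{(g)}$ into \eqref{DSE-MQ}, doing this for every $H^{(g')}$ and $M^{(g')}$ that occurs, including those inside the sums and inside the insertion-derivative terms, so that $P^{(g)}_{|I|+1}(x(v),x(z);I)$ and $Q^{(g)}_{|I|+1}(x(v),x(z);I)$ become rational expressions built from $U^{(g')}_{n'}$, $V^{(g')}_{n'}$ and $W^{(g')}_{n'}$ only. I then check invariance under the exchange $v\leftrightarrow z$, using throughout the symmetries $U^{(g)}_{|I|+1}(v,z;I)=U^{(g)}_{|I|+1}(z,v;I)$ and $V^{(g)}_{|I|+1}(v,z;I)=V^{(g)}_{|I|+1}(z,v;I)$ of Remark~\ref{remark:sym}.

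The mechanism is already visible in the leading contributions. Multiplying the prefactor $(x(v)+y(z))$ of \eqref{DSE-HP} by the prefactor $(x(z)+y(v))$ carried by the $U^{(g)}$-term of \eqref{DSE-UH} produces $(x(v)+y(z))(x(z)+y(v))$, which is symmetric under $v\leftrightarrow z$ and multiplies the symmetric $U^{(g)}_{|I|+1}(v,z;I)$; for $Q^{(g)}$ the non-symmetric factor $(x(v)+y(v))$ of \eqref{DSE-VM} is rendered symmetric by the $(x(z)+y(z))$ standing in front of $M^{(g)}$ in \eqref{DSE-MQ}, giving $(x(z)+y(z))(x(v)+y(v))\,V^{(g)}_{|I|+1}(v,z;I)$. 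Thus the leading piece is symmetric on its own.

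The bulk of the remaining work is a pairing of monomials under $v\leftrightarrow z$. The two single-sum terms carrying $W^{(g_1)}(z;I_1)$ and $W^{(g_1)}(v;I_1)$ are interchanged by the swap once the $U$- and $V$-symmetries are used and the splitting $I_1\uplus I_2=I$ is relabelled, and likewise for the genuine double-sum terms. The terms with the antisymmetric denominator $x(v)-x(z)$ are the first delicate point: the $V^{(g-1)}$-contribution inherited from \eqref{DSE-UH} combines with the leading $(x(v)+y(v))V^{(g-1)}$-piece of the substituted $M^{(g-1)}$-term of \eqref{DSE-HP}, the two prefactors subtract, and only $\lambda\,\tfrac{y(z)-y(v)}{x(z)-x(v)}\,V^{(g-1)}_{|I|+1}(v,z;I)$ survives; since $\tfrac{y(z)-y(v)}{x(z)-x(v)}$ is symmetric (numerator and denominator are both antisymmetric) this contribution is symmetric, and an identical cancellation yields $\lambda\,\tfrac{y(z)-y(v)}{x(z)-x(v)}\,U^{(g)}_{|I|+1}(v,z;I)$ in the $Q$-computation. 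The two nested $\tfrac{1}{x(v)-x(z)}$-factors combine to the manifestly symmetric double pole $-\tfrac{\lambda^2}{(x(v)-x(z))^2}\,U^{(g-1)}_{|I|+1}(v,z;I)$.

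The step I expect to be the real obstacle is the insertion-derivative terms $\lambda\,\partial_{x(s)}(\cdots)|_{s=z}$ and $\lambda\,\partial_{x(s)}(\cdots)|_{s=v}$. After substitution one must apply the Leibniz rule to $\partial_{x(s)}H^{(g-1)}_{|I|+2}(x(v);z;I\cup s)|_{s=z}$; the inserted variable $s$ can fall either on a $U$-factor or on a $W$-factor of the substituted $H^{(g-1)}$, and both cases, together with the double-insertion term, must be matched against the evaluations at $s=v$ coming from \eqref{DSE-UH}, keeping the constraints $(g_1,I_1)\neq(0,\emptyset)$ and $g_1+g_2=g$ consistent through the pairing. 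The genuinely new ingredient is the contribution in which $\partial_{x(s)}$ hits a factor $W^{(g_1)}(v;J\cup s)$: this term has no explicit partner among the other monomials, and its invariance must instead be read off from the \emph{full} symmetry of the differentials, namely the identity $\partial_{x(s)}W^{(g_1)}_{|J|+2}(a;J,s)|_{s=b}=\partial_{x(s)}W^{(g_1)}_{|J|+2}(b;J,s)|_{s=a}$, which holds because $\omega^{(g_1)}_{|J|+2}$ is a symmetric differential in all of its arguments (so that, after the accompanying $U^{(g_2)}(v,z;J_2)$ is symmetrised, the whole term equals its own $v\leftrightarrow z$ image). Once this symmetric-differential input is invoked, every monomial is either symmetric by itself or paired with its $v\leftrightarrow z$ image, and the symmetry of $P^{(g)}_n$ and $Q^{(g)}_n$ follows; the computation for $Q^{(g)}$ runs in complete parallel, with the genus in the $\tfrac{1}{x(v)-x(z)}$-terms shifted by one.
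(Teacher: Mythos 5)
Your proposal follows the paper's own route exactly: eliminate all $H^{(g')}$ and $M^{(g')}$ by substituting \eqref{DSE-UH} into \eqref{DSE-HP} and \eqref{DSE-VM} into \eqref{DSE-MQ}, then pair monomials under $v\leftrightarrow z$ using the $U$- and $V$-symmetries of Remark~\ref{remark:sym}, and all of your cancellations check out --- the combinations $(x(v)+y(v))-(x(v)+y(z))$ and $(x(z)+y(v))-(x(z)+y(z))$ over $x(v)-x(z)$ giving the symmetric ratios $\frac{y(v)-y(z)}{x(v)-x(z)}\,V^{(g-1)}$ resp.\ $\frac{y(v)-y(z)}{x(v)-x(z)}\,U^{(g)}$, the manifestly symmetric double pole $-\frac{\lambda^2}{(x(v)-x(z))^2}U^{(g-1)}$, and the matching of the insertion-derivative evaluations at $s=z$ against those at $s=v$ with the genus bookkeeping shifted by one. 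Your one genuine addition --- that the Leibniz contribution $\partial_{x(s)}W^{(g_1)}_{|I_1|+2}\big(v;I_1\cup s\big)\big|_{s=z}\,U^{(g_2)}_{|I_2|+1}(v,z;I_2)$ has no explicit $v\leftrightarrow z$ partner and is invariant only by invoking the full symmetry of $\Omega^{(g_1)}_{n}$ (equivalently, that $\omega^{(g_1)}_{n}$ is a symmetric differential) --- is precisely the input the paper's one-sentence proof leaves implicit, and it is legitimately available since the $\Omega^{(g)}_n$ are symmetric correlation functions of the model by construction.
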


The Dyson-Schwinger equations 
(\ref{DSE-UH}), (\ref{DSE-HP}), (\ref{DSE-VM}) and  (\ref{DSE-MQ})
can be disentangled into two separate systems:
\begin{proposition}
  Let $(C_h)_{h\in \mathbb{N}}$ be the sequence of Catalan numbers, defined
  for instance by $C_0=1$ and $C_{h+1}=\sum_{l=0}^{h} C_lC_{h-l}$.
Then the linear combinations
\begin{subequations}
  \label{def-hall}
  \begin{align}
  &\hat{U}^{(g)}_{|I|+1}(v,z;I)
\label{def-hU}
  \\
  &:=
U^{(g)}_{|I|+1}(v,z;I)
+\sum_{h=0}^{g-1} \frac{(-1)^h C_h\lambda^{1+2h}}{(x(v)-x(z))^{2+4h}}
V^{(g-1-h)}_{|I|+1}(v,z;I)\;,
\nonumber
\\
  &\hat{H}^{(g)}_{|I|+1}(x(v);z;I)
\label{def-hH}
  \\
  &:=
H^{(g)}_{|I|+1}(x(v);z;I)
+\sum_{h=0}^{g-1} \frac{(-1)^h C_h\lambda^{1+2h}}{(x(v)-x(z))^{2+4h}}
M^{(g-1-h)}_{|I|+1}(x(v);z;I)\;,
\nonumber
\\
&\hat{P}^{(g)}_{|I|+1}(x(v),x(z);I)
\label{def-hP}
\\
&:=P^{(g)}_{|I|+1}(x(v),x(z);I)
+\sum_{h=0}^{g-1}  \frac{(-1)^h C_h\lambda^{1+2h}}{(x(v)-x(z))^{2+4h}}
Q^{(g-1-h)}_{|I|+1}(x(v),x(z);I)\;,
\nonumber
\\
&\hat{V}^{(g)}_{|I|+1}(z,v;I)
\label{def-tV}
  \\
  &:=
V^{(g)}_{|I|+1}(v,z;I)
-\sum_{h=0}^{g} \frac{(-1)^h C_h\lambda^{1+2h}}{(x(v)-x(z))^{2+4h}}
U^{(g-h)}_{|I|+1}(v,z;I)\;,
\nonumber
\\
 &\hat{M}^{(g)}_{|I|+1}(x(v);z;I)
\label{def-tM}
  \\
  &:=
M^{(g)}_{|I|+1}(x(v);z;I)
-\sum_{h=0}^{g} \frac{(-1)^h C_h\lambda^{1+2h}}{(x(v)-x(z))^{2+4h}}
H^{(g-h)}_{|I|+1}(x(v);z;I)\;,
\nonumber
\\
&\hat{Q}^{(g)}_{|I|+1}(x(v),x(z);I)
\label{def-tQ}
\\
&:=Q^{(g)}_{|I|+1}(x(v),x(z);I)
-\sum_{h=0}^{g}  \frac{(-1)^h C_h\lambda^{1+2h}}{(x(v)-x(z))^{2+4h}}
P^{(g-h)}_{|I|+1}(x(v),x(z);I)
\nonumber
\end{align}
\end{subequations}
satisfy
\begin{subequations}
  \label{DSE-hall}
  \begin{align}
&\hat{H}^{(g)}_{I|+1}(x(v);z;I)
\label{DSE-hUH}
\\
&= 
(x(z)+y(v)) \hat{U}^{(g)}_{|I|+1}(v,z;I)
+\lambda \frac{\partial\hat{U}^{(g-1)}_{|I|+2}(v,z;I\cup s)
}{\partial x(s)}\Big|_{s=v}
\nonumber
\\
&+\lambda \sum_{\substack{ I_1\uplus I_2=I\\  g_1+g_2=g \\ (g_1,I_1)\neq (0,\emptyset)} }
\Big(W^{(g_1)}_{|I_1|+1}(v;I_1)
- \frac{(-1)^{g_1} \lambda^{2g_1-1} \delta_{I_1,\emptyset} C_{g_1-1} }{
(x(z)-x(v))^{4g_1-1}}
\Big)
\hat{U}^{(g_2)}_{|I_2|+1}(v,z;I_2)\;,
\nonumber
\\
&\hat{P}^{(g)}_{I|+1}(x(v),x(z);I)
\label{DSE-hHP}
\\
&= 
(x(v)+y(z)) \hat{H}^{(g)}_{|I|+1}(x(v);z;I)
+\lambda \frac{\partial\hat{H}^{(g-1)}_{|I|+2}(x(v);z;I\cup s)
}{\partial x(s)}\Big|_{s=z}
\nonumber
\\
&+\lambda \sum_{\substack{ I_1\uplus I_2=I\\  g_1+g_2=g \\ (g_1,I_1)\neq (0,\emptyset)} }
\Big(W^{(g_1)}_{|I_1|+1}(z;I_1)
- \frac{(-1)^{g_1} \lambda^{2g_1-1} \delta_{I_1,\emptyset} C_{g_1-1} }{
(x(v)-x(z))^{4g_1-1}}
\Big)
\hat{H}^{(g_2)}_{|I_2|+1}(x(v);z;I_2)\;,
\nonumber
\\
&\hat{M}^{(g)}_{I|+1}(x(v);z;I)
\label{DSE-tVM}
\\
&= 
(x(v)+y(v)) \hat{V}^{(g)}_{|I|+1}(v,z;I)
+\lambda \frac{\partial\hat{V}^{(g-1)}_{|I|+2}(v,z;I\cup s)
}{\partial x(s)}\Big|_{s=z}
\nonumber
\\
&+\lambda \sum_{\substack{ I_1\uplus I_2=I\\  g_1+g_2=g \\ (g_1,I_1)\neq (0,\emptyset)} }
\Big(W^{(g_1)}_{|I_1|+1}(v;I_1)
- \frac{(-1)^{g_1} \lambda^{2g_1-1} \delta_{I_1,\emptyset} C_{g_1-1} }{
(x(v)-x(z))^{4g_1-1}}
\Big)
\hat{V}^{(g_2)}_{|I_2|+1}(v,z;I_2)\;,
\nonumber
\\
&\hat{Q}^{(g)}_{I|+1}(x(v),x(z);I)
\label{DSE-tMQ}
\\
&= 
(x(z)+y(z))\hat{M}^{(g)}_{|I|+1}(x(v);z;I)
+\lambda \frac{\partial\hat{M}^{(g-1)}_{|I|+2}(x(v);z;I\cup s)
}{\partial x(s)}\Big|_{s=z}
\nonumber
\\
&+\lambda \sum_{\substack{ I_1\uplus I_2=I\\  g_1+g_2=g \\ (g_1,I_1)\neq (0,\emptyset)} }
\Big(W^{(g_1)}_{|I_1|+1}(z;I_1)
- \frac{(-1)^{g_1} \lambda^{2g_1-1} \delta_{I_1,\emptyset} C_{g_1-1} }{
(x(v)-x(z))^{4g_1-1}}
\Big)
\hat{M}^{(g_2)}_{|I_2|+1}(x(v);z;I_2)\;.
\nonumber
\end{align}
\end{subequations}
\begin{proof}
  Combining different DSEs from Proposition \ref{Prop:DSE} with the
  Segner recursion formula \( C_{h+1} = \sum_{l=0}^{h} C_l C_{h-l} \)
  yields the desired equations after a lengthy but straightforward
  computation.
\end{proof}  
\end{proposition}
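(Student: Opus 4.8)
The plan is to establish each of the four identities in \eqref{DSE-hall} by direct substitution and reduction to the already-proven equations \eqref{DSE-all}. Concretely, for \eqref{DSE-hUH} I would expand $\hat H^{(g)}$ on the left via \eqref{def-hH} and $\hat U^{(g)}$ on the right via \eqref{def-hU}, and then eliminate every unhatted second-layer function by substituting \eqref{DSE-UH} for each $H^{(g')}$ and \eqref{DSE-VM} for each $M^{(g')}$. This turns the claim into a purely algebraic identity among the first-layer functions $U,V$, weighted by powers of $\frac{\lambda}{(x(v)-x(z))^2}$, and the whole point is to show that this identity closes up into a Catalan convolution. I would carry \eqref{DSE-hUH} out in full and then argue that \eqref{DSE-hHP}, \eqref{DSE-tVM}, \eqref{DSE-tMQ} follow by structurally identical bookkeeping, tracking only the different prefactors $x(z)+y(v)$, $x(v)+y(z)$, $x(v)+y(v)$, $x(z)+y(z)$.

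To organise the calculation I would abbreviate $D:=x(v)-x(z)$ and $t:=\lambda/D^2$, so that the correction coefficients become $(-1)^h C_h\,t^{1+2h}$. After the substitution the terms sort into three groups. The first group consists of the \emph{diagonal} contributions — the pure $(x(z)+y(v))U^{(g)}$ piece, the derivative $\lambda\,\partial_{x(s)}U^{(g-1)}\big|_{s=v}$ piece, and the quadratic piece $\lambda\sum W^{(g_1)}U^{(g_2)}$ — which appear verbatim on both sides and cancel. (The derivative term matches once one notes that $t$ is independent of the inserted variable $s$, and that the apparent extra term at $h=g-1$ multiplies $V^{(-1)}=0$.) The second group is the quadratic mixing generated by the correction, $\lambda\sum W^{(g_1)}(v;I_1)\,(-1)^hC_h t^{1+2h}V^{(g_2-1-h)}$, which I would match against the $V$-part of the modified quadratic $\lambda\sum\big(W^{(g_1)}-\cdots\big)\hat U^{(g_2)}$ after the genus-split reindexing $g_2'=g_2+1+h$, checking that the constraint $(g_1,I_1)\neq(0,\emptyset)$ is preserved.

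The third group is the crux. After the first two groups are disposed of, what remains is the prefactor discrepancy $\sum_h(-1)^hC_h t^{1+2h}\big(x(v)-x(z)\big)V^{(g-1-h)}$ (from comparing $x(v)+y(v)$ with $x(z)+y(v)$), the two explicit single-pole mixings $\lambda V^{(g-1)}/(x(z)-x(v))$ and $\sum_h(-1)^hC_h t^{1+2h}\lambda U^{(g-1-h)}/(x(z)-x(v))$, and the $W$-subtraction term $-\lambda\sum_{g_1\geq1}(-1)^{g_1}\lambda^{2g_1-1}C_{g_1-1}D^{1-4g_1}\hat U^{(g-g_1)}$. Splitting this last term back into its $U$- and $V$-parts via \eqref{def-hU}, the $U$-parts cancel the second explicit mixing under $g_1=h+1$, the $h=0$ term of the prefactor discrepancy cancels the first explicit mixing, and the residual $V$-terms at $h\geq1$ agree with the double sum from the $W$-subtraction precisely when
\[
C_h=\sum_{l=0}^{h-1}C_l\,C_{h-1-l},
\]
which is the Segner recursion $C_{m+1}=\sum_{l=0}^{m}C_lC_{m-l}$ stated in the proposition.

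The hard part will be exactly this last step: assembling the convolution $\sum_{g_1\geq1}C_{g_1-1}C_{h'}$ that arises from composing the $W$-subtraction with the $V$-part of $\hat U$, recognising it as $C_h$, and keeping every sign straight — in particular the sign flips produced by the odd powers of $x(z)-x(v)=-D$ hidden in the denominators $D^{4g_1-1}$. For the remaining three identities the method is the same; the only genuine difference is that $\hat V,\hat M,\hat Q$ carry corrections up to $h=g$ (rather than $h=g-1$) and with an overall minus sign, and I would emphasise that this asymmetry is forced: the explicit mixing in \eqref{DSE-VM} and \eqref{DSE-MQ} feeds in a \emph{same-genus} $U^{(g)}$ resp.\ $H^{(g)}$, whereas that of \eqref{DSE-UH} and \eqref{DSE-HP} feeds in the lower-genus $V^{(g-1)}$ resp.\ $M^{(g-1)}$, so the extra $h=g$ term is exactly what is needed to absorb the same-genus contribution.
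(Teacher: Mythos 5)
Your proposal is correct and is essentially the paper's own (unwritten) proof: the paper merely records that the verification is ``lengthy but straightforward'' with the Segner recursion as the key step, and your three-group bookkeeping --- cancellation of the diagonal terms, genus-split reindexing of the quadratic mixing, and the Catalan convolution $C_h=\sum_{l=0}^{h-1}C_lC_{h-1-l}$ closing the prefactor discrepancy $(x(v)+y(v))-(x(z)+y(v))=x(v)-x(z)$ against the $W$-subtraction --- is precisely that computation carried out. Checking your scheme on \eqref{DSE-hUH} and \eqref{DSE-tVM} confirms it, including the sign subtlety you flag from odd powers of $x(z)-x(v)$ (note the statement indeed uses $(x(z)-x(v))^{4g_1-1}$ in \eqref{DSE-hUH} but $(x(v)-x(z))^{4g_1-1}$ in the other three equations), the vanishing boundary terms $V^{(-1)}=U^{(-1)}=0$ in the derivative corrections, and the forced extension of the correction series in $\hat{V},\hat{M},\hat{Q}$ to same-genus partners, which absorbs the same-genus mixing of \eqref{DSE-VM} and \eqref{DSE-MQ}.
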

In the following cases it is safe to omit the hat:
$\hat{U}^{(0)}_{n+1}=U^{(0)}_{n+1}$,
$\hat{H}^{(0)}_{n+1}=H^{(0)}_{n+1}$,
$\hat{P}^{(0)}_{n+1}=P^{(0)}_{n+1}$. 

\subsection{Motivation of a loop insertion operator}\label{Sec.motivateloop}

The loop insertion operator plays an important role in the theory of topological recursion, but it is a very subtle object. There are two important examples with quite different realisations for the loop insertion operator in the literature:
\begin{itemize}
\item Hermitian 1-matrix model \cite[Sec. 5.4.6.2]{Eynard:2016yaa}: The loop insertion operator acts on the generating function of correlation functions. One has to assume a family of spectral curves with infinitely many parameters to make some sense of a loop insertion operator. The loop insertion operator is a deformation with respect to the coefficients of the potential, where the potential needs to include all (infinitely many) parameters.
\item Classical Kontsevich model: The loop insertion operator is
  represented as a formal derivative with respect to the
  eigenvalues\footnote{In the notation of this paper, the
    external matrix is $E$; its eigenvalues are $e_k$.} $\lambda_k$ of
  the external matrix $\Lambda$, see \cite[Thm
  6.4.5]{Eynard:2016yaa}. Also here, a family of spectral curves is
  considered with infinitely many parameters, which are the
  eigenvalues of the external matrix.
\end{itemize}

A general construction of the loop insertion operator was formulated
abstractly via the Rauch variational formula in the original article
on topological recursion \cite[Thm 5.1]{Eynard:2007kz}. However, since
our article deals with a set of DSEs whose solutions are not generally
governed by topological recursion, a rigorous understanding of a loop
insertion operator is even more subtle.

The basic idea of deriving the DSEs \eqref{DSE-U} and \eqref{DSE-V}
originates from the PhD thesis of one of the authors
\cite{Hock:2020rje}, and was worked out in detail in
\cite{Branahl:2020yru}. The main idea is to start with two initial
DSEs and act with formal derivatives of
$\frac{\partial}{\partial E_k}$ for different $k$'s on these
equations. It turns out that the objects and DSEs defined in this way
have a very natural complex continuation to meromorphic functions, see
\cite[Def. 3.1 \& 3.5]{Branahl:2020yru}, which is the origin of the
DSEs considered in this article. The differentiation with respect to
$E_k$ plays the same role as in the classical Kontsevich model, where
it plays the role of the loop insertion operator. Consequently, the
derivation with respect to $E_k$ would, from a topological recursion
perspective, consider a family of spectral curves with infinitely many
parameters $E_k$, the eigenvalues of the external matrix. The
justification that $\frac{\partial}{\partial E_k}$ plays the role of
the loop insertion operator in the present model is one important
insight of this article, even though a more rigorous construction can
not be provided, yet.

A formal extension of the derivative with respect to $E_k$
\begin{align}\label{loopinsert}
    \frac{\partial}{\partial E_k}\to D_u
\end{align}
on meromorphic functions will be defined in the next subsection. We
want to emphasise that the definition in Sec.\ \ref{sec.loopdef} is
purely symbolical, but has the interpretation of a loop insertion
operator which justifies the name. We want to work with this formal
action (as described in the subsequent subsection) to rewrite long and
involved expressions. The results derived in the rest of the article
are obtained purely from the DSEs and do not rely on the actual
existence of a \textit{rigorous} loop insertion operator. The
existence and uniqueness of a rigorously defined loop insertion
operator for the quartic Kontsevich model will not be discussed in
this article.

Disclaimer: The above symbolically defined operator $D_u$ acts in the
desired way on the DSEs, however not on thier solutions. This comes
from the fact that $D_u$ should be considered as a complex
continuation of $\frac{\partial}{\partial E_k}$ as in
\eqref{loopinsert}. Therefore, any term that vanishes at $u=E_k$ could
be added. However, there are restrictions from the fact that the
resulting meromorphic function should be symmetric after complex
continuation, see Rem. \ref{Rem:loopinsert} and \ref{Rem:loopinsert2}
later. This seems to be different from the usual rigorous loop
insertion operators in topological recursion related to an enumerative
problem; see, for instance, the 2-Hermitian matrix model
\cite{Eynard:2002kg} which is solved by topological recursion
\cite{Chekhov:2006vd}.

\subsection{Definition of a loop insertion operator
  on meromorphic functions}\label{sec.loopdef}

Let $\tilde{\mathcal{R}}$ be the ring of $\mathbb{Q}$-polynomials
in the variables
\begin{subequations}
\label{R-variables}
  \begin{align}
&\Big\{x,\ y,\ W^{(g)}_{n+1},\ \hat{U}^{(g)}_{n+1},\ \hat{H}^{(g)}_{n+1},\ 
\hat{P}^{(g)}_{n+1},
\hat{V}^{(g)}_{n+1}, \hat{M}^{(g)}_{n+1},\ \hat{Q}^{(g)}_{n+1}\Big\},\ 
\label{R-variables-primary}
\end{align}
in $x(\,.\,)$-derivatives of them (e.g.\ 
$\frac{\partial^2 y(z)}{\partial (x(z))^2}$,
  $\frac{\partial^2 W^{(0)}_{n+1}(z;u_1,...,u_n)}{\partial x(z) \partial
    x(u_1)}$,
  $\frac{\partial^4 \hat{P}^{(g)}_{n+1}(x(z),x(w);u_1,...,u_n)}{
    \partial (x(z))^2 \partial x(w) \partial x(u_n)}$)
as well as reciprocals
\begin{align}
&  \Big\{\frac{1}{x(*)-x(\star)},\ \frac{1}{x(*)+y(\star)},\
\frac{1}{U^{(0)}_1},\ \frac{1}{H^{(0)}_1},\ \frac{1}{P^{(0)}_1},\ 
\frac{1}{\hat{V}^{(0)}_1},\ \frac{1}{\hat{M}^{(0)}_1},\ 
\frac{1}{\hat{Q}^{(0)}_1}\Big\}\;.
\label{R-variables-reciprocals}
\end{align}
\end{subequations}
We let $\mathcal{I}$ be the ideal in $\tilde{\mathcal{R}}$ generated by
\begin{align*}
  \Big\{ &(x(v)+y(z))\cdot \frac{1}{(x(v)+y(z))}-1,\quad
  U^{(0)}_1(v,z)\cdot  \frac{1}{U^{(0)}_1(v,z)}-1,
  \\
 & H^{(0)}_1(x(v);z)\cdot  \frac{1}{H^{(0)}_1(x(v);z)}-1,\quad
  P^{(0)}_1(x(v),x(z))\cdot  \frac{1}{P^{(0)}_1(x(v),x(z))}-1
  \Big\}
\end{align*}
and $\mathcal{R}=\tilde{\mathcal{R}}/\mathcal{I}$ be the quotient.
The variables (\ref{R-variables}) belong to the field of meromorphic
functions on several copies of $\hat{\mathbb{C}}$.  For our purpose
they are considered as \emph{independent variables}.  We also consider
$\hat{P}^{(g)}_{n+1}(x(v),x(z);I)$ as independent of
$\hat{P}^{(g)}_{n+1}(x(v'),x(z');I')$ whenever $x(v)\neq x(v')$ or
$x(z)\neq x(z')$ or $I\neq I'$.  Similarly for $\hat{Q}^{(g)}_{n+1}$.
We consider $\hat{H}^{(g)}_{n+1}(x(v);z;I)$ as independent of
$\hat{H}^{(g)}_{n+1}(x(v');z';I')$ whenever $x(v)\neq x(v')$ or
$z\neq z'$ or $I\neq I'$. Similarly for $\hat{M}^{(g)}_{n+1}$.  We
consider $\hat{U}^{(g)}_{n+1}(v,z;I)$ as independent of
$\hat{H}^{(g)}_{n+1}(v';z';I')$ whenever $v\neq v'$ or $z\neq z'$ or
$I\neq I'$. Similarly for $\hat{V}^{(g)}_{n+1}$.

The Dyson-Schwinger equations (\ref{DSE-hall}) are polynomial
equations $f_i=0$ for $f_i \in \mathcal{R}$. In addition we have
relations between residues which follow from (\ref{def-all}) and
(\ref{def-hall}) as well as from the condition
$[(x(v))^{-1}] H^{(g)}_{|I|+1}(x(v);z;I)=-\lambda W^{(g)}_{n+1}(z;I)$.

The most decisive tool in our construction is a \emph{loop
  insertion operator}.
\begin{definition}
\label{def:loopins}
  For $u\in \hat{\mathbb{C}}$, the loop insertion operator
  $D_u:\mathcal{R}\to \mathcal{R}$ is defined on the variables
  \eqref{R-variables-primary} as
\begin{align*}
  &D_u(x(z))=0,\quad D_uy(z)=\lambda W^{(0)}_2(z;u),\quad
  D_uW^{(g)}_{|I|+1}(z;I)=  W^{(g)}_{|I|+2}(z;I\cup u),\
  \\
  &D_u\hat{U}^{(g)}_{|I|+1}(v,z;I)=
  \hat{U}^{(g)}_{|I|+2}(v,z;I\cup u),\quad
  D_u\hat{V}^{(g)}_{|I|+1}(v,z;I)=
  \hat{V}^{(g)}_{|I|+2}(v,z;I\cup u),\
  \\
  &D_u\hat{H}^{(g)}_{|I|+1}(x(v);z;I)=
  \hat{H}^{(g)}_{|I|+2}(x(v);z;I\cup u),\   
  \\
&D_u\hat{M}^{(g)}_{|I|+1}(x(v);z;I)=
\hat{M}^{(g)}_{|I|+2}(x(v);z;I\cup u),\quad
\\
&D_u\hat{P}^{(g)}_{|I|+1}(x(v),x(z);I)=
  \hat{P}^{(g)}_{|I|+2}(x(v),x(z);I\cup u),\quad   
  \\
  &
  D_u\hat{Q}^{(g)}_{|I|+1}(x(v),x(z);I)=
  \hat{Q}^{(g)}_{|I|+2}(x(v),x(z);I\cup u)
\end{align*}
and extended to $\mathcal{R}$ by linearity, Leibniz rule, the requirement
$D_u:\mathcal{I}\to \mathcal{I}$ and commutation with
any $x(\,.\,)$-derivative. For
$J=\{u_1,...,u_n\}$ we let $D_J:=D_{u_1}\cdots D_{u_n}:\mathcal{R}
\to \mathcal{R}$. Moreover, $D_\emptyset$ is defined as the identity operator. 
\end{definition} 
The condition $D_u:\mathcal{I}\to \mathcal{I}$ just means that the
loop insertion operator of the reciprocals in
(\ref{R-variables-reciprocals}) is given by the usual rules of calculus, e.g.\
\begin{align*}
\mathcal{I}\ni &D_u\Big((x(v)+y(z))\cdot \frac{1}{(x(v)+y(z))}\Big) -D_u 1
  \\
  &=(x(v)+y(z))\Big(
  \frac{\lambda W^{(0)}_2(z;u)}{(x(v)+y(z))^2}
  +
  D_u\Big(\frac{1}{x(v)+y(z)}\Big)\Big),
\end{align*}
i.e.\  $ D_u\big(\frac{1}{x(v)+y(z)}\big)=-
\frac{\lambda W^{(0)}_2(z;u)}{(x(v)+y(z))^2}+\mathcal{I}$.
We can also apply $D_I$ to logarithms 
\begin{align}
  D_I \log H^{(0)}_1(x(v);z)&= \sum_{l=1}^{|I|} \frac{(-1)^{l-1}}{l}
  \sum_{  \substack{I_1\uplus ...\uplus I_l=I\\ 
      I_1, ... I_l\neq \emptyset}} 
  \prod_{j=1}^l \frac{H^{(0)}_{|I_j|+1}(x(v);z;I_j)}{ H^{(0)}_1(x(v);z)}\;,
  \label{def-DlogH}
  \\
  D_I \log P^{(0)}_1(x(v),x(z))&= \sum_{l=1}^{|I|} \frac{(-1)^{l-1}}{l}
  \sum_{  \substack{I_1\uplus ...\uplus I_l=I\\ 
      I_1, ... I_l\neq \emptyset}} 
  \prod_{j=1}^l \frac{P^{(0)}_{|I_j|+1}(x(v),x(z);I_j)}{ P^{(0)}_1(x(v),x(z))}
  \nonumber
\end{align}  
and similarly for $D_I \log\hat{M}^{(0)}_1(x(x);z)$ and
$D_I \log \hat{Q}^{(0)}_1(x(v),x(z))$. In the same way one has 
\begin{align}
  D_I \log (x(v)+y(z))
  &= \sum_{l=1}^{|I|} \frac{(-1)^{l-1}}{l}
  \sum_{  \substack{I_1\uplus ...\uplus I_l=I\\ 
      I_1, ... I_l\neq \emptyset}} 
  \prod_{j=1}^l \frac{\lambda W^{(0)}_{|I_j|+1}(z;I_j)}{x(v)+y(z)}\;.
\label{DIlogRvz}
\end{align}
The loop insertion operator is compatible with the
Dyson-Schwinger equations (\ref{DSE-hall}).

\section{Solution for genus $g=0$}

\label{sec:g0}

\subsection{Auxiliary functions for   genus $g = 0$}

Starting point is the following observation:
\begin{lemma}
\label{lemma-DIlogPH}
  For any $I=\{u_1,...,u_n\}$ one has
\begin {align}
  D_I \log P^{(0)}_1(x(v),x(z))&=D_I\log (x(v)+y(z)) 
+  D_I \log H^{(0)}_1(x(v);z)\;,
\label{DIlogPH}
\end{align}
where the three terms are short-hand notations for
\eqref{def-DlogH} and \eqref{DIlogRvz}.
\begin{proof}
This is a combinatorial rewriting of (\ref{DSE-HP}), which reads
\begin{align}
  \frac{P^{(0)}_{|I|+1}(x(v),x(z);I)}{P^{(0)}_1(x(v),x(z))}
&=  \frac{H^{(0)}_{|I|+1}(x(v);z;I)}{H^{(0)}_1(x(v);z)}
+\frac{\lambda W^{(0)}_{|I|+1}(z;I)}{x(v)+y(z)}
\nonumber
\\
&+\sum_{\substack{I'\uplus I''\\I',I''\neq \emptyset}}
\frac{\lambda W^{(0)}_{|I'|+1}(z;I')}{x(v)+y(z)}
\frac{H^{(0)}_{|I''|+1}(x(v);z;I'')}{H^{(0)}_1(x(v);z)}\;.
\label{DSE-HPquot}
\end{align}
We arrange products of these expressions into
$D_I \log P^{(0)}_1(x(v),x(z))$ in the second line of 
(\ref{def-DlogH}). A contribution with $h\geq 1$ factors of
$\frac{H^{(0)}_{|I_i|+1}(x(v);z;I_i)}{H^{(0)}_1(x(v);z)}$ and $w\geq 1$ factors of
$\frac{\lambda W^{(0)}_{|I_j|+1}(z;I_j)}{x(v)+y(z)}$ arises via the
binomial theorem in $\frac{(w+h-k)!}{(w-k)!(h-k)k!}$ different ways
if $k$ times a factor of the second line of (\ref{DSE-HPquot}) occurs.
The prefactor of such a term in $D_I \log P^{(0)}_1(x(v),x(z))$
is $\frac{(-1)^{w+h-k-1}}{(w+h-k)}$. The sum over $k$ equals
\begin{align*}
\sum_{k=0}^{\min (w,h)}
 (-1)^k \frac{(w+h-k-1)!}{(w-k)!(h-k)k!}=0\;,
\end{align*} 
which follows e.g.\ from
\cite[vol.~4, eq.\ (10.20)]{Gould}
\[
  \sum_{k=0}^n (-1)^k\binom{n}{k}\binom{x-k}{j}=
  \binom{x-n}{j-n}
\]  
when setting $n=h$,  $x=w+h-1$ and $j=h-1$. Hence,
all contributions with at least one factor
$\frac{H^{(0)}_{|I_i|+1}(x(v);z;I_i)}{H^{(0)}_1(x(v);z)}$ and at least one
factor $\frac{\lambda W^{(0)}_{|I_j|+1}(z;I_j)}{x(v)+y(z)}$ cancel,
and the assertion follows.
\end{proof}
\end{lemma}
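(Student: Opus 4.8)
The plan is to reduce the claimed identity to a purely combinatorial statement. Although $P^{(0)}_1(x(v),x(z)) = (x(v)+y(z))\,H^{(0)}_1(x(v);z)$ holds as an identity of functions by \eqref{H01} and \eqref{P01}, the operator $D_I$ treats $P^{(0)}$, $H^{(0)}$, $W^{(0)}$ and $x(v)+y(z)$ as \emph{independent} variables of $\mathcal{R}$, so the equality of the two $D_I$-expansions is not automatic and must be verified term by term. First I would divide the genus-$0$ instance of \eqref{DSE-HP} by $P^{(0)}_1=(x(v)+y(z))H^{(0)}_1$: at $g=0$ the $x(s)$-derivative term and the $M^{(-1)}$ term drop out, and separating the $I_2=\emptyset$ contribution from the convolution sum gives, for every $I$,
\begin{align*}
\frac{P^{(0)}_{|I|+1}(x(v),x(z);I)}{P^{(0)}_1(x(v),x(z))}
&= \frac{H^{(0)}_{|I|+1}(x(v);z;I)}{H^{(0)}_1(x(v);z)}
+ \frac{\lambda W^{(0)}_{|I|+1}(z;I)}{x(v)+y(z)}
\\
&\quad + \sum_{\substack{I'\uplus I''=I\\ I',I''\neq\emptyset}}
\frac{\lambda W^{(0)}_{|I'|+1}(z;I')}{x(v)+y(z)}\,
\frac{H^{(0)}_{|I''|+1}(x(v);z;I'')}{H^{(0)}_1(x(v);z)}\;.
\end{align*}
Thus each normalised $P$-quotient splits into a pure $H$-factor, a pure $W$-over-$(x(v)+y(z))$ factor, and mixed products of exactly one $W$-factor and one $H$-factor.

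Next I would insert this decomposition into the cumulant-type expansion \eqref{def-DlogH} of $D_I\log P^{(0)}_1$, which is a sum over ordered set partitions $I=J_1\uplus\cdots\uplus J_l$ weighted by $\tfrac{(-1)^{l-1}}{l}$. Expanding every factor $P^{(0)}_{|J_j|+1}/P^{(0)}_1$ into the three pieces above, each resulting monomial becomes a product of elementary factors of precisely two kinds, namely $H$-factors $H^{(0)}_{|B|+1}/H^{(0)}_1$ and $W$-factors $\lambda W^{(0)}_{|A|+1}/(x(v)+y(z))$, whose index sets form a set partition of $I$. I would then group all monomials by the total numbers $w$ of $W$-factors and $h$ of $H$-factors. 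A configuration with $h=0$ can only arise when no block is mixed (a mixed block always produces one $H$-factor), so it occurs uniquely at level $l=w$ and reproduces exactly the corresponding term of \eqref{DIlogRvz}; symmetrically the $w=0$ configurations reproduce \eqref{def-DlogH} for $H^{(0)}_1$. The whole content of the lemma is therefore that every \emph{mixed} configuration, with $w\geq 1$ and $h\geq 1$, has total coefficient zero.

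The main obstacle is this last cancellation, where the combinatorics has to be controlled carefully. A fixed mixed configuration is generated by selecting $k$ of its $W$-factors to be paired with $k$ of its $H$-factors into mixed blocks (the remaining factors staying pure), which places the configuration at level $l=w+h-k$; counting the orderings of the $l$ resulting blocks yields a multiplicity of the form $\binom{w+h-k}{w-k,\,h-k,\,k}$ (up to an overall $k$-independent factor coming from identifying the $W$- and $H$-parts), each carried with the weight $\tfrac{(-1)^{w+h-k-1}}{w+h-k}$ from \eqref{def-DlogH}. Summing over $k$ reduces, after simplifying $\tfrac{(w+h-k)!}{w+h-k}=(w+h-k-1)!$, to an alternating sum proportional to
\[
\sum_{k=0}^{\min(w,h)} (-1)^{k}\,
\frac{(w+h-k-1)!}{(w-k)!\,(h-k)!\,k!}\;.
\]
Multiplying by $h$ turns this into $\sum_{k}(-1)^k\binom{h}{k}\binom{w+h-1-k}{h-1}$, which by Gould's identity \cite[vol.~4, eq.~(10.20)]{Gould} with $n=h$, $x=w+h-1$ and $j=h-1$ equals $\binom{w-1}{-1}=0$ for $w\geq 1$. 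Hence every mixed coefficient vanishes, the surviving pure terms assemble into $D_I\log(x(v)+y(z))+D_I\log H^{(0)}_1$, and \eqref{DIlogPH} follows. The delicate point throughout is pinning down the multiplicity exactly, i.e.\ tracking how the $\tfrac1l$ weight and the ordered-partition structure interact with the three-way split of each $P$-quotient; once that count is fixed, the vanishing is immediate from Gould's identity.
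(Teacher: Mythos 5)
Your proposal is correct and follows essentially the same route as the paper: divide the genus-$0$ Dyson--Schwinger equation \eqref{DSE-HP} by $P^{(0)}_1=(x(v)+y(z))H^{(0)}_1$ to obtain the three-way split of each $P$-quotient, insert this into the cumulant expansion \eqref{def-DlogH}, and show that every mixed $(w,h)$-configuration cancels via the alternating sum $\sum_{k}(-1)^k\frac{(w+h-k-1)!}{(w-k)!\,(h-k)!\,k!}=0$, evaluated by the same Gould identity with the same parameters $n=h$, $x=w+h-1$, $j=h-1$. Your explicit verification that the identity yields $\binom{w-1}{-1}=0$ is a small welcome addition beyond what the paper records.
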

In complete analogy to Lemma~\ref{lemma-DIlogPH} one establishes
\begin{subequations}
\begin{align}
D_I \log H^{(0)}_1(x(v);z)&=D_I\log (x(z)+y(v)) 
+  D_I \log U^{(0)}_1(v,z)\;,
\label{DIlogU01}
\\
D_I \log \hat{M}^{(0)}_1(x(v),x(z))&=D_I\log (x(v)+y(v)) 
+  D_I \log\hat{V}^{(0)}_1(v,z)\;,
\label{DIlogV01}
\\
D_I \log \hat{Q}^{(0)}_1(x(v),x(z))&=D_I\log (x(z)+y(z)) 
+  D_I \log\hat{M}^{(0)}_1(x(v);z)\;.
\label{DIlogM01}
\end{align}
\end{subequations}

\begin{lemma}
\label{lemma:DIUH}
  \begin{align}
  D_I \frac{U^{(0)}_{1}(v,z)}{H^{(0)}_1(x(v);z)}
&\equiv
\sum_{l=0}^{|I|}
  \sum_{  \substack{I_0\uplus ...\uplus I_l=I\setminus u_j\\
      I_1,...I_l\neq \emptyset}}
      \frac{U^{(0)}_{|I_0|+1}(v,z;I_0)}{H^{(0)}_1(x(v);z)}
(-1)^{l}
\prod_{i=1}^l \frac{H^{(0)}_{|I_i|+1}(x(v);z;I_i)}{ H^{(0)}_1(x(v);z)}
\nonumber
\\
&=D_{I}\frac{1}{x(z)+y(v)}\;.
\label{DUH-g0}
\end{align}
\begin{proof}
Clear for $I =\emptyset$. For $I\neq \emptyset$, 
start from the Dyson-Schwinger equation (\ref{DSE-UH}) 
\begin{align*}
\frac{U^{(0)}_{|I_0|+1}(v,z;I)}{H^{(0)}_1(x(v);z)}
&=\frac{H^{(0)}_{|I|+1}(x(v),z;I)}{
  (x(z)+y(v)) H^{(0)}_1(x(v);z)}
+\frac{(-\lambda) W^{(0)}_{|I|+1}(v;I)}{(x(z)+y(v))^2}
\\
&+
\sum_{\substack{I'\uplus I''=I\\ I',I''\neq \emptyset}}
\frac{(-\lambda) W^{(0)}_{|I'|+1}(v;I')}{(x(z)+y(v))}
\frac{U^{(0)}_{|I''|+1}(v,z;I'')}{H^{(0)}_1(x(v);z)}\;,
\end{align*}
which iterates to   
\begin{align*}
\frac{U^{(0)}_{|I_0|+1}(v,z;I)}{H^{(0)}_1(x(v);z)}
&=\frac{H^{(0)}_{|I|+1}(x(v),z;I)}{
  (x(z)+y(v)) H^{(0)}_1(x(v);z)}
\\
&+\frac{1}{(x(z)+y(v))}
\sum_{l=1}^{|I|}
\sum_{\substack{I_1\uplus ...\uplus I_l=I\\ I_1,...,I_l\neq \emptyset}}
\prod_{i=1}^l \frac{(-\lambda) W^{(0)}_{|I_i|+1}(v;I_i)}{
  (x(z)+y(v))}
\\
&
+
\sum_{l=1}^{|I|}
\sum_{\substack{I_0\uplus I_1\uplus ...\uplus I_l=I\\ I_0,I_1,...,I_l\neq \emptyset}}
\frac{H^{(0)}_{|I_0|+1}(x(v),z;I_0)}{
  (x(z)+y(v)) H^{(0)}_1(x(v);z)}
\prod_{i=1}^l \frac{(-\lambda) W^{(0)}_{|I_i|+1}(v;I_i)}{
  (x(z)+y(v))}\;.
\end{align*}
The last line is iteratively removed by
\begin{align*}
&\sum_{l=0}^{|I|-1}(-1)^l
\sum_{\substack{I_0\uplus I_1\uplus ...\uplus I_l=I\\ I_0,I_1,...,I_l\neq \emptyset}}
\frac{U^{(0)}_{|I_0|+1}(v,z;I_0)}{H^{(0)}_1(x(v);z)}
\prod_{i=1}^l  
\frac{H^{(0)}_{|I_i|+1}(x(v),z;I_i)}{
H^{(0)}_1(x(v);z)}
\\
&=\sum_{l=1}^{I}\frac{(-1)^{l-1}}{(x(z)+y(v))}
\sum_{\substack{I_1\uplus ...\uplus I_l=I\\ I_1,...,I_l\neq \emptyset}}
\prod_{i=1}^l  
\frac{H^{(0)}_{|I_i|+1}(x(v),z;I_i)}{
H^{(0)}_1(x(v);z)}
\tag{*}
\\
&
+\frac{1}{(x(z)+y(v))}
\sum_{l=1}^{|I|}
\sum_{\substack{I_1\uplus ...\uplus I_l=I\\ I_1,...,I_l\neq \emptyset}}
\prod_{i=1}^l \frac{(-\lambda) W^{(0)}_{|I_i|+1}(v;I_i)}{
  (x(z)+y(v))}\;.
\tag{**}
\end{align*}
Taking $\frac{1}{x(z)+y(v)}=\frac{U^{(0)}_{1}(v,z)}{H^{(0)}_1(x(v);z)}$
into account, the line (*) corresponds to the case $I_0=\emptyset$
of the lhs. The line (**) equals $D_I\frac{1}{x(z)+y(v)}$ so that the
assertion follows.
\end{proof}
\end{lemma}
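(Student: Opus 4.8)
The statement bundles two claims: the relation ``$\equiv$'', which is just the Fa\`a di Bruno / quotient-rule expansion of the loop insertion operator acting on a ratio, and the genuine equality ``$=$'' with $D_I\frac{1}{x(z)+y(v)}$. I would first settle the base case $I=\emptyset$. Combining the explicit formula \eqref{U01} with the symmetry $U^{(0)}_1(v,z)=U^{(0)}_1(z,v)$ gives $U^{(0)}_1(z,v)=\frac{1}{x(z)+y(v)}\frac{\prod_{k=1}^d(x(v)+y(\hat{z}^k))}{\prod_{k=1}^d(x(v)-x(\varepsilon_k))}$, and the product is exactly $H^{(0)}_1(x(v);z)$ from \eqref{H01}; hence $\frac{U^{(0)}_1(v,z)}{H^{(0)}_1(x(v);z)}=\frac{1}{x(z)+y(v)}$. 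This both settles $I=\emptyset$ and supplies the identity that will terminate the recursion below.

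For the relation ``$\equiv$'' I would apply $D_I=\prod_j D_{u_j}$ to the quotient, using that $D_u$ is a derivation on $\mathcal{R}$ (Definition~\ref{def:loopins}) with $D_uU^{(0)}_{|I_0|+1}(v,z;I_0)=U^{(0)}_{|I_0|+2}(v,z;I_0\cup u)$ and, from the ideal relations, $D_u\frac{1}{H^{(0)}_1}\equiv-\frac{D_uH^{(0)}_1}{(H^{(0)}_1)^2}$. Iterating Leibniz and the quotient rule distributes $I$ into one (possibly empty) numerator block $I_0$ attached to $U^{(0)}_{|I_0|+1}$ and $l$ nonempty denominator blocks $I_1,\dots,I_l$, each producing a factor $\frac{H^{(0)}_{|I_i|+1}}{H^{(0)}_1}$, the $l$-fold quotient rule generating the sign $(-1)^l$. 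This is pure bookkeeping and yields the middle expression.

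The content is the equality ``$=$''. Here I would invoke the genus-$0$ case of \eqref{DSE-UH}, in which the genus-lowering terms drop and the constraint $(g_1,I_1)\neq(0,\emptyset)$ forces $I_1\neq\emptyset$. Dividing by $(x(z)+y(v))H^{(0)}_1$ and using the base identity to absorb the $I_2=\emptyset$ summand, I obtain a recursion for $\frac{U^{(0)}_{|I|+1}(v,z;I)}{H^{(0)}_1}$ in which a factor $\frac{-\lambda W^{(0)}}{x(z)+y(v)}$ multiplies a lower $\frac{U^{(0)}}{H^{(0)}_1}$. Iterating until no $U$-factor survives produces three families: a lone $\frac{H^{(0)}_{|I|+1}}{(x(z)+y(v))H^{(0)}_1}$; a pure-$W$ chain $\frac{1}{x(z)+y(v)}\sum_{l\ge1}\sum_{I_1\uplus\dots\uplus I_l=I}\prod_i\frac{-\lambda W^{(0)}_{|I_i|+1}(v;I_i)}{x(z)+y(v)}$; and mixed terms carrying one $H^{(0)}_{|I_0|+1}$ together with $l\ge1$ $W$-factors. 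Applying Fa\`a di Bruno to $1/(x(z)+y(v))$ viewed as a function of $y(v)$, with $D_{I_i}y(v)=\lambda W^{(0)}_{|I_i|+1}(v;I_i)$, identifies the pure-$W$ chain as precisely $D_I\frac{1}{x(z)+y(v)}$.

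It remains to show that the middle partition sum $S$ collapses onto this pure-$W$ chain, i.e.\ that all $H$-bearing contributions cancel; this is the main obstacle. I would split $S=T+A$, where $A$ gathers the terms with empty numerator block $I_0=\emptyset$ and, by the base identity, equals $\frac{1}{x(z)+y(v)}\sum_{l\ge1}(-1)^l\sum_{I_1\uplus\dots\uplus I_l=I}\prod_i\frac{H^{(0)}_{|I_i|+1}}{H^{(0)}_1}$, while $T$ gathers the terms with $I_0\neq\emptyset$. Substituting the iterated DSE for every $U$-factor in $T$ rewrites it as a pure-$H$ sum plus the pure-$W$ chain; the pure-$H$ sum turns out to be exactly $-A$, so the two annihilate and $S=D_I\frac{1}{x(z)+y(v)}$. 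The hard step is this cancellation: after substitution one meets mixed $H$-and-$W$ terms indexed by chains of blocks, whose collapse is governed by the alternating binomial identity $\sum_k(-1)^k\binom{n}{k}\binom{x-k}{j}=\binom{x-n}{j-n}$ already used in Lemma~\ref{lemma-DIlogPH}; the delicate point is matching the multiplicity with which each arrangement of $H$- and $W$-blocks arises so that precisely the mixed terms vanish.
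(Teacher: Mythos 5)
Your route is, in outline, exactly the paper's: the base identity $\frac{U^{(0)}_1(v,z)}{H^{(0)}_1(x(v);z)}=\frac{1}{x(z)+y(v)}$ from \eqref{U01} and \eqref{H01}, the Leibniz/quotient-rule reading of the ``$\equiv$'', the iteration of the genus-$0$ case of \eqref{DSE-UH}, and the split $S=T+A$ with the claim that the substituted $T$ is a pure-$W$ chain plus a pure-$H$ sum cancelling $A$ (this is the paper's displayed identity with lines (*) and (**)). However, the one step carrying the content of the lemma is left open, and the mechanism you invoke for it is the wrong one. The cancellation of the mixed $H$-and-$W$ terms is \emph{not} governed by the alternating binomial identity $\sum_k(-1)^k\binom{n}{k}\binom{x-k}{j}=\binom{x-n}{j-n}$ of Lemma~\ref{lemma-DIlogPH}. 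That identity is needed there because $D_I\log P^{(0)}_1$ carries products of $l$ quotient factors with coefficient $\frac{(-1)^{l-1}}{l}$, so a configuration with $w$ $W$-factors and $h$ $H$-factors arises with any number $k\le\min(w,h)$ of mixed insertions, forcing a genuine $k$-sum. In the present lemma every term of the alternating sum contains exactly \emph{one} $U$-factor, and \eqref{DSE-UH} at $g=0$ is linear in $U$ and $H$, so the iterated expression for $U^{(0)}_{|I_0|+1}/H^{(0)}_1$ contributes at most one $H$-block per term. A fixed mixed configuration with $h\ge 1$ $H$-blocks and $w\ge 1$ $W$-blocks therefore arises in exactly two ways: with $l=h$ original $H$-factors (all $W$-blocks from the substitution, sign $(-1)^h$), or with $l=h-1$ original $H$-factors and the remaining $H$-block produced by the substitution (sign $(-1)^{h-1}$); both occur with the same ordered multiplicity $h!\,w!$, so they cancel in pairs. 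There is no $k$-sum and no delicate multiplicity matching to perform.

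Equivalently, and more economically than any multiplicity count, the identity closes by induction on $|I|$. Write $r=\frac{1}{x(z)+y(v)}$, $h_K=\frac{H^{(0)}_{|K|+1}(x(v);z;K)}{H^{(0)}_1(x(v);z)}$, $w_K=\frac{-\lambda W^{(0)}_{|K|+1}(v;K)}{x(z)+y(v)}$, $u_K=\frac{U^{(0)}_{|K|+1}(v,z;K)}{H^{(0)}_1(x(v);z)}$, with $h_\emptyset=1$, $u_\emptyset=D_\emptyset r=r$. Since the alternating sum in \eqref{DUH-g0} is the convolution of $u$ with the inverse of $h$, the assertion is equivalent to $u_I=\sum_{K'\uplus K''=I}(D_{K'}r)\,h_{K''}$. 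Assuming this for all proper subsets and inserting it into the DSE recursion $u_I=r\,h_I+\sum_{J\uplus K=I,\,J\neq\emptyset}w_J\,u_K$, one uses $\sum_{J\uplus K'=I\setminus K'',\,J\neq\emptyset}w_J\,D_{K'}r=D_{I\setminus K''}r$ (valid whenever $K''\subsetneq I$) to reassemble the double sum into exactly $\sum_{K'\uplus K''=I}(D_{K'}r)\,h_{K''}$, closing the induction. So your plan is salvageable, but you must replace the appeal to Lemma~\ref{lemma-DIlogPH}'s binomial identity by this pairwise telescoping (or the induction just sketched); as written, the ``hard step'' you defer would not go through by that identity.
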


The previous lemmas allow us to solve the genus $g=0$ case:
\begin{proposition}
  \label{prop:DlogHP}
  The solution of the system \textup{\eqref{DSE-UH} and \eqref{DSE-HP}} with  \textup{\eqref{def-H} and \eqref{def-P}} is
  \begin{align}
D_I \log H^{(0)}_1(x(v);z)
&=\sum_{k=1}^d 
D_I \log (x(v)+y(\hat{z}^k))
+F_{|I|+1}^{(0)}(x(v);x(z);I)\;,
\label{eq:DlogH}
  \\
D_I \log P^{(0)}_1(x(v),x(z))
&=\sum_{k=0}^d 
D_I \log (x(v)+y(\hat{z}^k))
+F_{|I|+1}^{(0)}(x(v);x(z);I)
  \label{eq:DlogP}
\end{align}
[note that the sum starts with $k=1$ in \eqref{eq:DlogH}
but $k=0$ in \eqref{eq:DlogP}]
where
\begin{align}
  F^{(0)}_{|I|+1}  (x(v);x(z);I)=   \sum_{j=1}^{|I|}
D_{I\setminus u_j}\frac{\lambda}{ (x(v)-x(u_j))(x(z)+y(u_j))}\;.
\label{eq:DlogP-poleu}
\end{align}
\begin{proof}
  Both sides of (\ref{DIlogPH}) can only be a function of $x(z)$ if
  (\ref{eq:DlogH}) and (\ref{eq:DlogP}) hold for some rational
  function $F_{|I|+1}^{(0)}(x(v);x(z);I)$.  Since
  $ \frac{H^{(0)}_{|I|+1}(x(v);z;I)}{H^{(0)}_{1}(x(v);z)}$ is
  holomorphic at $x(v)+y(z)=0$ by (\ref{def-H}), the function
  $F_{|I|+1}^{(0)}$ cannot have poles at $x(v)+y(\hat{z}^j)=0$. Recall
  from (\ref{def-P}) and (\ref{def-H}) that the only poles of
  $x(v)\mapsto
  \frac{P^{(0)}_{|I|+1}(x(v),x(z);I)}{P^{(0)}_{1}(x(v),x(z))}$ are at
  $x(v)=x(u_j)$ for $u_j\in I$ and at the $z$ with
  $P^{(0)}_{1}(x(v),x(z))=0$. The latter are already given by the
  first line of (\ref{eq:DlogP}) so that the only poles of
  $F^{(0)}_{|I|+1}$ are located at $x(v)=x(u_j)$ for every $u_j\in I$.
  These poles are simple and arise via (\ref{def-H}):
\begin{align*}
  \lim_{x(v)  \to x(u_j)} (x(v)-x(u_j))H^{(0)}_{|I|+1}(x(v);z;I)=
  \lambda U^{(0)}_{|I|}(u_j,z;I\setminus u_j)\;,
\end{align*}  
if $u_j\in I$. This gives for the corresponding limit of (\ref{def-DlogH})
\begin{align}
  &  \lim_{x(v)  \to x(u_j)} (x(v)-x(u_j))D_I \log H^{(0)}_1(x(v);z))
  \nonumber
  \\
  &=
\lambda   \sum_{l=0}^{|I|-1}
  \sum_{  \substack{I_0\uplus ...\uplus I_l=I\setminus u_j\\
      I_1,...I_l\neq \emptyset}}
      \frac{U^{(0)}_{|I_0|+1}(u_j,z;I_0)}{H^{(0)}_1(x(u_j);z)}
(-1)^{l}
\prod_{i=1}^l \frac{H^{(0)}_{|I_i|+1}(x(u_j);z;I_i)}{ H^{(0)}_1(x(u_j);z)}
\nonumber
\\
&= D_{I\setminus u_j} \frac{\lambda}{x(z)+y(u_j)}
  \label{lim-DIlogH-g0a}
\end{align}
where Lemma~\ref{lemma:DIUH} has been used. This finishes the proof.
\end{proof}
\end{proposition}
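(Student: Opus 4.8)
The plan is to reduce both identities to a single statement about the poles, in the variable $x(v)$, of one rational function $F^{(0)}_{|I|+1}$. First I would invoke Lemma~\ref{lemma-DIlogPH}: since $\hat{z}^0=z$, the right-hand sides of \eqref{eq:DlogH} and \eqref{eq:DlogP} differ by exactly the $k=0$ summand $D_I\log(x(v)+y(z))$, which is precisely the discrepancy that \eqref{DIlogPH} forces between $D_I\log P^{(0)}_1$ and $D_I\log H^{(0)}_1$. Hence the two formulas are equivalent once $F^{(0)}_{|I|+1}$ is fixed, and it suffices to establish one of them. I would work with \eqref{eq:DlogP}, because by \eqref{P01} the function $P^{(0)}_1(x(v),x(z))$ is symmetric in all preimages $\hat{z}^k$ and is therefore genuinely a function of $x(z)$, whereas $H^{(0)}_1(x(v);z)$ singles out $z$.

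Next I would argue that a rational $F$ exists. The base cases \eqref{H01}, \eqref{P01} show that $P^{(0)}_1$ is built from the factors $(x(v)+y(\hat{z}^k))$ and from $\varepsilon$-denominators depending only on $x(\,\cdot\,)$, which are annihilated by $D_I$ for $I\neq\emptyset$. Consequently the full symmetric combination $\sum_{k=0}^d D_I\log(x(v)+y(\hat{z}^k))$ absorbs the variation coming from the zeros of $P^{(0)}_1$, so that $F^{(0)}_{|I|+1}:=D_I\log P^{(0)}_1-\sum_{k=0}^d D_I\log(x(v)+y(\hat{z}^k))$ is symmetric in the $\hat{z}^k$, hence a function of $x(z)$, and rational in $x(v)$. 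I would then pin it down by its principal parts.

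For the pole structure I would read off from \eqref{def-P} and \eqref{def-H} that the only $x(v)$-singularities of $P^{(0)}_{|I|+1}/P^{(0)}_1$ are at the zeros $x(v)+y(\hat{z}^k)=0$ of $P^{(0)}_1$ and at $x(v)=x(u_j)$ for $u_j\in I$; the former are already carried by the explicit preimage sum, while the $\varepsilon$-poles of $P^{(0)}_1$ are $x(\,\cdot\,)$-only and die under $D_I$. Moreover $H^{(0)}_{|I|+1}/H^{(0)}_1$ is holomorphic at $x(v)+y(z)=0$ by \eqref{def-H}, so $F$ inherits no pole there either. Thus $F$ can have only simple poles at $x(v)=x(u_j)$. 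To get the residues I would use that \eqref{def-H} yields $\lim_{x(v)\to x(u_j)}(x(v)-x(u_j))H^{(0)}_{|I|+1}(x(v);z;I)=\lambda\,U^{(0)}_{|I|}(u_j,z;I\setminus u_j)$; inserting this into the logarithmic expansion \eqref{def-DlogH} isolates the factor carrying $u_j$ and leaves exactly the combination resummed by the left-hand side of \eqref{DUH-g0}, so that Lemma~\ref{lemma:DIUH} gives residue $D_{I\setminus u_j}\frac{\lambda}{x(z)+y(u_j)}$.

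Finally, since $F$ is rational in $x(v)$, decays as $x(v)\to\infty$ (a behaviour I would verify directly from the leading order of $P^{(0)}_1$, where $D_I$ kills the $\log x(v)$ term), and has only the simple poles just computed, it equals the sum of its principal parts, namely \eqref{eq:DlogP-poleu}; feeding this back through \eqref{DIlogPH} then returns \eqref{eq:DlogH}. I expect the real work to sit in the residue step, namely checking that the combinatorial $D_I$-expansion of $\log H^{(0)}_1$ reorganizes, after extraction of the simple pole at $x(v)=x(u_j)$, into precisely the form to which Lemma~\ref{lemma:DIUH} applies; verifying the \emph{absence} of spurious poles at $x(v)+y(\hat{z}^k)=0$ generated by the higher loop insertions is the other delicate point.
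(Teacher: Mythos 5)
Your proposal is correct and follows essentially the same route as the paper's proof: both rely on Lemma~\ref{lemma-DIlogPH} to reduce the two identities to the determination of a single rational $F^{(0)}_{|I|+1}$, locate its only possible $x(v)$-poles as simple poles at $x(v)=x(u_j)$ (the zeros of $P^{(0)}_1$ being carried by the preimage sum and $x(v)+y(z)=0$ excluded by the holomorphy of $H^{(0)}_{|I|+1}/H^{(0)}_1$ from \eqref{def-H}), and extract the residues from $\lim_{x(v)\to x(u_j)}(x(v)-x(u_j))H^{(0)}_{|I|+1}(x(v);z;I)=\lambda U^{(0)}_{|I|}(u_j,z;I\setminus u_j)$ combined with Lemma~\ref{lemma:DIUH}. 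Your explicit verification of the $\mathcal{O}((x(v))^{-1})$ decay at $x(v)\to\infty$, needed to equate $F^{(0)}_{|I|+1}$ with the sum of its principal parts, spells out a step the paper leaves implicit.
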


\begin{remark}\label{Rem:loopinsert}
  The naive action of the loop insertion operator $D_u$ on the
  solution \eqref{eq:DlogH} or \eqref{eq:DlogP} does \textbf{not} give
  the correct solution of the DSEs for
  $D_{I\cup u} \log H^{(0)}_1(x(v);z)$ or
  $D_{I\cup u} \log P^{(0)}_1(x(v);z)$. There would be one term
  missing in $F_{|I|+2}^{(0)}(x(v);x(z);I,u)$ which is of the form
    \begin{align}\label{missterm}
        D_{I}\frac{\lambda}{ (x(v)-x(u))(x(z)+y(u))}.
    \end{align}
    However without this term, $F_{|I|+2}^{(0)}(x(v);x(z);I,u)$ would
    not be symmetric after interchanging the variables $u$ with
    $u_i\in I$. Furthermore, the missing term \eqref{missterm}
    actually vanishes at $u\to E_k$ for any $k$, since
    $\lim_{u\to E_k} y(u)$ diverges. This observation is completely
    consistent with the discussion of Sec.\
    \ref{Sec.motivateloop}. Therefore, we want to highlight again that
    the loop insertion operator as defined in Sec.\ \ref{sec.loopdef}
    is rather a formal rewriting or abbreviation for involved
    expressions. The actual computation is nonetheless performed on
    the level of DSEs.
\end{remark}

\subsection{Linear and quadratic loop equations for
genus  $g=0$}

From (\ref{def-P}), (\ref{def-H}) and the
connecting equation (\ref{DSE-W})
we read off
\begin{align*}
P^{(0)}_1(x(v),x(z))&=x(v)+x(z)
  -\frac{\lambda}{N} \sum_{k=1}^d \frac{r_k}{x(z)-x(\varepsilon_k)}
  + \mathcal{O}((x(v))^{-1})\;,
  \\
H^{(0)}_1(x(v);z)&=1+\frac{1}{x(v)}\Big(x(z)-y(z)-
\frac{\lambda}{N} \sum_{k=1}^d \frac{r_k}{x(z)-x(\varepsilon_k)}\Big)
\\
&+ \mathcal{O}((x(v))^{-2}  )\;,
\\
P^{(0)}_{2}(x(v),x(z);u_1)
&=\frac{\lambda}{x(z)-x(u_1)}
+
\frac{\lambda}{x(v)}\Big\{ \frac{\lambda}{N} \sum_{k=1}^d \frac{r_k
  W^{(0)}_2(\varepsilon_k;u_1)}{x(z)-x(\varepsilon_k)}
\\  
&+
\frac{x(z)-y(u_1)
  -\frac{\lambda}{N} \sum_{k=1}^d \frac{r_k}{x(z)-x(\varepsilon_k)}
}{(x(z)-x(u_1))}
\Big\}  + \mathcal{O}((x(v))^{-2} ) 
\end{align*}
and then for $|I|\geq 2$
\begin{align*}  
P^{(0)}_{|I|+1}(x(v),x(z);I)
  &
  =\frac{\lambda}{x(v)}\Big\{
\frac{\lambda}{N} \sum_{k=1}^d \frac{r_k
 W^{(0)}_{|I|+1}(\varepsilon_k;I)}{x(z)-x(\varepsilon_k)}
-\sum_{j=1}^{|I|}\frac{\lambda W^{(0)}_{|I|}(u_j;I\setminus u_j)}{x(z)-x(u_j)}
\\
&+ \frac{\lambda \delta_{|I|,2}}{(x(z)-x(u_1))(x(z)-x(u_2))}
\Big\}+ \mathcal{O}((x(v))^{-2} ) \;.
\end{align*}  
These expansions combine for $I\neq \emptyset$ to
\begin{align}
&  D_I\log P^{(0)}_1(x(v),x(z))
  \\
  &=\frac{\lambda }{(x(v))^2}
  \Big\{
\frac{\lambda}{N} \sum_{k=1}^d \frac{r_k
W^{(0)}_{|I|+1}(\varepsilon_k;I)
}{
  x(z)-x(\varepsilon_k)}
-\lambda(1-\delta_{|I|,1}) \sum_{j=1}^{|I|} \frac{W^{(0)}_{|I|+1}(u_j;I\setminus u_j)
}{ x(z)-x(u_j)}
\Big\}
\nonumber
\\
&+\frac{\lambda \delta_{|I|,1}}{(x(z)-x(u_1))}
\Big(\frac{1}{x(v)}-\frac{y(u_1)}{(x(v))^2}\Big)
+ \mathcal{O}((x(v))^{-3})\;.
\nonumber
\end{align}
Comparison with (\ref{eq:DlogP}) and (\ref{eq:DlogP-poleu})  yields
the following main result:
\begin{proposition}\label{pro:sumpreimage}
The functions $W^{(0)}_{|I|+1}$ satisfy for $I\neq \emptyset$ the 
linear loop equations
\begin{align}
\sum_{k=0}^d W^{(0)}_{|I|+1}(\hat{z}^k;I)
&=
\frac{\delta_{|I|,1}}{(x(z)-x(u_1))}
-\sum_{j=1}^{|I|}
D_{I\setminus u_j}\Big(\frac{1}{x(z)+y(u_j)}\Big)
\label{lle-g0}
\end{align}
and the quadratic loop equations
\begin{align}
  & -\sum_{k=0}^d
  y(\hat{z}^k) W^{(0)}_{|I|+1}(\hat{z}^k;I)
  \label{qle-g0}
  \\
&=  \frac{\lambda}{2} 
  \sum_{  \substack{I_1\uplus I_2=I\\  I_1,I_2\neq \emptyset}} 
\sum_{k=0}^d  W^{(0)}_{|I_1|+1}(\hat{z}^k;I_1)  W^{(0)}_{|I_2|+1}(\hat{z}^k;I_2)
-\sum_{j=1}^{|I|} x(u_j)
D_{I\setminus u_j}\Big(\frac{1}{x(z)+y(u_j)}\Big)
\nonumber
\\
&
+
\frac{\lambda}{N} \sum_{k=1}^d \frac{r_k
W^{(0)}_{|I|+1}(\varepsilon_k;I)}{
x(z)-x(\varepsilon_k)}
-\lambda(1-\delta_{|I|,1}) \sum_{j=1}^{|I|} \frac{
W^{(0)}_{|I|+1}(u_j;I\setminus u_j)}{ x(z)-x(u_j)}
-\frac{\delta_{|I|,1} y(u_1)}{x(z)-x(u_1)}\;.
\nonumber
\end{align}  
\end{proposition}

{\begin{remark}\label{Rem:loopinsert2}
    Similar to Rem.\ \ref{Rem:loopinsert}, the naive action of the loop insertion operator $D_u$ on the equation \eqref{lle-g0} does \textbf{not} provide the desired symmetric result for $\sum_{k=0}^d W^{(0)}_{|I|+2}(\hat{z}^k;I,u)$. The missing term of the form 
    \begin{align}\label{sumpresym}
        -D_{I}\Big(\frac{1}{x(z)+y(u)}\Big)
    \end{align}
    is not generated by a naive action of the loop insertion operator. The term \eqref{sumpresym} vanishes at $u\to E_k$, thus this appearance is consistent with the discussion of Sec.\ \ref{Sec.motivateloop}. Again, the result of Prop.\ \ref{pro:sumpreimage} clearly distinguishes the quartic Kontsevich model from any other model solved by topological recursion, where actually the corresponding sum over all preimages as in \eqref{lle-g0} sums up to zero for $|I|>1$ for any compact spectral curve of genus zero. 
\end{remark}}


\subsection{Computation of $\hat{Q}^{(0)}$}

\label{sec:g0-Q}

Combining (\ref{V01}) with (\ref{def-tV}),
(\ref{DSE-tVM}) and (\ref{DSE-tMQ}) gives
\begin{align}
  \hat{Q}^{(0)}_1(x(v),x(z))&=
  -\frac{\lambda(x(v)+x(z)-2x(0)) \mathAE(x(v)) \mathAE(x(z))}{(x(v)-x(z))^2}
  \label{tQ01}
  \\
  &=  -\frac{\lambda(x(v)+x(z)-2x(0))}{2(x(v)-x(z))^2}
  \sqrt{\frac{P^{(0)}_1(x(v),x(v))P^{(0)}_1(x(z),x(z))}{(x(v)-x(0))
(x(z)-x(0))}}\;,
  \nonumber
\end{align}
where (\ref{U01-diag}), (\ref{DSE-UH}) and (\ref{DSE-HP}) have been used.

We take the logarithm of this equation and apply the loop
insertion operator. The na\"{\i}ve expectation is actually true if
a symbolic expression $D^0_Ix(0)$ is correctly identified:
\begin{proposition}
\label{prop:DIlogQ}
For $I\neq \emptyset$ one has
\begin{align}
&  D_I \log \hat{Q}^{(0)}_1(x(v),x(z))
\label{DIlogQ}
\\
&=  
\frac{1}{2} D_I \log P^{(0)}_1(x(v),x(v))
+\frac{1}{2} D_I \log P^{(0)}_1(x(z),x(z))
\nonumber
\\
&-\frac{1}{2}\sum_{l=1}^{|I|} \frac{(-1)^{l-1}}{l}
\Big(\frac{2^{l+1}}{(x(v)+x(z)-2x(0))^l}
-\frac{1}{(x(v)-x(0))^l}
-\frac{1}{(x(z)-x(0))^l}\Big)
\nonumber
\\
& \hspace*{3cm}
\times \sum_{\substack{I_1\uplus ...\uplus I_l=I\\
    I_1,..,I_l\neq \emptyset}}
\prod_{i=1}^l D^0_{I_i} x(0)\;,
\nonumber
\end{align}
where $D^0_{I_i} x(0)$ are symbolic expressions
uniquely determined by the condition
that $D_I \log \hat{Q}^{(0)}_1(x(v),x(z))$ is holomorphic at $v=0$.
\end{proposition}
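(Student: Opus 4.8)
The plan is to take the logarithm of the explicit formula \eqref{tQ01} and apply $D_I$. Writing $\log\hat{Q}^{(0)}_1(x(v),x(z))$ out, the constant prefactor and the term $-2\log(x(v)-x(z))$ are annihilated by $D_I$ for $I\neq\emptyset$ (neither depends on $y$, on any $W$, or on the point $0$), so that
\[
D_I\log\hat{Q}^{(0)}_1=\tfrac12 D_I\log P^{(0)}_1(x(v),x(v))+\tfrac12 D_I\log P^{(0)}_1(x(z),x(z))+\Phi_I,
\]
where $\Phi_I:=D_I\log(x(v)+x(z)-2x(0))-\tfrac12 D_I\log(x(v)-x(0))-\tfrac12 D_I\log(x(z)-x(0))$ collects the $x(0)$-dependent logarithms. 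The first two summands are already known from Proposition~\ref{prop:DlogHP} (they are \eqref{eq:DlogP} specialised to coincident arguments), so the whole content lies in $\Phi_I$. Expanding each of its three logarithms by the combinatorial rule \eqref{def-DlogH}, with the chain-rule factors $D_{I_i}(x(v)+x(z)-2x(0))=-2\,D^0_{I_i}x(0)$ and $D_{I_i}(x(v)-x(0))=-D^0_{I_i}x(0)$, reproduces precisely the third line of the claim; this step is routine once the symbols $D^0_{I_i}x(0)$ are admitted.

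The substance is that these symbols cannot be the honest loop insertions: Definition~\ref{def:loopins} gives $D_u x(0)=0$, which would make $\Phi_I$ vanish and leave $D_I\log\hat{Q}^{(0)}_1$ equal to $\tfrac12 D_I\log P^{(0)}_1(x(v),x(v))+\tfrac12 D_I\log P^{(0)}_1(x(z),x(z))$. This is impossible, because the left-hand side is holomorphic at $v=0$ --- by the first line of \eqref{tQ01} the function $\hat{Q}^{(0)}_1$ is, up to factors regular and nonvanishing there, the product $\mathAE(x(v))\mathAE(x(z))$ --- whereas $\tfrac12 D_I\log P^{(0)}_1(x(v),x(v))$ is singular at $v=0$: the factor $x(v)+y(v)$ in $P^{(0)}_1(x(v),x(v))$ vanishes at $v=0$, and through $D_u(x(v)+y(v))=\lambda W^{(0)}_2(v;u)$ its logarithmic insertions produce poles there. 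Thus $\Phi_I$ must carry compensating poles, and this is exactly the r\^ole of the $D^0_{I_i}x(0)$. Among the three families of terms in $\Phi_I$, only the $(x(v)-x(0))^{-l}$ terms are singular at $v=0$ (one has $x(v)+x(z)-2x(0)\to x(z)-x(0)\neq0$, and $x(z)-x(0)$ is $v$-independent), so the defining requirement reduces to the statement that $\tfrac12 D_I\log\frac{P^{(0)}_1(x(v),x(v))}{x(v)-x(0)}$ be holomorphic at $v=0$.

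I would carry out the determination by induction on $|I|$. The new symbol $D^0_I x(0)$ enters $\Phi_I$ only through the $l=1$ term, hence only through the simple pole $-\tfrac12\,(x(v)-x(0))^{-1}D^0_I x(0)$ with nonvanishing coefficient, while all $l\geq2$ terms involve only the $D^0_J x(0)$ with $|J|<|I|$, already fixed. Matching the residue at $v=0$ against the known simple pole of $\tfrac12 D_I\log P^{(0)}_1(x(v),x(v))$ then determines $D^0_I x(0)$ uniquely. The main obstacle is the consistency of this prescription: holomorphicity at $v=0$ forces the vanishing of poles of all orders $1,\dots,|I|$, yet only the order-one equation carries the fresh unknown, so the higher-order equations must hold automatically on the previously determined data. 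The cleanest way to see this --- and the step I expect to require the most care --- is to identify $D^0_{I_i}x(0)$ with the honest loop insertions of $x(z_0)$, where $z_0$ solves $x(z_0)+y(z_0)=0$ near the origin: because $D_u$ breaks the reflection symmetry $y=-x(-\,\cdot\,)$ (one has $D_u y(z)=\lambda W^{(0)}_2(z;u)\neq0$ while $D_u x\equiv0$), this zero is dragged off $0$ and acquires a nontrivial variation given by implicit differentiation. With this identification the ratio $\frac{P^{(0)}_1(x(v),x(v))}{x(v)-x(0)}$ is regular and nonvanishing at $v=0$ to all orders at once, so every pole cancels simultaneously, consistency is manifest, and the holomorphicity condition is seen to be exactly this implicit-function prescription.
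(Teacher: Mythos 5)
Your proposal assumes the conclusion in its first step. You declare it ``routine'' that expanding the three $x(0)$-dependent logarithms of \eqref{tQ01} with admitted symbols $D^0_{I_i}x(0)$ reproduces the third line of \eqref{DIlogQ} --- but this presupposes that the combinatorially defined object $D_I\log\hat{Q}^{(0)}_1$ (the expansion of type \eqref{def-DlogH} in the functions $\hat{Q}^{(0)}_{|I_i|+1}/\hat{Q}^{(0)}_1$) may be computed by termwise differentiation of the functional identity \eqref{tQ01}. It may not: $D_u$ is a derivation on the ring $\mathcal{R}$ of \emph{independent} variables, and \eqref{tQ01} is not a relation of that ring --- as you yourself observe, the honest rules ($D_u x(0)=0$) applied to \eqref{tQ01} yield a false identity. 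Once the naive application fails, one cannot selectively repair it by decreeing nonzero symbols; one must \emph{prove} that the difference $D_I\log\hat{Q}^{(0)}_1(x(v),x(z))-\tfrac12 D_I\log P^{(0)}_1(x(v),x(v))-\tfrac12 D_I\log P^{(0)}_1(x(z),x(z))$ is a rational function whose only poles lie in the three families appearing in the third line. That is where the paper's actual work lies and where your proposal is silent: (i) the simple poles of $D_I\log\hat{Q}^{(0)}_1$ at $x(v)=x(u_j)$ must cancel those of $\tfrac12 D_I\log P^{(0)}_1(x(v),x(v))$ --- Lemma~\ref{lemma-QP-poleu}, a residue computation requiring the linear loop equation \eqref{lle-g0} and a $\hat{Q}$-analogue of Lemma~\ref{lemma:DIUH}; (ii) the higher poles at $x(z)=x(\alpha_k)$, where $P^{(0)}_1(x(z),x(z))$ also vanishes, must cancel against poles of $D_I\log\hat{Q}^{(0)}_1$ --- Lemma~\ref{lemma-QP-polea}, which rests on Lemma~\ref{lemma:Ualpha} and on the holomorphicity of $V^{(0)}$ at $\alpha_k$ coming from the matrix-model construction, a model-specific input invisible to a formal chain rule; (iii) regularity at $v=z$ must be checked via \eqref{def-tQ}. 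Only after these cancellations is the residual function confined to poles at $x(v)=x(0)$, $x(z)=x(0)$ and $x(v)+x(z)=2x(0)$; the paper then fixes the relative weight $-2^{l+1}$ by the $v\leftrightarrow z$ symmetry together with the diagonal constraint $\lim_{z\to v}\hat{Q}^{(0)}_{|I|+1}(x(v),x(z);I)/\hat{Q}^{(0)}_1(x(v),x(z))=P^{(0)}_{|I|+1}(x(v),x(v);I)/P^{(0)}_1(x(v),x(v))$, whereas in your derivation this weight is inherited only from the unjustified formal differentiation.

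On the positive side, your inductive residue-matching at $v=0$ is exactly the uniqueness clause of the proposition, and your implicit-function reading of $D^0x(0)$ as the motion of the zero of $x+y$ under the deformation $D_u y(z)=\lambda W^{(0)}_2(z;u)$ is correct in spirit: since $y(z)=-x(-z)$ gives $y'(0)=x'(0)$, it reproduces the paper's example $D^0_u x(0)=-\tfrac{\lambda}{2}W^{(0)}_2(0;u)$. But your resolution of the overdetermination worry --- promoting $D^0x(0)$ to an ``honest'' loop insertion of $x(z_0)$ --- would require enlarging $\mathcal{R}$ by this new generator and verifying compatibility of the extended derivation with all relations of the ring, which is not easier than, and does not substitute for, the pole analysis above. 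As written, the proposal captures the statement's heuristic (which the paper itself advertises as the ``na\"{\i}ve expectation'') but supplies none of the cancellation lemmas that constitute its proof.
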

We provide parts of the proof as separate results.
Let $v\in \{0,\pm \alpha_1,...,\pm \alpha_d\}$ be the set of zeros of
$x(v)+y(v)=0$.

\begin{lemma}\label{lemma:Ualpha}
$U^{(0)}_{|I|+1}(z,z;I)$ is holomorphic at
$z=\alpha_k$.
\begin{proof}
Set $v=z$ in (\ref{DIlogU01}) and insert Proposition (\ref{prop:DlogHP}):
\begin{align*}
  D_I \log U^{(0)}_1(z,z)
  &=\sum_{l=1}^d D_I \log(x(\hat{z}^l)+y(\hat{z}^l))
-D_I \log(x(z)+y(z))
\\
&+\sum_{j=1}^{|I|} D_{I\setminus u_j}
\frac{\lambda}{(x(z)-x(u_j))(x(z)+y(u_j))}\;.
\end{align*}  
By definition  we have $0=x(\alpha_k)+y(\alpha_k)=
x(\alpha_k)-x(-\alpha_k)$ where the key identity (\ref{glInvolution}) is used.
This means that $\widehat{\alpha_k}^1=-\alpha_k$ is one of the
preimages of $x(\alpha_k)$. Again with (\ref{glInvolution}) we conclude
$x(\hat{z}^1)+y(\hat{z}^1)=-(x(z)+y(z))$ for $z=\alpha_k$.
This means that 
$\log(-x(\hat{z}^1)-y(\hat{z}^1))
-\log(x(z)+y(z))$ is holomorphic in a neighbourhood of 
$z=\alpha_k$ and leads to a holomorphic 
$D_I\log(x(\hat{z}^1)+y(\hat{z}^1))
-D_I\log(x(z)+y(z))$.
\end{proof}
\end{lemma}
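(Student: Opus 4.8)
The plan is to deduce holomorphy of the individual diagonal functions $U^{(0)}_{|I|+1}(z,z;I)$ from holomorphy of the single generating series $D_I\log U^{(0)}_1(z,z)$, and to establish the latter by isolating the only two logarithmically singular contributions and cancelling them with the involution $y(z)=-x(-z)$. First I would set $v=z$ in \eqref{DIlogU01} and substitute Proposition~\ref{prop:DlogHP}; using that the preimages satisfy $x(\hat{z}^l)=x(z)$, this gives
\begin{align*}
D_I\log U^{(0)}_1(z,z)
&=\sum_{l=1}^d D_I\log\big(x(\hat{z}^l)+y(\hat{z}^l)\big)
-D_I\log\big(x(z)+y(z)\big)\\
&\quad+\sum_{j=1}^{|I|}D_{I\setminus u_j}\frac{\lambda}{(x(z)-x(u_j))(x(z)+y(u_j))}.
\end{align*}
The last sum is $F^{(0)}_{|I|+1}(x(z);x(z);I)$ of \eqref{eq:DlogP-poleu}; its poles lie only at $x(z)=x(u_j)$ and $x(z)=-y(u_j)$, which for a generic cover avoid $z=\alpha_k$, so it is holomorphic there. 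For $l=2,\dots,d$ the branch $\hat{z}^l$ stays away from $\pm\alpha_k$, whence $x(\hat{z}^l)+y(\hat{z}^l)\neq 0$ and these terms are holomorphic at $z=\alpha_k$.

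The heart of the argument is the surviving pair, where $\hat{z}^1$ is the branch of $x^{-1}(x(z))$ through $-\alpha_k$. Indeed, by \eqref{glInvolution} the relation $x(\alpha_k)+y(\alpha_k)=x(\alpha_k)-x(-\alpha_k)=0$ forces $-\alpha_k$ to be a preimage of $x(\alpha_k)$, and the same identity shows that $x(\hat{z}^1)+y(\hat{z}^1)=x(\hat{z}^1)-x(-\hat{z}^1)$ vanishes at $z=\alpha_k$ as well. Both $z\mapsto x(z)+y(z)$ and $z\mapsto x(\hat{z}^1)+y(\hat{z}^1)$ are holomorphic near $\alpha_k$ and, by genericity of the cover, have a simple zero there; hence their quotient $f$ extends holomorphically and without zeros across $\alpha_k$, so $\log f$ is holomorphic. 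Writing the surviving pair as $D_I\log\big((x(\hat{z}^1)+y(\hat{z}^1))/(x(z)+y(z))\big)$ and expanding $D_I\log f$ into the universal polynomial in the quotients $D_J f/f$, holomorphy is preserved because each $D_u$ inserts only factors $W^{(0)}_2(\cdot;u)$ and $W^{(0)}_{n}$ that are themselves holomorphic at $z=\alpha_k$. Thus $D_I\log U^{(0)}_1(z,z)$ is holomorphic at $z=\alpha_k$.

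To conclude I would run an induction on $|I|$. The base case is \eqref{U01-diag}: at $z=\alpha_k$ the double zero of $\mathAE(x(z))^2$ cancels the double zero of $(x(z)+y(z))^2$, so $U^{(0)}_1(z,z)$ is holomorphic \emph{and nonvanishing}. In the expansion of $D_I\log U^{(0)}_1(z,z)$ analogous to \eqref{def-DlogH}, the top term is $U^{(0)}_{|I|+1}(z,z;I)/U^{(0)}_1(z,z)$, while every other term is a product of factors $U^{(0)}_{|I_j|+1}(z,z;I_j)/U^{(0)}_1(z,z)$ with $|I_j|<|I|$. By the inductive hypothesis and the nonvanishing of $U^{(0)}_1(z,z)$ these lower terms are holomorphic at $\alpha_k$; since the full series is holomorphic, so is the top term, and multiplying back by the holomorphic $U^{(0)}_1(z,z)$ yields holomorphy of $U^{(0)}_{|I|+1}(z,z;I)$.

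The main obstacle I anticipate is the passage in the second paragraph from a holomorphic, nonvanishing scalar quotient to holomorphy of its image under the loop insertion operator $D_I$: one must verify that $D_u$ maps the local ring of functions at $\alpha_k$ into itself, i.e.\ that the inserted correlators carry no pole at $z=\alpha_k$. This rests on the genericity of the cover, which simultaneously guarantees the simple zeros used above and keeps $\alpha_k$ off every locus (the ramification points $\beta_i$, the opposite diagonals $z+u_i=0$, and the origin) where the $W^{(0)}_n$ are singular.
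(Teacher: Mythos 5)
Your proposal is correct and takes essentially the same route as the paper: the same identity obtained by setting $v=z$ in \eqref{DIlogU01} and inserting Proposition~\ref{prop:DlogHP}, the same isolation of the branch $\hat{z}^1$ through $-\alpha_k$ via \eqref{glInvolution}, and the same holomorphic log-ratio argument, while you additionally spell out the induction on $|I|$ (using that \eqref{U01-diag} makes $U^{(0)}_1(z,z)$ holomorphic and nonvanishing at $\alpha_k$) which the paper leaves implicit. One caveat on the step you yourself flag as the main obstacle: holomorphy of the inserted correlators at $\alpha_k$ is \emph{not} by itself enough, since $D_u$ also deepens the $(x(z)+y(z))^{-1}$-poles — for instance $D_u\log\frac{x(\hat{z}^1)+y(\hat{z}^1)}{x(z)+y(z)}=\frac{\lambda W^{(0)}_2(\hat{z}^1;u)}{x(\hat{z}^1)+y(\hat{z}^1)}-\frac{\lambda W^{(0)}_2(z;u)}{x(z)+y(z)}$ has a simple pole on each branch at $z=\alpha_k$, and the residues cancel only because $x'(z)W^{(0)}_2(z;u)$ is even in $z$ (the $(z+u)^{-2}$-part of $\omega^{(0)}_2$, consistent with \eqref{lle-g0}), a reflection-structure input beyond mere genericity; since the paper asserts this step in a single line as well, your argument matches its level of rigor.
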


The zeros $\hat{Q}^{(0)}_1(x(v),x(\alpha_k))=0$ and
$P^{(0)}_1(x(\alpha_k),x(\alpha_k))=0$ produce
(higher) poles in 
$D_I \log \hat{Q}^{(0)}_1(x(v),x(z))$ and
$D_I\log P^{(0)}_1(x(z),x(z))$ at $x(z)=x(\alpha_k)$.
But these cancel exactly:
\begin{lemma}
  \label{lemma-QP-polea}
  $D_I \log \hat{Q}^{(0)}_1(x(v),x(z))
  -\frac{1}{2}D_I\log P^{(0)}_1(x(z),x(z))$
  is holomorphic  at $x(z)=x(\alpha_k)$.
\begin{proof}
Equations (\ref{DIlogPH}) and (\ref{DIlogU01}) for $v\mapsto z$ and
(\ref{DIlogU01}) and (\ref{DIlogM01}) combine to 
\begin{align*}
&D_I \log \hat{Q}^{(0)}_1(x(v),x(z))
-
\frac{1}{2} D_I \log P^{(0)}_1(x(z),x(z))
\\
&=
D_I \log\hat{V}^{(0)}_1(v,z)
-\frac{1}{2}  D_I \log U^{(0)}_1(z,z)
+D_I\log(x(v)+y(v))
\\
&\equiv\sum_{l=1}^{|I|} \frac{(-1)^{l-1}}{l}
\sum_{\substack{I_1\uplus ... \uplus I_l=I\\ I_1,...,I_l\neq \emptyset}}
\Big(\prod_{i=1}^l\frac{\hat{V}^{(0)}_{|I_i|+1}(v,z;I_l)}{\hat{V}^{(0)}_1(v,z)}
-\frac{1}{2}\prod_{i=1}^l\frac{U^{(0)}_{|I_i|+1}(z,z;I_l)}{U^{(0)}_1(z,z)}
\\[-2ex]
&\hspace*{10cm}
+\prod_{i=1}^l\frac{W^{(0)}_{|I_i|+1}(v;I_l)}{x(v)+y(v)}
\Big)\;.
\end{align*}
Here $\hat{V}^{(0)}_1(v,\alpha_k)\neq 0$ and
$U^{(0)}_1(\alpha_k,\alpha_k)\neq 0$ 
from the explicit formulae. Every function
$V^{(0)}_{|I|+1}(v,z;I)$ and thus
also $\hat{V}^{(0)}_{|I|+1}(v,z;I)$ is holomorphic at $z=\alpha_k$
from the matrix model construction. In fact, this holomorphicity is the
key assumption for the recursive
solution \cite[Prop.~E.4]{Branahl:2020yru}
which we reproduce by our algebraic method.
The assertion follows
with Lemma~\ref{lemma:Ualpha}.
\end{proof}

\end{lemma}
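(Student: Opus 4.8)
The plan is to sidestep the explicit product formulae \eqref{tQ01} and \eqref{P01} entirely and instead reduce the combination to objects whose behaviour at $z=\alpha_k$ is already controlled, using the chain of logarithmic loop-insertion identities \eqref{DIlogU01}, \eqref{DIlogV01}, \eqref{DIlogM01} and \eqref{DIlogPH}. First I would substitute \eqref{DIlogV01} into \eqref{DIlogM01} to write $D_I\log\hat{Q}^{(0)}_1(x(v),x(z)) = D_I\log(x(z)+y(z)) + D_I\log(x(v)+y(v)) + D_I\log\hat{V}^{(0)}_1(v,z)$. In parallel, setting $v\mapsto z$ in \eqref{DIlogPH} and then in \eqref{DIlogU01} yields $D_I\log P^{(0)}_1(x(z),x(z)) = 2\,D_I\log(x(z)+y(z)) + D_I\log U^{(0)}_1(z,z)$, the factor $2$ arising because the diagonal factor $x(z)+y(\hat z^0)$ of $P^{(0)}_1$ and the factor $x(z)+y(v)|_{v=z}$ produced by \eqref{DIlogU01} both contribute.

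The decisive bookkeeping is the multiplicity of the singular term $D_I\log(x(z)+y(z))$: because $x(\alpha_k)+y(\alpha_k)=0$, every factor $\tfrac{1}{x(z)+y(z)}$ generated by the loop insertion operator is a source of a pole at $z=\alpha_k$ (of order up to $|I|$). This term enters with coefficient $1$ in $D_I\log\hat{Q}^{(0)}_1$ and with coefficient $2$ in $D_I\log P^{(0)}_1(x(z),x(z))$, so that in the combination $D_I\log\hat{Q}^{(0)}_1 - \tfrac12 D_I\log P^{(0)}_1(x(z),x(z))$ it cancels exactly, leaving $D_I\log(x(v)+y(v)) + D_I\log\hat{V}^{(0)}_1(v,z) - \tfrac12 D_I\log U^{(0)}_1(z,z)$. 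I would then expand each term by the combinatorial rule \eqref{def-DlogH}, reducing the claim to holomorphy of the individual ratios at $z=\alpha_k$.

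It then remains to check the three surviving blocks. The piece $D_I\log(x(v)+y(v))$ is a function of $v$ alone, hence trivially holomorphic in $z$. For $U^{(0)}_1(z,z)$ the explicit formula \eqref{U01-diag} shows that the double zero of $(\mathAE(x(z)))^2$ cancels the double pole of $(x(z)+y(z))^{-2}$, so $U^{(0)}_1(\alpha_k,\alpha_k)\neq 0$; together with the holomorphy of $U^{(0)}_{|I|+1}(z,z;I)$ supplied by Lemma~\ref{lemma:Ualpha}, every factor of $D_I\log U^{(0)}_1(z,z)$ is holomorphic at $z=\alpha_k$. For the $\hat{V}^{(0)}$ block one verifies $\hat{V}^{(0)}_1(v,\alpha_k)\neq 0$ and uses holomorphy of each $\hat{V}^{(0)}_{|I|+1}(v,z;I)$ at $z=\alpha_k$; the assertion then follows.

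I expect the last input to be the genuine obstacle. Unlike the $U$-diagonal, whose regularity is visible from a closed formula, the holomorphy of $V^{(0)}_{|I|+1}(v,z;I)$ (and hence of $\hat{V}^{(0)}_{|I|+1}$) at $z=\alpha_k$ cannot be read off from the algebraic identities alone: it is precisely the regularity property established in the matrix-model construction of \cite{Branahl:2020yru} and must be imported as an external fact rather than derived. The honest shape of the argument is therefore that the log-identity algebra cancels all manifestly singular $(x(z)+y(z))$ contributions and reduces everything to the non-vanishing of $\hat{V}^{(0)}_1$ and $U^{(0)}_1(z,z)$ at $\alpha_k$ together with this imported holomorphy. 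A minor secondary point is to ensure that the log-expansions and the ``$\equiv$'' identifications are carried out modulo the ideal $\mathcal{I}$, i.e.\ are legitimate equalities in $\mathcal{R}$.
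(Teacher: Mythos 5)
Your proposal is correct and follows essentially the same route as the paper's own proof: the identical chain of identities \eqref{DIlogM01}+\eqref{DIlogV01} and \eqref{DIlogPH}+\eqref{DIlogU01} at $v\mapsto z$, the same cancellation of the doubled $D_I\log(x(z)+y(z))$ term, non-vanishing of $\hat{V}^{(0)}_1(v,\alpha_k)$ and $U^{(0)}_1(\alpha_k,\alpha_k)$ from the explicit formulae, Lemma~\ref{lemma:Ualpha} for the $U$-diagonal, and the holomorphy of $V^{(0)}_{|I|+1}$ at $z=\alpha_k$ imported from the matrix-model construction \cite[Prop.~E.4]{Branahl:2020yru}, which the paper likewise treats as external input rather than deriving algebraically.
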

By (\ref{def-P}) and (\ref{def-Q}),
$Q^{(0)}_{|I|+1}(x(v),x(z);I)$ and
$P^{(0)}_{|I|+1}(x(v),x(z);I)$ have simple poles at
$x(z)=x(u_j)$ which give rise to simple poles of
$D_I \log \hat{Q}^{(0)}_1(x(v),x(z))$ 
and $\frac{1}{2}D_I\log P^{(0)}_1(x(v),x(v))$ at $x(v)=x(u_j)$.
But they cancel:
\begin{lemma}
  \label{lemma-QP-poleu}
  $D_I \log \hat{Q}^{(0)}_1(x(v),x(z))
  -\frac{1}{2}D_I\log P^{(0)}_1(x(v),x(v))$
  is holomorphic  at every $x(v)=x(u_j)$ with $u_j\in I$.
\begin{proof}
From Proposition~\ref{prop:DlogHP} at $z=v$ we get 
\[
\lim_{v\to u_j} (x(v)- x(u_j))   
D_I \log P^{(0)}_1(x(v),x(v))
= D_{I\setminus u_j} \frac{2\lambda}{x(u_j)+y(u_j)}\;.
\]
Indeed, this term with numerator $\lambda$ instead of $2\lambda$
arises from (\ref{eq:DlogP-poleu}) at $z=v$. A second copy arises
from a unique factor
$W^{(0)}_2(\hat{v}^k;u_j)$ in the first line of (\ref{eq:DlogP});
its residue is
\begin{align*}
  \Res\displaylimits_{x(v)\to x(u_j)} D_{I\setminus u_j}
  \sum_{k=0}^d \frac{\lambda W^{(0)}_2(\hat{v}^k;u_j) dx(v)}{x(v)+y(\hat{v}^k)}
&=  \Res\displaylimits_{x(v)\to x(u_j)} D_{I\setminus u_j}
  \sum_{k=0}^d \frac{\lambda W^{(0)}_2(\hat{v}^k;u_j) dx(v)}{x(v)+y(u_j)}
  \\
  &= D_{I\setminus u_j}  \frac{\lambda}{x(u_j)+y(u_j)}
\end{align*}
by (\ref{lle-g0}). Next, from (\ref{def-tQ}), (\ref{def-P}) and (\ref{def-Q}),
all at $v\leftrightarrow z$, we get
\begin{align*}
&\lim_{v\to u_j} (x(v)- x(u_j))   
\hat{Q}^{(0)}_{|I|+1}(x(z),x(v);I)
\\
&=\lambda M^{(0)}_{|I|}(x(z);u_j;I\setminus u_j)
- \frac{\lambda^2H^{(0)}_{|I|}(x(z);u_j;I\setminus u_j)}{(x(u_j)-x(z))^2}
\equiv \lambda\hat{M}^{(0)}_{|I|}(x(z);u_j;I\setminus u_j)\;.
\end{align*}
Therefore, the residue of $D_I\log \hat{Q}^{(0)}_{1}$ is 
\begin{align*}
&\lim_{v\to u_j} (x(v)- x(u_j))   
D_I\log \hat{Q}^{(0)}_{1}(x(v),x(z))
\\
&
=\sum_{l=0}^{|I|-1}(-1)^l \sum_{\substack{I_0\uplus I_1\uplus ... \uplus
    I_l=I\setminus u_j\\ I_1,...,I_l\neq \emptyset}}
\frac{\hat{M}^{(0)}_{|I_0|+1}(x(z);u_j;I_0)}{
  \hat{Q}^{(0)}_{1}(x(z),x(u_j))}
\prod_{i=1}^l \frac{\hat{Q}^{(0)}_{|I_i|+1}(x(z);u_j;I_i)}{
  \hat{Q}^{(0)}_{1}(x(z),x(u_j))}\;.
\end{align*}
In complete analogy to the proof of Lemma~\ref{lemma:DIUH} we have
\begin{align*}
&\lim_{v\to u_j} (x(v)- x(u_j))   
D_I\log \hat{Q}^{(0)}_{1}(x(v),x(z))
=
D_{I\setminus u_j}\frac{\lambda}{x(u_j)+y(u_j)}\;,
\end{align*}
which finishes the proof.
\end{proof}
\end{lemma}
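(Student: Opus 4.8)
The plan is to reduce the claim to a comparison of residues. By the partial-fraction definitions \eqref{def-P} and \eqref{def-Q}, the functions $P^{(0)}_{|I|+1}$ and $Q^{(0)}_{|I|+1}$ carry at most \emph{simple} poles at $x(v)=x(u_j)$, and hence so do $D_I\log P^{(0)}_1(x(v),x(v))$ and $D_I\log\hat Q^{(0)}_1(x(v),x(z))$. It therefore suffices to show that the two simple-pole coefficients at $x(v)=x(u_j)$ coincide.

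First I would extract the residue of $\tfrac12 D_I\log P^{(0)}_1(x(v),x(v))$ from the explicit genus-$0$ solution of Proposition~\ref{prop:DlogHP} with $z\mapsto v$. Two distinct mechanisms produce the pole: the explicit summand $F^{(0)}_{|I|+1}$ of \eqref{eq:DlogP-poleu} evaluated at $z=v$, and a single factor $W^{(0)}_2(\hat v^k;u_j)$ inside $\sum_{k=0}^{d}D_I\log(x(v)+y(\hat z^k))$ of \eqref{eq:DlogP}. Collapsing the latter contribution via the linear loop equation \eqref{lle-g0}, which turns $\sum_k W^{(0)}_2(\hat v^k;u_j)$ into a function with a single residue at $x(v)=x(u_j)$, yields a second identical copy, so that
\[
\lim_{v\to u_j}(x(v)-x(u_j))\,D_I\log P^{(0)}_1(x(v),x(v))
= D_{I\setminus u_j}\frac{2\lambda}{x(u_j)+y(u_j)} ,
\]
and after the prefactor $\tfrac12$ the residue of interest is $D_{I\setminus u_j}\frac{\lambda}{x(u_j)+y(u_j)}$.

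Next I would compute the residue of $D_I\log\hat Q^{(0)}_1(x(v),x(z))$ at $x(v)=x(u_j)$. The starting point is the residue of the building block itself: combining the definitions \eqref{def-tQ}, \eqref{def-P} and \eqref{def-Q} (with $v\leftrightarrow z$) one expects the limit to assemble into
\[
\lim_{v\to u_j}(x(v)-x(u_j))\,\hat Q^{(0)}_{|I|+1}(x(z),x(v);I)
= \lambda\,\hat M^{(0)}_{|I|}(x(z);u_j;I\setminus u_j) .
\]
Feeding this into the logarithmic-derivative expansion of $D_I\log\hat Q^{(0)}_1$ produces a sum over set partitions of $I\setminus u_j$ weighted by quotients $\hat M^{(0)}/\hat Q^{(0)}_1$ and $\hat Q^{(0)}_{|I_i|+1}/\hat Q^{(0)}_1$. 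This has exactly the combinatorial shape treated in Lemma~\ref{lemma:DIUH}, with $(U^{(0)},H^{(0)}_1)$ replaced by $(\hat M^{(0)},\hat Q^{(0)}_1)$ and the relation \eqref{DSE-UH} replaced by \eqref{DSE-tMQ}. The same telescoping collapse should then give
\[
\lim_{v\to u_j}(x(v)-x(u_j))\,D_I\log\hat Q^{(0)}_1(x(v),x(z))
= D_{I\setminus u_j}\frac{\lambda}{x(u_j)+y(u_j)} ,
\]
which matches the residue found above, so the difference has no pole at $x(v)=x(u_j)$ and is holomorphic there.

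The step I expect to be the main obstacle is the factor-of-two bookkeeping in the second paragraph: one must verify that the product term in \eqref{eq:DlogP} contributes \emph{exactly} one extra copy of the residue, and that no higher-order pole is generated when two or more $W^{(0)}_2$ factors meet. A secondary point is checking that the telescoping identity of Lemma~\ref{lemma:DIUH} transfers verbatim with $\hat Q^{(0)}_1$ in the role of $H^{(0)}_1$; this requires $\hat Q^{(0)}_1(x(z),x(u_j))\neq 0$ generically (clear from \eqref{tQ01}) and that \eqref{DSE-tMQ} shares the structural form of \eqref{DSE-UH}, so that the same iteration applies.
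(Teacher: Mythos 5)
Your proposal is correct and follows essentially the same route as the paper: the same two-copy bookkeeping for the residue of $D_I\log P^{(0)}_1(x(v),x(v))$ (one copy from $F^{(0)}_{|I|+1}$ of \eqref{eq:DlogP-poleu} at $z=v$, one from the unique factor $W^{(0)}_2(\hat v^k;u_j)$ collapsed via \eqref{lle-g0}), the same limit $\lim_{v\to u_j}(x(v)-x(u_j))\hat Q^{(0)}_{|I|+1}(x(z),x(v);I)=\lambda\hat M^{(0)}_{|I|}(x(z);u_j;I\setminus u_j)$, and the same telescoping in analogy to Lemma~\ref{lemma:DIUH} with $(\hat M^{(0)},\hat Q^{(0)}_1)$ replacing $(U^{(0)},H^{(0)}_1)$. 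The two points you flag as potential obstacles are resolved exactly as you anticipate: since $u_j$ occurs only once in any partition of $I$, at most one factor $W^{(0)}_2(\hat v^k;u_j)$ can appear, so no higher-order pole arises and the count is exactly two copies.
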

\begin{proof}[Proof of Proposition~\ref{prop:DIlogQ}]
  Observe that (\ref{def-tQ}) implies
$\lim_{v\to z}
\frac{\hat{Q}^{(0)}_1(x(v),x(z);I)}{\hat{Q}^{(0)}_1(x(v),x(z))}
=\frac{P^{(0)}_1(x(z),x(z);I)}{P^{(0)}_1(x(z),x(z))}$
so that 
$D_I\log \hat{Q}^{(0)}_{1}(x(v),x(z))$ is holomorphic at $v=z$
for $I\neq \emptyset$. 
Together with Lemma~\ref{lemma-QP-polea} and
Lemma~\ref{lemma-QP-poleu}, the only remaining candidates for poles
of
$D_I\log \hat{Q}^{(0)}_{1}(x(v),x(z))
-\frac{1}{2}D_I\log P^{(0)}_{1}(x(v),x(v))
-\frac{1}{2}D_I\log P^{(0)}_{1}(x(z),x(z))$ are:
\begin{itemize}
\item The zeros
$x(v)=x(0)$ and $x(z)=x(0)$ of
$P^{(0)}_{1}(x(v),x(v))$ and 
$P^{(0)}_{1}(x(z),x(z))$, respectively, which produce higher poles
\[
 D_I\log P^{(0)}_{1}(x(v),x(v))
 =\sum_{l=1}^{|I|} \frac{f_l(I)}{(x(v)-x(0))^l}
 +\mathcal{O}((x(v)-x(0))^0)
\]
and similarly for $D_I\log P^{(0)}_{1}(x(z),x(z))$.
Neither
$\hat{Q}^{(0)}_{1}(x(v),x(z))$ has a zero at 
$x(v)=x(0)$ or $x(z)=x(0)$ nor
$\hat{Q}^{(0)}_{|I|+1}(x(v),x(z);I)$ has a pole there so that there
is no compensation. The coefficients
$f_l(I)$ only depend on the $u_j\in I$ but not on $v,z$; they are the same for
$D_I\log P^{(0)}_{1}(x(v),x(v))$ and
$D_I\log P^{(0)}_{1}(x(z),x(z))$ because of the
symmetry of
$D_I\log \hat{Q}^{(0)}_{1}(x(v),x(z))$ in $v\leftrightarrow z$.
We can trade the $f_l(I)$ for the symbolic expressions $D^0_{I'}x(0)$.

\item The zeros
$x(v)=2x(0)-x(z)$ of
$\hat{Q}^{(0)}_{1}(x(v),x(z))$, which produce higher poles
\begin{align*}
 D_I\log \hat{Q}^{(0)}_{1}(x(v),x(z))
 &=\sum_{l=1}^{|I|} \frac{\tilde{f}_l(I)}{(x(v)+x(z)-2x(0))^l}
 \\
 &+\mathcal{O}((x(v)+x(z)-2x(0))^0)\;.
\end{align*}
Neither
$P^{(0)}_{1}(x(v),x(v))$ has a zero at 
$x(v)=2x(0)-x(z)$ nor
$P^{(0)}_{|I|+1}(x(v),x(z);I)$ has a pole there so that there
is no compensation.
Viewed as function
$x(v)\mapsto  D_I\log \hat{Q}^{(0)}_{1}(x(v),x(z))$, the
coefficients $\tilde{f}_l(I)$ are independent of $x(v)$ and then
by symmetry independent of $x(z)$. The requirement 
$\lim_{z\to v}
\frac{\hat{Q}^{(0)}_1(x(v),x(z);I)}{\hat{Q}^{(0)}_1(x(v),x(z))}
=\frac{P^{(0)}_1(x(v),x(v);I)}{P^{(0)}_1(x(v),x(v))}$
then fixes the relative factor between
$f_l(I)$ and $\tilde{f}_l(I)$ to be 
$-2^{l+1}$ as given in (\ref{DIlogQ}).
\end{itemize}
This completes the proof.
\end{proof}

\begin{example}
We have
\[
  \lim_{v\to 0} (x(v)-x(0))
  \frac{P^{(0)}_1(x(v),x(v);u)}{P^{(0)}_1(x(v),x(v))}
  =\frac{\lambda}{2} W^{(0)}_2(0;u)
\]
from Proposition~\ref{prop:DlogHP}. This identifies $D^0_ux(0)=
-\frac{\lambda}{2} W^{(0)}_2(0;u)$.
\end{example}

\section{Solution for $g=1$}

\label{sec:g1}

\subsection{The case $I=\emptyset$}

\label{sec:g1-0}

We consider the relation between the `genus insertions' into
$\log P^{(0)}_1(x(v),x(z))$ and $\log H^{(0)}_1(x(v);z)$. Equation
(\ref{DSE-hHP}), divided by $P^{(0)}_1(x(v),x(z))$, reads
\begin{align}
&\frac{\hat{P}^{(1)}_1(x(v),x(z))}{P^{(0)}_1(x(v),x(z))}
-\frac{\hat{H}^{(1)}_1(x(v);z)}{H^{(0)}_1(x(v);z)}
\label{DghPhH}
\\
&=\frac{\lambda}{x(v)+y(z)}\Big(
\frac{\partial D_w \log H^{(0)}_1(x(v);z)}{\partial x(w)} 
\Big|_{w=z}
+W^{(1)}_1(z)+\frac{\lambda}{(x(v)-x(z))^3}\Big)
\nonumber
\\
&=
\frac{\lambda W^{(1)}_1(z)}{x(v)+y(z)}
+\frac{\lambda^2}{(x(v)-x(z))^3(x(v)+y(z))}
\nonumber
\\
&+\sum_{k=1}^d
\frac{\lambda^2 \Omega^{(0)}_2(z,\hat{z}^k)}{2(x(v)+y(z))
  (x(v)+y(\hat{z}^k))}
+\sum_{j=1}^d\frac{\lambda^2 \Omega^{(0)}_2(\hat{z}^j,z)}{2(x(v)+y(\hat{z}^j))
  (x(v)+y(z))}
\nonumber
\\
&+\frac{\lambda^2}{(x(v)+y(z))}
\frac{\partial}{\partial x(w)}
\frac{1}{(x(v)-x(w))(x(z)+y(w))}\Big|_{w=z}\;,
\nonumber
\end{align}
where
$\frac{\partial D_w \log H^{(0)}_1(x(v);\hat{z}^k)}{\partial x(w)} 
\big|_{w=z}$ was provided by Proposition \ref{prop:DlogHP}.
The differentiation leads to $\Omega^{(0)}_2$ (see (\ref{eq:OmegaW}))
which we have symmetrised in
both arguments. We understand 
$\frac{\partial y(w)}{\partial x(w)}\equiv \frac{y'(w)}{x'(w)}$.

In order for 
$\frac{\hat{P}^{(1)}_1(x(v),x(z))}{P^{(0)}_1(x(v),x(z))}$ to be a
rational function of $x(z)$, we need that
\begin{align}
  \frac{\hat{P}^{(1)}_1(x(v),x(z))}{P^{(0)}_1(x(v),x(z))}
&=
\sum_{k=0}^d \frac{\lambda W^{(1)}_1(\hat{z}^k)}{x(v)+y(\hat{z}^k)}
+\frac{\lambda^2}{2} \sum_{\substack{k,j=0 \\ k\neq j}}^d
\frac{\Omega^{(0)}_2(\hat{z}^j,\hat{z}^k)}{(x(v)+y(\hat{z}^j))
  (x(v)+y(\hat{z}^k))}
\nonumber
\\
&+\sum_{k=0}^d \frac{\lambda^2}{(x(v)+y(\hat{z}^k))}
\frac{\partial}{\partial x(w)}
\frac{1}{(x(v)-x(w))(x(z)+y(w))}\Big|_{w=\hat{z}^k}
\nonumber
\\
&
+\sum_{k=0}^d \frac{\lambda^2}{(x(v)-x(z))^3(x(v)+y(\hat{z}^k))}
+ F^{(1)}_1(x(v);x(z))
\label{DghP}
\end{align}
for some rational function $F^{(1)}_1$,
and that almost the same formula holds for
$\frac{\hat{H}^{(1)}_1(x(v);z)}{H^{(0)}_1(x(v);z)}$,
only with preimage sum $\sum_{k=1}^d$ instead of $\sum_{k=0}^d$.
By (\ref{def-hH}), (\ref{def-H}) and (\ref{def-M}), the function
$\frac{\hat{H}^{(1)}_1(x(v);z)}{H^{(0)}_1(x(v);z)}$ is holomorphic
at $x(v)+y(z)=0$ so that $F^{(1)}_1$ must be holomorphic at 
$x(v)+y(\hat{z}^k)=0$. It follows from  (\ref{def-hP}), (\ref{def-P})
and (\ref{def-Q}) that
the only other poles of
$x(v)\mapsto \frac{\hat{P}^{(1)}_1(x(v),x(z))}{P^{(0)}_1(x(v),x(z))}$ are 
at $x(v)=x(z)$ of order at most 2; more precisely
\begin{align}
\frac{\hat{P}^{(1)}_1(x(v),x(z))}{P^{(0)}_1(x(v),x(z))}
&= 
\frac{\lambda}{(x(v)-x(z))^2}
\frac{Q^{(0)}_{1}(x(z),x(z))}{P^{(0)}_{1}(x(z),x(z))}
\\
&+
\frac{\lambda}{(x(v)-x(z))}
\frac{\partial}{\partial x(w)}
\frac{Q^{(0)}_{1}(x(w),x(z))}{P^{(0)}_{1}(x(w),x(z))}\Big|_{w=z}
+\text{regular},
\nonumber
\end{align}
where `regular' means $\mathcal{O}((x(v)-x(z))^0)$. To achieve this
we necessarily need
\begin{align*}
&F^{(1)}_1 (x(v);x(z))=\sum_{a=1}^3 \frac{F^{(1)a}_1(x(z))}{(x(v)-x(z))^a}\;,
\\
&F^{(1)3}_1(x(z)) = 
-\sum_{k=0}^d \frac{\lambda^2}{x(z)+y(\hat{z}^k)}\;,\qquad
F^{(1)2}_1(x(z)) = 
\frac{\lambda Q^{(0)}_{1}(x(z),x(z))}{P^{(0)}_{1}(x(z),x(z))}\;,
\\
&F^{(1)1}_1(x(z)) = \frac{\partial}{\partial x(w)}
\frac{\lambda Q^{(0)}_{1}(x(w),x(z))}{P^{(0)}_{1}(x(w),x(z))}\Big|_{w=z}
-\frac{1}{2} \frac{\partial}{\partial x(w)}
\sum_{k=0}^d \frac{\lambda^2}{(x(z)+y(w))^2}\Big|_{w=\hat{z}^k}.
\end{align*}
From (\ref{tQ01}) and (\ref{def-tQ}) we obtain a representation
\begin{align*}
&\frac{\lambda Q^{(0)}_1(x(w),x(z))}{P^{(0)}_1(x(w),x(z))}
=\frac{\lambda^2}{(x(w)-x(z))^2}\bigg(1
-
\sqrt{1+\frac{(x(z)-x(w))^2}{4(x(w)-x(0))(x(z)-x(0))}}\\
&
\hspace*{4cm} \times \exp\Big(\frac{1}{2}\log 
\frac{P^{(0)}_1(x(w),x(w))}{P^{(0)}_1(x(z),x(z))}
-\log \frac{P^{(0)}_1(x(w),x(z))}{P^{(0)}_1(x(z),x(z))}
\Big)\bigg)
\\
&=-\frac{\lambda^2}{8(x(z)-x(0))^2} - \frac{\lambda^2}{2}
\frac{\partial^2 \log P^{(0)}_1(x(w),x(z))}{\partial x(w)\partial x(z)}
\Big|_{w=z}
\\
&+(x(w)-x(z))\Big(
\frac{\lambda^2}{8(x(z)-x(0))^3} - \frac{\lambda^2}{2}
\frac{\partial^3 \log P^{(0)}_1(x(w),x(z))}{\partial (x(w))^2\partial x(z)}
\Big|_{w=z}\Big)\\
&+\mathcal{O}((x(w)-x(z))^2)\;.
\end{align*}
Inserting the resulting Taylor expansion into
$F^{(1)1}_1(x(z))$ and $F^{(1)2}_1(x(z))$ leads to:
\begin{proposition}
\label{prop:hP11}
\begin{align}
&\frac{\hat{P}^{(1)}_1(x(v),x(z))}{P^{(0)}_1(x(v),x(z))}
\label{hP11P01}
\\
&=\sum_{k=0}^d \frac{\lambda W^{(1)}_1(\hat{z}^k)}{x(v)+y(\hat{z}^k)}
+\frac{\lambda^2}{2}\sum_{\substack{k,j=0 \\ k\neq j}}^d
\frac{  \Omega^{(0)}_2(\hat{z}^j,\hat{z}^k)}{
  (x(v)+y(\hat{z}^j))(x(v)+y(\hat{z}^k))}
\nonumber
\\
&-\frac{\lambda^2}{(x(v)-x(z))^3}
\Big( \frac{\partial \log P^{(0)}_1(x(v),x(w))}{\partial x(w)}
- \frac{\partial \log P^{(0)}_1(x(z),x(w))}{\partial x(w)}
\Big)\Big|_{w=z}
\nonumber
\\
&+\frac{\lambda^2}{2(x(v)-x(z))^2}
\frac{\partial^2 \log P^{(0)}_1(x(z),x(w))}{\partial x(z) \partial x(w)}
\Big|_{w=z}
\nonumber
\\
&-\frac{\lambda^2}{8(x(v)-x(z))^2(x(z)-x(0))^2}
 +\frac{\lambda^2}{8(x(v)-x(z))(x(z)-x(0))^3}\;.
 \nonumber
\end{align}
\end{proposition}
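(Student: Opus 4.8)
The plan is to take the ansatz \eqref{DghP} as the starting point and to pin down the unknown rational function $F^{(1)}_1(x(v);x(z))$ entirely from the pole structure in $x(v)$. Rationality of $\hat{P}^{(1)}_1/P^{(0)}_1$ in $x(z)$ already forces the symmetrisation over the preimages $\hat{z}^k$ visible in the first two lines of \eqref{DghP}; holomorphicity of $\hat{H}^{(1)}_1/H^{(0)}_1$ at $x(v)+y(z)=0$, read off from \eqref{def-hH}, \eqref{def-H} and \eqref{def-M}, forbids poles of $F^{(1)}_1$ at the zeros $x(v)+y(\hat{z}^k)=0$; and the partial fractions \eqref{def-hP}, \eqref{def-P}, \eqref{def-Q} show that the only remaining poles in $x(v)$ sit at $x(v)=x(z)$ and are of order at most three. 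Hence $F^{(1)}_1=\sum_{a=1}^3 F^{(1)a}_1(x(z))/(x(v)-x(z))^a$, and matching the first two Laurent coefficients at $x(v)=x(z)$ against the double pole supplied by $Q^{(0)}_1/P^{(0)}_1$ fixes the $F^{(1)a}_1$ in the form displayed just before the proposition.

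The next step is to turn those $F^{(1)a}_1$ into the closed form of the assertion by expanding $\lambda Q^{(0)}_1(x(w),x(z))/P^{(0)}_1(x(w),x(z))$ about $x(w)=x(z)$. Here I would feed in the explicit formula \eqref{tQ01} for $\hat{Q}^{(0)}_1$ together with the relation \eqref{def-tQ}, which rewrites $Q^{(0)}_1$ as $\hat{Q}^{(0)}_1+\tfrac{\lambda}{(x(v)-x(z))^2}P^{(0)}_1$; dividing by $P^{(0)}_1$ produces the square-root-times-exponential representation already recorded above. Taylor expanding it yields a constant term $-\tfrac{\lambda^2}{8(x(z)-x(0))^2}-\tfrac{\lambda^2}{2}\,\partial_{x(w)}\partial_{x(z)}\log P^{(0)}_1(x(w),x(z))|_{w=z}$ and a linear coefficient of comparable shape, which are precisely what $F^{(1)2}_1$ and, after subtracting the $\tfrac{1}{2}\,\partial_{x(w)}\sum_k\lambda^2/(x(z)+y(w))^2$ correction, $F^{(1)1}_1$ demand.

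Finally I would reassemble \eqref{DghP} with these values. The $W^{(1)}_1$-sum and the $\Omega^{(0)}_2$-double sum pass through untouched into the first two lines of \eqref{hP11P01}. The order-three preimage sum $\sum_k \lambda^2/((x(v)-x(z))^3(x(v)+y(\hat{z}^k)))$ combines with $F^{(1)3}_1$, while the derivative sum $\sum_k \tfrac{\lambda^2}{x(v)+y(\hat{z}^k)}\,\partial_{x(w)}[(x(v)-x(w))^{-1}(x(z)+y(w))^{-1}]|_{w=\hat{z}^k}$ feeds the remaining poles; using the product representation \eqref{P01} to recognise $\sum_k (x(v)+y(\hat{z}^k))^{-1}\partial_{x(w)}y(\hat{z}^k)$ as $\partial_{x(w)}\log P^{(0)}_1(x(v),x(w))$, all of these collapse onto the compact $\log P^{(0)}_1$-derivative terms and the explicit $(x(z)-x(0))$ powers of \eqref{hP11P01}.

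I expect the main obstacle to be exactly this reorganisation: showing that the scattered preimage sums of \eqref{DghP}, each carrying chain-rule factors $\partial_{x(w)}y(\hat{z}^k)$, recombine precisely into logarithmic derivatives of $P^{(0)}_1$, and that the $\tfrac{1}{8}$-coefficients together with the $(x(z)-x(0))^{-2}$ and $(x(z)-x(0))^{-3}$ terms emerge with the correct signs from the square-root expansion. Verifying the cancellation of spurious higher-order poles at $x(v)=x(z)$ (the bracketed difference in the order-three term vanishes on the diagonal, lowering the effective pole order) is the delicate bookkeeping; everything else is routine once \eqref{tQ01} and \eqref{def-tQ} are in hand.
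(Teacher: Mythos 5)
Your proposal is correct and coincides with the paper's own derivation: the text preceding Proposition~\ref{prop:hP11} establishes exactly this chain --- the ansatz \eqref{DghP} forced by rationality in $x(z)$, the pole analysis in $x(v)$ fixing $F^{(1)}_1(x(v);x(z))=\sum_{a=1}^3 F^{(1)a}_1(x(z))/(x(v)-x(z))^a$ with the stated coefficients, and the Taylor expansion of $\lambda Q^{(0)}_1(x(w),x(z))/P^{(0)}_1(x(w),x(z))$ about the diagonal via \eqref{tQ01} and \eqref{def-tQ}. The one point to state carefully is that the pole of $\hat{P}^{(1)}_1/P^{(0)}_1$ itself at $x(v)=x(z)$ is of order at most two, and it is $F^{(1)}_1$ that carries the order-three part needed to cancel the explicit third-order term in \eqref{DghP} --- a distinction your final paragraph in fact acknowledges when noting that the bracketed difference in the order-three term of \eqref{hP11P01} vanishes on the diagonal.
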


\subsection{The case $I\neq \emptyset$}

\label{sec:g1-I}

Our goal is to determine
\begin{align*}
&D_I\frac{\hat{P}^{(1)}_1(x(v),x(z))}{P^{(0)}_1(x(v),x(z))}
\\
&=\sum_{l=0}^{|I|}(-1)^l \sum_{\substack{I_0\uplus I_1 \uplus ... \uplus I_l=I
  \\ I_1,...,I_l \neq \emptyset}}
\frac{\hat{P}^{(1)}_{|I_0|+1}(x(v),x(z);I_0)}{P^{(0)}_1(x(v),x(z))}
\prod_{i=1}^l \frac{P^{(0)}_{|I_i|+1}(x(v),x(z);I_i)}{P^{(0)}_1(x(v),x(z))}
\end{align*}
in parallel with $D_I\frac{\hat{H}^{(1)}_1(x(v);z)}{H^{(0)}_1(x(v);z)}$.
The Dyson-Schwinger equations (\ref{DSE-hHP})
for $g\leq 1$ imply:
\begin{lemma}
\begin{align}
&D_I\frac{\hat{P}^{(1)}_1(x(v),x(z))}{P^{(0)}_1(x(v),x(z))}
-D_I\frac{\hat{H}^{(1)}_1(x(v);z)}{H^{(0)}_1(x(v);z)}
\label{DIP-DIH-g1}
\\
&= 
\sum_{I_1\uplus I_2=I}
D_{I_1} \Big(\frac{\lambda}{x(v)+y(z)}\Big)
\Big\{
\frac{\partial D_{I_2\cup w} \log H^{(0)}_1(x(v);z)}{\partial x(w)}
\Big|_{w=z}
\nonumber
\\
&\hspace*{6cm}
+W^{(1)}_{|I_2|+1}(z;I_2)+\frac{\lambda \delta_{|I_2|,\emptyset}}{(x(v)-x(z))^3}
\Big\}\;.
\nonumber
\end{align}
\begin{proof}
By induction in $|I|$. The case $|I|=0$ is
(\ref{DSE-hHP}) for $g=1$. Otherwise
\begin{align*}
  &  D_I\frac{\hat{P}^{(1)}_1(x(v),x(z))}{P^{(0)}_1(x(v),x(z))}
  - D_I\frac{\hat{H}^{(1)}_1(x(v);z)}{H^{(0)}_1(x(v);z)}
\\
&= \frac{\hat{P}^{(1)}_{|I|+1}(x(v),x(z);I)}{P^{(0)}_1(x(v),x(z))}
-\sum_{\substack{I_1\uplus I_2=I \\ I_2\neq \emptyset}}
  D_{I_1}\Big(\frac{\hat{P}^{(1)}_1(x(v),x(z))}{P^{(0)}_1(x(v),x(z))}\Big)
\cdot  \frac{P^{(0)}_{|I_2|+1}(x(v),x(z);I_2)}{P^{(0)}_1(x(v),x(z))}
\\
&-\frac{\hat{H}^{(1)}_{|I|+1}(x(v);z;I)}{H^{(0)}_1(x(v);z)}  
+\sum_{\substack{I_1\uplus I_2=I \\ I_2\neq \emptyset}}
  D_{I_1}\Big(\frac{\hat{H}^{(1)}_1(x(v);z)}{H^{(0)}_1(x(v);z)}\Big)\cdot
  \frac{H^{(0)}_{|I_2|+1}(x(v);z;I_2)}{H^{(0)}_1(x(v);z)}  
\\
&=\sum_{\substack{I_1\uplus I_2=I\\ I_1 \neq \emptyset}}
\frac{\lambda W^{(0)}_{|I_1|+1}(z;I_1)}{x(v)+y(z)}
\frac{\hat{H}^{(1)}_{|I_2|+1}(x(v);z;I_2)}{H^{(0)}_1(x(v);z)}
\tag{*}
\\
&+\sum_{I_1\uplus I_2=I}
\frac{\lambda (W^{(1)}_{|I_1|+1}(z;I_1)+\frac{\lambda\delta_{|I_1|,0}}{(x(v)-x(z))^3})}{x(v)+y(z)}
\frac{H^{(0)}_{|I_2|+1}(x(v);z;I_2)}{H^{(0)}_1(x(v);z)}  
\tag{\ddag}
\\
&+\frac{\lambda}{x(v)+y(z)}
\frac{\partial}{\partial x(w)}  \frac{H^{(0)}_{|I|+2}(x(v);z;I\cup w)}{
  H^{(0)}_1(x(v);z)}  \Big|_{w=z}
\tag{\S}
\\
&-\sum_{\substack{I_1\uplus I_2 \uplus I_3=I \\ I_3\neq \emptyset}}
\Big\{
\frac{\partial
D_{I_2\cup w} \log H^{(0)}_1(x(v);z)}{\partial x(w)} \Big|_{w=z}
+W^{(1)}_{|I_1|+1}(z;I_1)+\frac{\lambda\delta_{|I_1|,0}}{(x(v)-x(z))^3}
\Big\}\nonumber
\\[-1ex]
&
\qquad\times D_{I_2} \Big(\frac{\lambda}{x(v)+y(z)}\Big)\cdot 
  \frac{P^{(0)}_{|I_3|+1}(x(v),x(z);I_3)}{P^{(0)}_1(x(v),x(z))}
  \tag{\dag}
  \\
  &-\sum_{\substack{I_1\uplus I_2\uplus I_3=I \\ I_2\neq \emptyset}}
  D_{I_1}\Big(\frac{\hat{H}^{(1)}_1(x(v);z)}{H^{(0)}_1(x(v);z)}\Big)
\cdot \frac{\lambda W^{(0)}_{|I_2|+1}(z;I_2)}{x(v)+y(z)}
\frac{H^{(0)}_{|I_3|+1}(x(v);z;I_3)}{H^{(0)}_1(x(v);z)}\;,
\tag{**}
\end{align*}
where the induction hypothesis was used.
The lines (*) and (**) cancel when distinguishing $I_3=\emptyset$ and
$I_3\neq \emptyset$.
Take (\ref{DSE-HPquot}), multiplied by
$\frac{\lambda}{x(v)+y(z)}$:
\begin{align*}
&\frac{\lambda P^{(0)}_{|\tilde{I}|+1}(x(v),x(z);\tilde{I})}{
      (x(v)+y(z))P^{(0)}_1(x(v),x(z))}
\\
&= 
\frac{\lambda H^{(0)}_{|\tilde{I}|+1}(x(v);z;\tilde{I})}{
  (x(v)+y(z))H^{(0)}_1(x(v);z)}
+\sum_{\substack{\tilde{I}'\uplus \tilde{I}''=\tilde{I}\\ \tilde{I}'\neq \emptyset}}  
\frac{\lambda W^{(0)}_{|\tilde{I}'|+1}(z;\tilde{I}')}{x(v)+y(z)}
\frac{\lambda H^{(0)}_{|\tilde{I}''|+1}(x(v);z;\tilde{I}'')}{
  (x(v)+y(z))H^{(0)}_1(x(v);z)}\;.
\end{align*}  
For $\tilde{I}''\neq \emptyset$, take this
equation for $\tilde{I}\mapsto\tilde{I}''$,
multiply by
$\frac{(-\lambda) W^{(0)}_{|\tilde{I}'|+1}(z;\tilde{I}')}{x(v)+y(z)}$
and sum over
$\tilde{I}'\uplus \tilde{I}''=\tilde{I}$ with $\tilde{I}'\neq \emptyset$. 
Repeat until all products of 
$\frac{(-\lambda) W^{(0)}_{|I_i|+1}(z;I_i)}{x(v)+y(z)}$
with $\frac{\lambda H^{(0)}_{|I_0|+1}(x(v);z;I_0)}{
  (x(v)+y(z))H^{(0)}_1(x(v);z)}$ are removed. The result is
\begin{align}
&\sum_{\substack{I_2\uplus I_3=I'\\ I_3\neq \emptyset}}  
D_{I_2} \Big(\frac{\lambda}{x(v)+y(z)}\Big)
\frac{P^{(0)}_{|I_3|+1}(x(v),x(z);I_3)}{P^{(0)}_1(x(v),x(z))}
\label{DfracPP}
\\
&= \frac{\lambda}{x(v)+y(z)}
\frac{H^{(0)}_{|I'|+1}(x(v);z;I')}{H^{(0)}_1(x(v);z)}
+\sum_{\substack{I_2\uplus I_3=I'\\ I_3\neq \emptyset}}  
\frac{\lambda W^{(0)}_{|I_3|+1}(z;I_3)}{x(v)+y(z)}
D_{I_2} \frac{\lambda}{x(v)+y(z)}\;,
\nonumber
\end{align}
which we use in $(\dag)$. Multiplied with
$(W^{(1)}_{|I_1|+1}(z;I_1)+\frac{\lambda\delta_{|I_1|,0}}{(x(v)-x(z))^3})$,
this cancels all terms with $I_2\neq \emptyset$ in $(\ddag)$ and
completes the case $I_2=\emptyset$ in $(\ddag)$ to the last line of
the assertion (\ref{DIP-DIH-g1}). Similarly,
expanding
\begin{align*}
&D_{I\cup w} \log H^{(0)}_1(x(v);z)
\\
&=\sum_{l=0}^{|I|}(-1)^l
\sum_{\substack{I_0\uplus I_1\uplus ... \uplus I_l=I
    \\ I_1,...,I_l\neq \emptyset}}
\frac{H^{(0)}_{|I_0|+2}(x(v);z;I_0\cup w)}{H^{(0)}_1(x(v);z)}
\prod_{i=1}^l  
\frac{H^{(0)}_{|I_i|+1}(x(v);z;I_i)}{H^{(0)}_1(x(v);z)}
\end{align*}
we get for the product with the first term on the rhs of (\ref{DfracPP})
\begin{align*}
&\sum_{\substack{I_2\cup I'=I \\ I'\neq \emptyset}}
D_{I\cup w} \log H^{(0)}_1(x(v);z)
\frac{\lambda}{x(v)+y(z)}
\frac{H^{(0)}_{|I'|+1}(x(v);z;I')}{H^{(0)}_1(x(v);z)}
\\
&= \frac{\lambda}{x(v)+y(z)}
\Big(
\frac{H^{(0)}_{|I|+2}(x(v);z;I\cup w)}{H^{(0)}_1(x(v);z)}
-D_{I\cup w} \log H^{(0)}_1(x(v);z)
\Big).
\end{align*}
The first term on the rhs cancels the line (\S), and the second
term completes the final term in (\ref{DfracPP})
to the second line of the assertion
(\ref{DIP-DIH-g1}).
\end{proof}
\end{lemma}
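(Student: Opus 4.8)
The plan is to induct on $n=|I|$, converting the leading functions through the genus-one Dyson--Schwinger equation \eqref{DSE-hHP} and folding the remainder into closed form with the genus-zero relations. For the base case $n=0$ I would divide \eqref{DSE-hHP} at $g=1$, $I=\emptyset$ by $P^{(0)}_1(x(v),x(z))=(x(v)+y(z))\,H^{(0)}_1(x(v);z)$, the factorisation being read off from \eqref{H01} and \eqref{P01}. The only admissible split in the convolution sum is $(g_1,I_1)=(1,\emptyset)$; its hatting correction supplies the $\frac{\lambda}{(x(v)-x(z))^3}$ term, the linear term returns $\hat{H}^{(1)}_1/H^{(0)}_1$ and cancels against the subtracted ratio, and rewriting $\frac{1}{H^{(0)}_1}\partial_{x(s)}H^{(0)}_2=\partial_{x(s)}(D_s\log H^{(0)}_1)$ turns the genus-reduction term into the $\partial_{x(w)}D_w\log H^{(0)}_1|_{w=z}$ contribution. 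This is exactly \eqref{DghPhH}.

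For the inductive step I would first record the peeling identity obtained by applying $D_I$ to $\hat{P}^{(1)}_1=\frac{\hat{P}^{(1)}_1}{P^{(0)}_1}P^{(0)}_1$ and dividing by $P^{(0)}_1$,
\[
D_I\frac{\hat{P}^{(1)}_1}{P^{(0)}_1}=\frac{\hat{P}^{(1)}_{|I|+1}}{P^{(0)}_1}-\sum_{\substack{I_1\uplus I_2=I\\ I_2\neq\emptyset}}D_{I_1}\Big(\frac{\hat{P}^{(1)}_1}{P^{(0)}_1}\Big)\frac{P^{(0)}_{|I_2|+1}}{P^{(0)}_1},
\]
together with its $H$-analogue. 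Into the leading ratios $\hat{P}^{(1)}_{|I|+1}/P^{(0)}_1$ and $\hat{H}^{(1)}_{|I|+1}/H^{(0)}_1$ I would substitute \eqref{DSE-hHP} at $g=1$, and into each lower-order $D_{I_1}(\hat{P}^{(1)}_1/P^{(0)}_1)$ I would insert the induction hypothesis, replacing it by $D_{I_1}(\hat{H}^{(1)}_1/H^{(0)}_1)$ plus the asserted right-hand side at level $|I_1|$. This splits the difference into five families: the $g_1=0$ part of the convolution, the $g_1=1$ part, the genus-reduction derivative, the induction residue coming from the $P$-sum, and the leftover $D_{I_1}(\hat{H}^{(1)}_1/H^{(0)}_1)$ pieces of the combined $P$- and $H$-sums.

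Two of these families cancel: writing $\hat{H}^{(1)}_{|S|+1}/H^{(0)}_1=\sum_{S_1\uplus S_2=S}D_{S_1}(\hat{H}^{(1)}_1/H^{(0)}_1)\,H^{(0)}_{|S_2|+1}/H^{(0)}_1$ (Leibniz applied to $\hat{H}^{(1)}_1=\frac{\hat{H}^{(1)}_1}{H^{(0)}_1}H^{(0)}_1$) and substituting \eqref{DSE-HPquot} for $H^{(0)}_{|I_2|+1}/H^{(0)}_1-P^{(0)}_{|I_2|+1}/P^{(0)}_1$, the $g_1=0$ family is matched term by term against the combined-sum leftover. For the remaining three families I would derive, by iterating \eqref{DSE-HPquot} multiplied by $\frac{\lambda}{x(v)+y(z)}$, the telescoping identity
\[
\sum_{\substack{I_1'\uplus I_2'=J\\ I_2'\neq\emptyset}}D_{I_1'}\Big(\frac{\lambda}{x(v)+y(z)}\Big)\frac{P^{(0)}_{|I_2'|+1}}{P^{(0)}_1}=\frac{\lambda}{x(v)+y(z)}\frac{H^{(0)}_{|J|+1}}{H^{(0)}_1}+\sum_{\substack{I_1'\uplus I_2'=J\\ I_2'\neq\emptyset}}\frac{\lambda W^{(0)}_{|I_2'|+1}(z;I_2')}{x(v)+y(z)}D_{I_1'}\frac{\lambda}{x(v)+y(z)}.
\]
Multiplying this by $W^{(1)}_{|I_1|+1}(z;I_1)+\frac{\lambda\delta_{|I_1|,0}}{(x(v)-x(z))^3}$ and summing cancels the $I_2\neq\emptyset$ part of the $g_1=1$ family against the $P$-sum residue and promotes the $I_2=\emptyset$ part to the $W^{(1)}$-plus-$\delta$ term of the asserted right-hand side. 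Finally, the partition expansion of $D_{I\cup w}\log H^{(0)}_1$ into products of $H^{(0)}_{|I_j|+1}/H^{(0)}_1$ lets the genus-reduction derivative and the first term of the telescoped identity assemble into $\sum_{I_1\uplus I_2=I}D_{I_1}(\frac{\lambda}{x(v)+y(z)})\,\partial_{x(w)}D_{I_2\cup w}\log H^{(0)}_1|_{w=z}$, completing the claim.

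The main obstacle is the three-fold partition bookkeeping rather than any single step: each manipulation reshuffles sums over $I_1\uplus I_2\uplus I_3=I$, and one must track exactly which blocks may be empty so that the telescoping identity and the $\log$-expansion dovetail with no stray terms. Establishing that telescoped form of \eqref{DSE-HPquot} --- effectively resumming the genus-zero ratio relation against the loop-insertion expansion \eqref{DIlogRvz} of $\frac{1}{x(v)+y(z)}$ --- and aligning its empty-block conventions with those of the $g_1=1$ family is where the delicate cancellation lives.
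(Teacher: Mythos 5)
Your proposal is correct and follows the paper's own proof essentially step for step: the same induction on $|I|$ with the Leibniz peeling identity for $D_I$ of the quotients, the base case obtained by dividing \eqref{DSE-hHP} at $g=1$ by the factorisation $P^{(0)}_1(x(v),x(z))=(x(v)+y(z))H^{(0)}_1(x(v);z)$, the same cancellation of the $g_1=0$ convolution family against the leftover $D_{I_1}\big(\hat{H}^{(1)}_1/H^{(0)}_1\big)$ terms via \eqref{DSE-HPquot}, the same telescoped identity \eqref{DfracPP} produced by iterating \eqref{DSE-HPquot} against $\frac{\lambda}{x(v)+y(z)}$, and the same final assembly of the genus-reduction term with the partition expansion of $D_{I\cup w}\log H^{(0)}_1(x(v);z)$. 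All empty-block bookkeeping you flag (which $I_i$ may be empty in each family) matches the paper's treatment, so there is no gap.
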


We conclude from Proposition~\ref{prop:DlogHP}:
\begin{align}
&\frac{\partial}{\partial x(w)}
D_{I\cup w} \log H^{(0)}_1(x(v);z)\Big|_{w=z}
\label{DIP-DIH-g1a}
\\
&=\sum_{k=1}^d 
\sum_{l=0}^{|I|} (-1)^l
  \sum_{  \substack{I_0\uplus I_1\uplus ...\uplus I_l=I\\ 
      I_1, ... I_l\neq \emptyset}}
\frac{\lambda D_{I_0} \Omega^{(0)}_2(\hat{z}^k,z)}{x(v)+y(\hat{z}^k)}
  \prod_{j=1}^l \frac{\lambda W^{(0)}_{|I_j|+1}(\hat{z}^k;I_j)}{x(v)+y(\hat{z}^k)}
  \nonumber
  \\
  &
- \sum_{j=1}^{|I|}
D_{I\setminus u_j}\frac{\lambda^2 \Omega^{(0)}_2(z,u_j)}{
  (x(v)-x(u_j))(x(z)+y(u_j))^2}
+
D_I\frac{\lambda}{ (x(v)-x(z))^2(x(z)+y(z))}
  \nonumber
\\
&+\frac{\partial}{\partial x(w)}
D_I\frac{\lambda}{ (x(v)-x(z))(x(z)+y(w))}\Big|_{w=z}\;.
  \nonumber
\end{align}
We insert (\ref{DIP-DIH-g1a}) into (\ref{DIP-DIH-g1}) and get with
the derivation property of $D_I$: 
\begin{align*}
&D_I\frac{\hat{P}^{(1)}_1(x(v),x(z))}{P^{(0)}_1(x(v),x(z))}
-D_I\frac{\hat{H}^{(1)}_1(x(v);z)}{H^{(0)}_1(x(v);z)}
\\
&=
D_I \frac{\lambda W^{(1)}_1(z)}{x(v)+y(z)}
+\sum_{k=1}^d D_I
\frac{\lambda^2 \Omega^{(0)}_2(\hat{z}^k,z)}{(x(v)+y(\hat{z}^k))
(x(v)+y(z))}
\\
&- \sum_{j=1}^{|I|}\frac{\lambda^3}{(x(v)-x(u_j))}
D_{I\setminus u_j}\frac{\Omega^{(0)}_2(z,u_j)}{
  (x(v)+y(z))  (x(z)+y(u_j))^2}
\\
&+ \frac{\lambda^2}{(x(v)-x(z))^3}
D_I \frac{1}{(x(z)+y(z))}
\\
&+ \frac{\lambda^2}{(x(v)-x(z))}
\frac{\partial}{\partial x(w)} D_I \frac{1}{(x(v)+y(z))(x(z)+y(w))}
\Big|_{w=z}\;.
\end{align*}
As before, in order for
$D_I\frac{\hat{P}^{(1)}_1(x(v),x(z))}{P^{(0)}_1(x(v),x(z))}$
to be a rational function of $x(z)$, we need 
\begin{subequations}
  \begin{align}
D_I\frac{\hat{P}^{(1)}_1(x(v),x(z))}{P^{(0)}_1(x(v),x(z))}
  &= K^{(1)}_{0,|I|+1}(x(v);z;I) + F^{(1)}_{|I|+1}(x(v);x(z);I)\;,
\label{DIP-g1b}
\\
D_I\frac{\hat{H}^{(1)}_1(x(v);z)}{H^{(0)}_1(x(v);z)}
&=K^{(1)}_{1,|I|+1}(x(v);z;I) + F^{(1)}_{|I|+1}(x(v);x(z);I)\;,
\label{DIH-g1b}
\end{align}
\end{subequations}
where  (note the difference in the lower subscript $A$
between  (\ref{DIP-g1b}) and (\ref{DIH-g1b}))
\begin{align}
&  K^{(1)}_{A,|I|+1}(x(v);z;I) 
\label{K1AI}
\\
&:=
\sum_{k=A}^d D_I \frac{\lambda W^{(1)}_1(\hat{z}^k)}{x(v)+y(\hat{z}^k)}
+\frac{\lambda^2}{2} \sum_{\substack{j,k=A\\ j\neq k}}^d D_I
\frac{\Omega^{(0)}_2(\hat{z}^j,\hat{z}^k)}{(x(v)+y(\hat{z}^j))
(x(v)+y(\hat{z}^k))}
\nonumber
\\
&- \sum_{j=1}^{|I|}\frac{\lambda^3}{(x(v)-x(u_j))}
\sum_{k=A}^d D_{I\setminus u_j}\frac{\Omega^{(0)}_2(\hat{z}^k,u_j)}{
  (x(v)+y(\hat{z}^k))  (x(z)+y(u_j))^2}
\nonumber
\\
&+ \frac{\lambda^2}{(x(v)-x(z))^3}
D_I \sum_{k=A}^d \frac{1}{(x(z)+y(\hat{z}^k))}
\nonumber
\\
&+ \frac{\lambda^2}{(x(v)-x(z))}\sum_{k=A}^d 
\frac{\partial}{\partial x(w)} D_I \frac{1}{(x(v)+y(\hat{z}^k))
  (x(z)+y(w))}
\Big|_{w=\hat{z}^k}
\nonumber
\end{align}
and $F^{(1)}_{|I|+1}(x(v);x(z);I)$ is some rational function
(the same in (\ref{DIP-g1b}) and (\ref{DIH-g1b}))
in both $x(v)$ and $x(z)$. In fact, also 
$K^{(1)}_{0,|I|+1}(x(v);z;I)$ is rational in both $x(v)$ and $x(z)$.
Since $D_I\frac{\hat{H}^{(1)}_1(x(v),x(z))}{
  H^{(0)}_1(x(v),x(z))}$ is holomorphic at $x(v)+y(z)=0$, the rational function
$x(v)\mapsto F^{(1)}_{|I|+1}$ can only have poles at $x(v)=x(z)$ and
at $x(v)=x(u_j)$ with $u_j\in I$.
With the behavior resulting from (\ref{def-H}) and (\ref{def-hH}),
\begin{align}
  \frac{\hat{H}^{(g)}_{|I|+1}(x(v);z;I)}{H^{(0)}_1(x(v);z)}
  &= \sum_{j=1}^{|I|}
  \frac{ \lambda}{(x(v)-x(u_j))}
  \frac{\hat{U}^{(g)}(z,u_j;I\setminus u_j)}{H^{(0)}_1(x(u_j);z)}
  +\mathcal{O}((x(v)-x(u_j))^0)
  \label{hH-pole-u}
\end{align}
and (\ref{DIP-DIH-g1a}) we see that
$F^{(1)}_{|I|+1}$ has first-order poles at $x(v)=x(u_j)$.
We determine them in 
Proposition~\ref{prop:hatFj} below. 

Near $x(v)=x(z)$ we have in (\ref{K1AI})
\begin{align}
  &K^{(1)}_{0,|I|+1}(x(v);z;I)
\label{K1AI-vz}  
\\
&=\frac{\lambda^2}{(x(v)-x(z))}\sum_{k=0}^d 
\frac{\partial}{\partial x(w)} D_I \frac{1}{(x(z)+y(\hat{z}^k))
  (x(z)+y(w))}
\Big|_{w=\hat{z}^k}
\nonumber
\\
&+ \frac{\lambda^2}{(x(v)-x(z))^3}
D_I \sum_{k=0}^d \frac{1}{(x(z)+y(\hat{z}^k))}
+\mathcal{O}((x(v)-x(z))^0)
\nonumber
\\
&=-\frac{\lambda^2}{(x(v)-x(z))}
\frac{1}{2} \frac{\partial^3}{\partial (x(z))^2
\partial x(w)}  \sum_{k=0}^d 
  D_I \log (x(z)+y(\hat{w}^k))
\Big|_{w=z}
\nonumber
\\
&+ \frac{\lambda^2}{(x(v)-x(z))^3}
D_I \sum_{k=0}^d \frac{1}{(x(z)+y(\hat{z}^k))}
+\mathcal{O}((x(v)-x(z))^0)
\nonumber
\\
&=-\frac{\lambda^2}{(x(v)-x(z))}
\Big(\frac{1}{2} \frac{\partial^3
(D_I \log P^{(0)}_1(x(z),x(w)))
}{\partial (x(z))^2
\partial x(w)}  
\Big|_{w=z}
\nonumber
\\
&\qquad
+\sum_{j=1}^{|I |} D_{I\setminus u_j}
\frac{\lambda}{(x(z)-x(u_j))^3(x(z)+y(u_j))^2}
\Big)
\nonumber
\\
&+ \frac{\lambda^2}{(x(v)-x(z))^3}
D_I \sum_{k=0}^d \frac{1}{(x(z)+y(\hat{z}^k))}
+\mathcal{O}((x(v)-x(z))^0)\;.
\nonumber
\end{align}
In the last step we have used Proposition~\ref{prop:DlogHP}.
On the other hand, from (\ref{def-hP}) we get
\begin{align}
D_I\frac{\hat{P}^{(1)}_{1}(x(v),x(z))}{P^{(0)}_1(x(v),x(z))}
&= \frac{\lambda}{(x(v)-x(z))^2}
D_I \frac{Q^{(0)}_{1}(x(z),x(z))}{P^{(0)}_{1}(x(z),x(z))}
\label{DIP-vz}
\\
&+\frac{\lambda}{(x(v)-x(z))}
\frac{\partial}{\partial x(w)}\Big(
D_I \frac{Q^{(0)}_{1}(x(w),x(z))}{P^{(0)}_{1}(x(w),x(z))}
\Big)_{w=z}
\nonumber
\\
& + \mathcal{O}((x(v)-x(z))^0)\;, \qquad \text{where}
\nonumber
\\
D_I \frac{Q^{(0)}_{1}(x(z),x(z))}{P^{(0)}_{1}(x(z),x(z))}
\nonumber
&
\\
& \hspace*{-3.5cm}
:= \sum_{l=0}^{|I|}(-1)^l \sum_{\substack{I_0\uplus I_1 \uplus ... \uplus I_l=I
  \\ I_1,...,I_l \neq \emptyset}}
\frac{Q^{(0)}_{|I_0|+1}(x(w),x(z);I_0)}{P^{(0)}_1(x(w),x(z))}
\prod_{i=1}^l \frac{P^{(0)}_{|I_i|+1}(x(w),x(z);I_i)}{P^{(0)}_1(x(w),x(z))}\;.
\nonumber
\end{align}
These properties lead to an expansion
\begin{align}
  F^{(1)}_{|I|+1}(x(v);x(z);I)
  = \sum_{a=1}^3 \frac{F^{(1)a}_{|I|+1}(x(z);I)}{((x(v)-x(z))^a}
  +\sum_{j=1}^{|I|} \frac{\hat{F}^{(1)j}_{|I|}(x(z);I\setminus u_j)}{x(v)-x(u_j)}
  \label{F1I}
\end{align}
where (\ref{K1AI-vz}) and (\ref{DIP-vz}) combine to
\begin{subequations}
\label{F1aI}
  \begin{align}
F^{(1)3}_{|I|+1}(x(z);I)&=-\sum_{k=0}^d D_I \frac{\lambda^2}{x(z)+y(\hat{z}^k)}
\\
F^{(1)2}_{|I|+1}(x(z);I)&=
\lambda
D_I \frac{Q^{(0)}_{1}(x(z),x(z))}{P^{(0)}_{1}(x(z),x(z))}
\\
F^{(1)1}_{|I|+1}(x(z);I)&=
\lambda
\frac{\partial\Big(
D_I \dfrac{Q^{(0)}_{1}(x(z),x(w))}{P^{(0)}_{1}(x(z),x(w))}
\Big)}{\partial x(w)}+\frac{\lambda^2}{2}
\frac{\partial^3
(D_I \log P^{(0)}_1(x(z),x(w)))
}{\partial (x(z))^2
\partial x(w)}  
\nonumber
\\
&\qquad
+\sum_{j=1}^{|I |} D_{I\setminus u_j} \frac{\lambda^3}{
  (x(z)-x(u_j))^3(x(z)+y(u_j))^2}
\Big|_{w=z}\;.
\end{align}
\end{subequations}

It remains to determine the functions
$\hat{F}^{(1)j}_{|I|}(x(z);I\setminus u_j)$. We will need two lemmas:
\begin{lemma}
  \label{lemma:hU1}
\begin{align*}
&
\hat{U}^{(1)}_{|I|+1}(v,z;I)
\\
&= \sum_{I_1\uplus I_2=I}
D_{I_1}\frac{1}{(x(z)+y(v))}\cdot 
\Big\{\hat{H}^{(1)}_{|I_2|+1}(x(v);z;I_2)
-\lambda \frac{\partial U^{(0)}_{|I|+2}(v,z;I\cup s)
}{\partial x(s)}\Big|_{s=v}
\nonumber
\\
&-\lambda \sum_{I_2'\uplus I_2''=I_2}
\Big(W^{(1)}_{|I_2'|+1}(v;I_2')
+ \frac{\lambda \delta_{I_2',\emptyset}}{(x(z)-x(v))^3}
\Big)
  U^{(0)}_{|I_2''|+1}(v,z;I_2'')\Big\}\;.
\end{align*}
\begin{proof}
Resolve (\ref{DSE-hUH}) at $g=1$ for the first term $\hat{U}^{(1)}(v,z;I)$,
which becomes
$-\sum_{I_1\uplus I_2=I, I_1\neq \emptyset}
\frac{\lambda W^{(0)}_{|I_1|+1}(v;I_1)}{x(z)+y(v)}\hat{U}^{(1)}(v,z;I_2)$
plus other terms. Iterate this procedure for every 
$\hat{U}^{(1)}(v,z;I_2)$ and so on. The resulting products
of $\frac{\lambda W^{(0)}_{|I_i|+1}(v;I_i)}{x(z)+y(v)}$ can be collected to
$D_{I_1}\frac{1}{(x(z)+y(v))}$.
\end{proof}
\end{lemma}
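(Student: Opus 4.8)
The plan is to specialize the disentangled Dyson--Schwinger equation (\ref{DSE-hUH}) to $g=1$, rearrange it into a triangular recursion for $\hat{U}^{(1)}_{|I|+1}(v,z;I)$ whose inhomogeneity is exactly the curly bracket in the assertion, and then to solve that recursion in closed form by recognizing the attached factor as $D_{I_1}\frac{1}{x(z)+y(v)}$.

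First I would put $g=1$ in (\ref{DSE-hUH}). The constraint $g_1+g_2=1$ with $(g_1,I_1)\neq(0,\emptyset)$ splits the convolution sum into two families. The $g_1=1$ terms give, after inserting $\hat{U}^{(0)}=U^{(0)}$ and evaluating the correction kernel $-\frac{(-1)^{g_1}\lambda^{2g_1-1}\delta_{I_1,\emptyset}C_{g_1-1}}{(x(z)-x(v))^{4g_1-1}}$ at $g_1=1$ (so $C_0=1$), the contribution $\lambda\sum_{I_1\uplus I_2=I}\big(W^{(1)}_{|I_1|+1}(v;I_1)+\frac{\lambda\delta_{I_1,\emptyset}}{(x(z)-x(v))^3}\big)U^{(0)}_{|I_2|+1}(v,z;I_2)$; together with the genus-drop term $\lambda\frac{\partial}{\partial x(s)}U^{(0)}_{|I|+2}(v,z;I\cup s)\big|_{s=v}$ and $\hat{H}^{(1)}_{|I|+1}(x(v);z;I)$ these assemble into the bracket, which I abbreviate $S(I)$. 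The $g_1=0$ terms, necessarily with $I_1\neq\emptyset$, give $\lambda\sum_{I_1\uplus I_2=I,\,I_1\neq\emptyset}W^{(0)}_{|I_1|+1}(v;I_1)\hat{U}^{(1)}_{|I_2|+1}(v,z;I_2)$. Solving (\ref{DSE-hUH}) for the leading $(x(z)+y(v))\hat{U}^{(1)}_{|I|+1}$ and dividing yields the strictly index-decreasing recursion
\[
\hat{U}^{(1)}_{|I|+1}(v,z;I)=\frac{S(I)}{x(z)+y(v)}-\frac{\lambda}{x(z)+y(v)}\sum_{\substack{I_1\uplus I_2=I\\ I_1\neq\emptyset}}W^{(0)}_{|I_1|+1}(v;I_1)\,\hat{U}^{(1)}_{|I_2|+1}(v,z;I_2).
\]

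The single combinatorial identity that powers the solution is the Leibniz recursion for the loop insertion operator on the reciprocal. Applying $D_I$ to $\tfrac{1}{x(z)+y(v)}\cdot(x(z)+y(v))=1$ in $\mathcal{R}$, and using $D_{I'}(x(z)+y(v))=\lambda W^{(0)}_{|I'|+1}(v;I')$ for $I'\neq\emptyset$ together with $D_\emptyset=\mathrm{id}$, gives for $I\neq\emptyset$
\[
D_I\frac{1}{x(z)+y(v)}=-\sum_{\substack{I_1\uplus I_2=I\\ I_2\neq\emptyset}}\Big(D_{I_1}\frac{1}{x(z)+y(v)}\Big)\frac{\lambda W^{(0)}_{|I_2|+1}(v;I_2)}{x(z)+y(v)}.
\]
This is the multiplicative sibling of (\ref{DIlogRvz}) and is proved by the same telescoping used in Lemma~\ref{lemma-DIlogPH}.

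Finally I would verify the claimed formula $\hat{U}^{(1)}_{|I|+1}(v,z;I)=\sum_{I_1\uplus I_2=I}D_{I_1}\frac{1}{x(z)+y(v)}\,S(I_2)$ by induction on $|I|$: substituting the ansatz into every $\hat{U}^{(1)}_{|I_2|+1}$ on the right of the recursion and regrouping the resulting double sum by the innermost source block $S(K_2)$ with complement $K_1$, the inner sum over splittings of $K_1$ into a nonempty $W^{(0)}$-factor and a residual $D$-block collapses, by the reciprocal identity, to $D_{K_1}\frac{1}{x(z)+y(v)}$; the boundary block $K_1=\emptyset$ is supplied by the isolated $\frac{S(I)}{x(z)+y(v)}$ term. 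The main obstacle is precisely this regrouping: one must match the $I_1\neq\emptyset$ restriction of the recursion against the $I_2\neq\emptyset$ restriction of the reciprocal identity so that no boundary term is lost or counted twice, which is the set-partition bookkeeping already encountered in Lemmas~\ref{lemma-DIlogPH} and \ref{lemma:DIUH}. Equivalently, and more in line with the write-up, one simply iterates the recursion, collecting the accumulated products $\prod_i\frac{\lambda W^{(0)}_{|I_i|+1}(v;I_i)}{x(z)+y(v)}$ in front of each surviving $S(I_2)$; by their Fa\`a di Bruno structure these sum to $D_{I_1}\frac{1}{x(z)+y(v)}$, which is the assertion.
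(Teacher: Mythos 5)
Your proposal is correct and takes essentially the same route as the paper's proof: both resolve \eqref{DSE-hUH} at $g=1$ for $\hat{U}^{(1)}_{|I|+1}(v,z;I)$ (the $g_1=1$ terms, the genus-drop term and $\hat{H}^{(1)}$ forming the inhomogeneity, the $g_1=0$, $I_1\neq\emptyset$ terms forming the triangular part), iterate the recursion, and collect the accumulated products $\prod_i \frac{\lambda W^{(0)}_{|I_i|+1}(v;I_i)}{x(z)+y(v)}$ into $D_{I_1}\frac{1}{x(z)+y(v)}$. Your reciprocal identity from applying $D_I$ to $(x(z)+y(v))\cdot\frac{1}{x(z)+y(v)}=1$, together with the induction on $|I|$, merely makes explicit the collection step that the paper leaves implicit, so the two arguments coincide in substance.
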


\begin{lemma}
\label{lemma:DIUsH}
\begin{align}
D_{I} \frac{U^{(0)}_{2}(v,z;s)}{H^{(0)}_1(x(v);z)}
&=-D_{I} \frac{\lambda \big(W^{(0)}_2(v;s)
-\frac{1}{x(v)-x(s)}\big)
}{(x(z)+y(v))^2}
\label{DIUsH}
\\
&+\sum_{k=1}^d 
D_{I}
\frac{\lambda W^{(0)}_{|I_0|+2}(\hat{z}^k;s)}{(x(v)+y(\hat{z}^k))
  (x(z)+y(v))}
\nonumber
\\
&-   \sum_{i=1}^{|I|}
  D_{I\setminus u_i} \frac{\lambda^2 W^{(0)}_2(u_i;s)}{
    (x(v)-x(u_i))(x(z)+y(u_i))^2(x(z)+y(v))}
\nonumber
\\
&-\sum_{I\uplus  I''=I}
D_{I' } \frac{\lambda}{x(z)+y(v)}
D_{I''} \frac{\frac{1}{(x(z)+y(v))}-\frac{1}{(x(z)+y(s))}
 }{ (x(v)-x(s))}\;.
\nonumber
\end{align}
\begin{proof}  
With (\ref{DUH-g0}) one has
\begin{align*}
D_{I} \frac{U^{(0)}_{2}(v,z;s)}{H^{(0)}_1(x(v);z)}
&=D_{I \cup s} \frac{U^{(0)}_{2}(v,z)}{H^{(0)}_1(x(v);z)}
+D_{I} \Big(\frac{U^{(0)}_{2}(v,z)}{H^{(0)}_1(x(v);z)}
D_s \log H^{(0)}_1(x(v);z)\Big)
\\
&=D_{I \cup s} \frac{1}{x(z){+}y(v)}
+\hspace*{-2mm}
\sum_{I\uplus  I''=I}\hspace*{-2mm}
D_{I' } \frac{1}{x(z){+}y(v)}
D_{I''\cup s} \log H^{(0)}_1(x(v);z).
\nonumber
\end{align*}
The first term equals
$D_{I \cup s} \frac{1}{x(z){+}y(v)}=-D_I\frac{\lambda W^{(0)}_2(v;s)}{(x(z)+y(v))^2}$ and gives partly the first line of (\ref{DIUsH}). The other part
cancels with a term in the last line of (\ref{DIUsH}); we will 
need this combination.
The last term is known from (\ref{eq:DlogH}) and 
(\ref{eq:DlogP-poleu}):
\begin{align}
  &D_{I''\cup s} \log H^{(0)}_1(x(v);z)
\label{DIUsH-2}
\\
&=
\sum_{k=1}^d 
\sum_{l=1}^{|I''|+1} \frac{(-1)^{l-1}}{l}
  \sum_{  \substack{I_1\uplus ...\uplus I_l=I''\cup s\\ 
      I_1, ... I_l\neq \emptyset}} 
  \prod_{i=1}^l \frac{\lambda W^{(0)}_{|I_i|+1}(\hat{z}^k;I_i)}{
    x(v)+y(\hat{z}^k)}
  \nonumber
  \\
  & +   \sum_{i=1}^{|I''|}
  D_{I''\setminus u_i} D_s \frac{\lambda}{ (x(v){-}x(u_i))(x(z){+}y(u_i))}
+  D_{I''} \frac{\lambda}{ (x(v){-}x(s))(x(z){+}y(s))}\;.
\nonumber
\end{align}
The middle line of (\ref{DIUsH-2}) equals
$\sum_{k=1}^d 
D_{I''}
\frac{\lambda W^{(0)}_{2}(\hat{z}^k;s)}{x(v)+y(\hat{z}^k)}$ and gives
with the derivation property of the $D_I$ the
second line of (\ref{DIUsH}). In the last line of (\ref{DIUsH-2}) we 
have $D_s \frac{\lambda}{ (x(v){-}x(u_i))(x(z){+}y(u_i))}
=-\frac{\lambda^2 W^{(0)}_2(u_i;s)}{
    (x(v)-x(u_i))(x(z)+y(u_i))^2}$ which gives the third line of 
  (\ref{DIUsH}). The final term of (\ref{DIUsH-2}) gives the missing part of
  the last line of  (\ref{DIUsH}).
\end{proof}
\end{lemma}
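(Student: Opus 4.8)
The plan is to reduce this two-point quotient to the one-point identity already established in Lemma~\ref{lemma:DIUH}, exploiting that the third argument $s$ is introduced by the loop insertion operator. Since $U^{(0)}_{2}(v,z;s)=D_s U^{(0)}_{1}(v,z)$ by Definition~\ref{def:loopins}, the quotient rule for the derivation $D_s$ gives
\begin{align*}
\frac{U^{(0)}_{2}(v,z;s)}{H^{(0)}_1(x(v);z)}
&= D_s\frac{U^{(0)}_{1}(v,z)}{H^{(0)}_1(x(v);z)}
+\frac{U^{(0)}_{1}(v,z)}{H^{(0)}_1(x(v);z)}\,
D_s\log H^{(0)}_1(x(v);z)\;.
\end{align*}
First I would apply $D_I$ to this identity and expand the second summand by the Leibniz rule, producing $\sum_{I'\uplus I''=I}D_{I'}\big(\tfrac{U^{(0)}_1(v,z)}{H^{(0)}_1(x(v);z)}\big)\,D_{I''\cup s}\log H^{(0)}_1(x(v);z)$.

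Next I would invoke Lemma~\ref{lemma:DIUH} (equation \eqref{DUH-g0}) to replace $\frac{U^{(0)}_{1}(v,z)}{H^{(0)}_1(x(v);z)}$ by $\frac{1}{x(z)+y(v)}$ modulo $\mathcal{I}$; the first summand then becomes $D_{I\cup s}\frac{1}{x(z)+y(v)}$ and the second becomes $\sum_{I'\uplus I''=I}D_{I'}\frac{1}{x(z)+y(v)}\cdot D_{I''\cup s}\log H^{(0)}_1(x(v);z)$. The only remaining input is the explicit form of $D_{I''\cup s}\log H^{(0)}_1$, which Proposition~\ref{prop:DlogHP}, through \eqref{eq:DlogH} and \eqref{eq:DlogP-poleu}, delivers as a preimage sum $\sum_{k=1}^d D_{I''\cup s}\log(x(v)+y(\hat{z}^k))$ plus simple-pole terms at $x(v)=x(u_i)$ for $u_i\in I''$ and a single further simple-pole term at $x(v)=x(s)$. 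Recognising $D_s\log(x(v)+y(\hat{z}^k))=\frac{\lambda W^{(0)}_2(\hat{z}^k;s)}{x(v)+y(\hat{z}^k)}$ collapses the preimage sum and, recombined by Leibniz with the $D_{I'}\frac{1}{x(z)+y(v)}$ factor, reproduces the second line of \eqref{DIUsH}; letting $D_s$ act on $\frac{1}{x(z)+y(u_i)}$ in the $u_i$-pole terms reproduces the third line.

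The step I expect to be the crux is the regularisation of the diagonal $v=s$. The first summand evaluates to $D_{I\cup s}\frac{1}{x(z)+y(v)}=-D_I\frac{\lambda W^{(0)}_2(v;s)}{(x(z)+y(v))^2}$ and carries the \emph{full} bilinear $W^{(0)}_2(v;s)$, which is singular on $v=s$, whereas the first line of \eqref{DIUsH} contains only the subtracted combination $W^{(0)}_2(v;s)-\frac{1}{x(v)-x(s)}$. The compensating term must come from the $x(v)=x(s)$-pole contribution $\sum_{I'\uplus I''=I}D_{I'}\frac{1}{x(z)+y(v)}\cdot D_{I''}\frac{\lambda}{(x(v)-x(s))(x(z)+y(s))}$: writing $\frac{1}{x(z)+y(s)}=\frac{1}{x(z)+y(v)}-\big(\frac{1}{x(z)+y(v)}-\frac{1}{x(z)+y(s)}\big)$ splits this into a piece that reassembles to $D_I\frac{\lambda}{(x(v)-x(s))(x(z)+y(v))^2}$—exactly the subtraction needed to turn $W^{(0)}_2(v;s)$ into the regularised bilinear in the first line—and a remainder that is precisely the fourth line of \eqref{DIUsH}. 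Maintaining the set-partition bookkeeping of $D_I$ consistently across this splitting, so that no spurious diagonal singularity survives, is the delicate point; the rest is a routine application of the Leibniz rule and the genus-$0$ identities.
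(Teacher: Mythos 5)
Your proposal is correct and takes essentially the same route as the paper's proof: the same $D_s$-Leibniz split of $\frac{U^{(0)}_2(v,z;s)}{H^{(0)}_1(x(v);z)}$, the replacement of $D_{I'}\frac{U^{(0)}_1}{H^{(0)}_1}$ via Lemma~\ref{lemma:DIUH}, the evaluation of $D_{I''\cup s}\log H^{(0)}_1$ through Proposition~\ref{prop:DlogHP} including the extra pole at $x(v)=x(s)$, and the identical recombination in which the $x(v)=x(s)$ term supplies the subtraction that regularises $W^{(0)}_2(v;s)$ in the first line of \eqref{DIUsH} and leaves the fourth line as remainder. Your explicit splitting of $\frac{1}{x(z)+y(s)}$ is just the paper's cancellation bookkeeping written in the opposite direction; there is no gap.
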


\begin{proposition}
  \label{prop:hatFj}
One has
\begin{align}
&\hat{F}^{(1)j}(x(z);I\setminus u_j)
\label{hatFj}
\\
&=
 D_{I\setminus u_j} \Big\{
 \frac{\lambda^3 \Omega^{(0)reg}_2(u_j,u_j)
}{(x(z)+y(u_j))^3}
+ \frac{
\frac{\lambda^3}{2}
\frac{\partial^2}{\partial (x(u_j))^2} 
\frac{1}{(x(z)+y(u_j))}
- \lambda^2 \big(W^{(1)}_{1}(u_j)
+ \frac{\lambda}{(x(z)-x(u_j))^3}
\big)}{(x(z)+y(u_j))^2}\Big\}
\nonumber
\\
&+   \sum_{\substack{i=1 \\ i\neq j}}^{|I|}
  D_{I\setminus \{u_i,u_j\}} \frac{\lambda^4 \Omega^{(0)}_2(u_i,u_j)}{
    (x(u_j)-x(u_i))(x(z)+y(u_i))^2(x(z)+y(u_j))^2}\;,
  \nonumber
\end{align}
where
$
\Omega^{(0)reg}(u,u):=\lim_{s\to u} 
\big(\Omega^{(0)}_2(u,s)-\frac{1}{(x(u)-x(s))^2}\big)
$
and $D_I\Omega^{(0)reg}_2(u,u)=\frac{\partial}{\partial x(s)}
W^{(0)}_{|I|+2}(u;I\cup s)\big|_{s=u}$ for $I\neq \emptyset$.
\begin{proof}
The residue of (\ref{F1I}) times $dx(v)$ at $x(v)=x(u_j)$ is with 
(\ref{DIH-g1b}) and (\ref{K1AI}) given by
\begin{align}
\hat{F}^{(1)j}_{|I|}(x(z);I\setminus u_j)
&= \lambda^3
\sum_{k=1}^d D_{I\setminus u_j}\frac{\Omega^{(0)}_2(\hat{z}^k,u_j)}{
  (x(u_j)+y(\hat{z}^k))  (x(z)+y(u_j))^2}
\label{hatFj-a}
\\
&+\lim_{v\to u_j} (x(v)-x(u_j))
D_I\frac{\hat{H}^{(1)}_{1}(x(v);z)}{H^{(0)}_1(x(v);z)}\;.
\nonumber
\end{align}
The limit in the last line follows from (\ref{hH-pole-u}) and the
derivation property of $D_I$:
\begin{align*}
  &\lim_{v\to u_j} (x(v)-x(u_j))
  D_I\frac{\hat{H}^{(1)}_{1}(x(v);z)}{H^{(0)}_1(x(v);z)}  
\\
&= 
\lambda
\Big\{\sum_{l=0}^{|I|-1}(-1)^l
\sum_{\substack{I_0\uplus I_1 \uplus ... \uplus I_l=I\setminus u_j
  \\ I_1,...,I_l \neq \emptyset}}
\frac{\hat{U}^{(1)}_{|I_0|+1}(z,u_j;I_0)}{H^{(0)}_1(x(u_j);z)}
\prod_{i=1}^l \frac{H^{(0)}_{|I_i|+1}(x(u_j);z;I_i)}{H^{(0)}_1(x(u_j);z)}
\\
&-\sum_{l=0}^{|I|-1}(-1)^l(l{+}1) \hspace*{-5mm}
\sum_{\substack{I_{-1}\uplus I_0\uplus I_1 \uplus ... \uplus I_l=I\setminus u_j
  \\ I_1,...,I_l \neq \emptyset}}\hspace*{-5mm}
\frac{U^{(0)}_{|I_{-1}|+1}(z,u_j;I_{-1})}{H^{(0)}_1(x(u_j);z)}
\frac{\hat{H}^{(1)}_{|I_0|+1}(x(u_j);z;I_0)}{H^{(0)}_1(x(u_j);z)}
\\
&\hspace*{6cm} \times
\prod_{i=1}^l \frac{H^{(0)}_{|I_i|+1}(x(u_j),x(z);I_i)}{H^{(0)}_1(x(u_j);z)}
\Big\}
\\
&\equiv  \lambda D_{I\setminus u_j}
\Big(\frac{\hat{U}^{(1)}_1(u_j,z)}{H^{(0)}_1(x(u_j);z)}
-\frac{U^{(0)}_1(u_j,z)}{H^{(0)}_1(x(u_j);z)}
\frac{\hat{H}^{(1)}_1(x(u_j);z)}{H^{(0)}_1(x(u_j);z)}\Big)
\\
&=\lambda D_{I\setminus u_j}
\frac{\hat{U}^{(1)}_1(u_j,z)}{H^{(0)}_1(x(u_j);z)}
-\lambda \!\!\!\! \sum_{I_1\uplus I_2=I\setminus u_j} \!\!\!
D_{I_1}\frac{U^{(0)}_1(u_j,z)}{H^{(0)}_1(x(u_j);z)}
D_{I_2} \frac{\hat{H}^{(1)}_1(x(u_j);z)}{H^{(0)}_1(x(u_j);z)}\;.
\end{align*}
We insert (\ref{DUH-g0}) and expand the other operations
$D_{I\setminus u_j}$ and $D_{I_2}$:
\begin{align}
  &\lim_{v\to u_j} (x(v)-x(u_j))
  D_I\frac{\hat{H}^{(1)}_{1}(x(v);z)}{H^{(0)}_1(x(v);z)}  
  \label{limDIH1H-a}
  \\
  &=\lambda
\sum_{l=0}^{|I|-1}(-1)^l
\sum_{\substack{I_0\uplus I_1 \uplus ... \uplus I_l=I\setminus u_j
  \\ I_1,...,I_l \neq \emptyset}}
\bigg(\frac{\hat{U}^{(1)}_{|I_0|+1}(u_j,z;I_0)}{H^{(0)}_1(x(u_j);z)}
\nonumber
\\
&\qquad
-\sum_{I_0'\uplus I_0''=I_0}
D_{I_0'} \frac{1}{x(z)+y(u_j)}
\frac{\hat{H}^{(1)}_{|I_0''|+1}(x(u_j);z;I_0'')}{H^{(0)}_1(x(u_j);z)}
\bigg)
\prod_{i=1}^l \frac{H^{(0)}_{|I_i|+1}(x(u_j);z;I_i)}{H^{(0)}_1(x(u_j);z)}
\nonumber
\\
&=-\lambda^2
\sum_{l=0}^{|I|-1}(-1)^l \hspace*{-4mm}
\sum_{\substack{I_0\uplus I_1 \uplus ... \uplus I_l=I\setminus u_j
    \\ I_1,...,I_l \neq \emptyset}}
\prod_{i=1}^l \frac{H^{(0)}_{|I_i|+1}(x(u_j);z;I_i)}{H^{(0)}_1(x(u_j);z)}
\nonumber
\\
&\times \bigg(
\sum_{I_0'\uplus I_0''=I_0} \hspace*{-3mm}
D_{I_0'}\frac{1}{x(z)+y(u_j)}\cdot
\frac{\partial}{\partial x(s)} \frac{U^{(0)}_{|I_0''|+2}(u_j,z;I_0''\cup s)
}{H^{(0)}_{1}(x(u_j);z)}
\Big|_{s=u_j}
\nonumber
\\
&+\hspace*{-2mm} \sum_{I_0'\uplus I_0''\uplus I_0'''=I_0}\hspace*{-5mm}
D_{I_0'}\frac{1}{x(z)+y(u_j)}
\Big(W^{(1)}_{|I_0''|+1}(u_j;I_0'')
+ \frac{\lambda \delta_{I_0'',\emptyset}}{(x(z){-}x(u_j))^3}
\Big)
\frac{U^{(0)}_{|I_0'''|+1}(u_j,z;I_0''')}{H^{(0)}_{1}(x(u_j);z)}
\bigg)
\nonumber
\\
&=-\lambda^2
\sum_{I'\uplus I''=I\setminus u_j}
D_{I'}\frac{1}{x(z)+y(u_j)}\cdot
\frac{\partial}{\partial x(s)}
D_{I''} \frac{U^{(0)}_{2}(u_j,z;s)}{H^{(0)}_1(x(u_j);z)}
  \Big|_{s=u_j}
  \nonumber
  \\
&- D_{I\setminus u_j}
\frac{\lambda^2 \big(W^{(1)}_{1}(u_j)
+ \frac{\lambda}{(x(z)-x(u_j))^3}
\big)}{(x(z)+y(u_j))^2}\;.
\nonumber
\end{align}
Here Lemma~\ref{lemma:hU1} was used to get the second equality and
(\ref{DUH-g0}) together with derivation property of $D_I$ to get
the last equality.

Next, in Lemma~\ref{lemma:DIUsH} we set $v\mapsto u_j$ and 
differentiate with respect to $x(s)$ at $s=u_j$.
With the definition of $\Omega^{(0)reg}_2(u_j,u_j)$ given in the
Proposition we get
\begin{align*}
  &\frac{\partial}{\partial x(s)}
  D_{I} \frac{U^{(0)}_{2}(u_j,z;s)}{H^{(0)}_1(x(u_j);z)}
  \Big|_{s=u_j}
  \nonumber
\\
&=-D_{I} \frac{\lambda \Omega^{(0)reg}_2(u_j,u_j)
}{(x(z)+y(u_j))^2}
+\sum_{k=1}^d 
D_{I}
\frac{\lambda \Omega^{(0)}_{2}(\hat{z}^k,u_j)}{(x(u_j)+y(\hat{z}^k))
  (x(z)+y(u_j))}
\\
&-   \sum_{i=1}^{|I|}
  D_{I\setminus u_i} \frac{\lambda^2 \Omega^{(0)}_2(u_i,u_j)}{
    (x(u_j)-x(u_i))(x(z)+y(u_i))^2)(x(z)+y(u_j))}
\\
&-\sum_{I\uplus I''=I}
D_{I' } \frac{\lambda}{2(x(z)+y(u_j))}
\frac{\partial^2}{\partial (x(u_j))^2} D_{I''}
\frac{1}{(x(z)+y(u_j))}\;.
\nonumber
\end{align*}
This result is inserted into (\ref{limDIH1H-a}) and then into
(\ref{hatFj-a}). The sum over
preimages cancels, the remainder simplifies to the assertion
(\ref{hatFj}).
\end{proof}
\end{proposition}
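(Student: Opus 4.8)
The plan is to read off $\hat{F}^{(1)j}$ as the coefficient of the simple pole at $x(v)=x(u_j)$ in the expansion (\ref{F1I}). Equation (\ref{DIH-g1b}) gives $F^{(1)}_{|I|+1}(x(v);x(z);I)=D_I\frac{\hat{H}^{(1)}_1(x(v);z)}{H^{(0)}_1(x(v);z)}-K^{(1)}_{1,|I|+1}(x(v);z;I)$, so I would treat the two pieces separately. Inspecting the explicit formula (\ref{K1AI}) for $A=1$, the only term carrying a pole at $x(v)=x(u_j)$ is the third line; its residue contributes
\[
+\lambda^3\sum_{k=1}^d D_{I\setminus u_j}\frac{\Omega^{(0)}_2(\hat{z}^k,u_j)}{(x(u_j)+y(\hat{z}^k))\,(x(z)+y(u_j))^2}
\]
to $\hat{F}^{(1)j}$, since all remaining terms of $K^{(1)}_{1,|I|+1}$ have poles only at $x(v)+y(\hat{z}^k)=0$ or at $x(v)=x(z)$. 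It then remains to compute $\lim_{v\to u_j}(x(v)-x(u_j))\,D_I\frac{\hat{H}^{(1)}_1(x(v);z)}{H^{(0)}_1(x(v);z)}$.

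For this limit I would use the pole behaviour (\ref{hH-pole-u}), by which $\frac{\hat{H}^{(1)}_{|I|+1}(x(v);z;I)}{H^{(0)}_1(x(v);z)}$ has residue $\lambda\frac{\hat{U}^{(1)}(z,u_j;I\setminus u_j)}{H^{(0)}_1(x(u_j);z)}$ at $x(v)=x(u_j)$. Expanding $D_I\frac{\hat{H}^{(1)}_1}{H^{(0)}_1}$ by the Leibniz rule and isolating the $x(v)=x(u_j)$ pole, the limit reorganises into
\[
\lambda D_{I\setminus u_j}\Big(\frac{\hat{U}^{(1)}_1(u_j,z)}{H^{(0)}_1(x(u_j);z)}-\frac{U^{(0)}_1(u_j,z)}{H^{(0)}_1(x(u_j);z)}\,\frac{\hat{H}^{(1)}_1(x(u_j);z)}{H^{(0)}_1(x(u_j);z)}\Big),
\]
where the subtracted term removes the part already produced by the genus-$0$ factors $\frac{H^{(0)}_{|I_i|+1}}{H^{(0)}_1}$. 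Now Lemma~\ref{lemma:hU1} rewrites $\hat{U}^{(1)}_1$ through $\hat{H}^{(1)}$, through $\partial_{x(s)}U^{(0)}_2(u_j,z;s)|_{s=u_j}$, and through the $W^{(1)}_1(u_j)$ and $W^{(0)}$ insertions; the $\hat{H}^{(1)}$-pieces cancel against the subtracted term above, leaving an expression built purely from genus-$0$ data together with $W^{(1)}_1(u_j)$.

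The core step is the evaluation of $\frac{\partial}{\partial x(s)}D_I\frac{U^{(0)}_2(u_j,z;s)}{H^{(0)}_1(x(u_j);z)}\big|_{s=u_j}$. I would feed Lemma~\ref{lemma:DIUsH} (equation (\ref{DIUsH}) with $v\mapsto u_j$) into this, differentiate at $s=u_j$, and identify the regularised diagonal $\Omega^{(0)reg}_2(u_j,u_j)=\lim_{s\to u_j}(\Omega^{(0)}_2(u_j,s)-\frac{1}{(x(u_j)-x(s))^2})$ in the term coming from $W^{(0)}_2(v;s)-\frac{1}{x(v)-x(s)}$, together with the double derivative $\frac{\partial^2}{\partial(x(u_j))^2}\frac{1}{x(z)+y(u_j)}$ from the last line of (\ref{DIUsH}). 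This produces three families: a preimage sum $\sum_{k=1}^d D_{I\setminus u_j}\frac{\Omega^{(0)}_2(\hat{z}^k,u_j)}{(x(u_j)+y(\hat{z}^k))(x(z)+y(u_j))}$ (each multiplied by the outstanding prefactor $-\lambda^2 D_{I'}\frac{1}{x(z)+y(u_j)}$ and recombined by Leibniz), the diagonal $\Omega^{(0)reg}_2(u_j,u_j)$ term, and the cross terms indexed by $i\neq j$.

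Assembling the two residues, the decisive cancellation is that—after the outstanding factor $\frac{1}{x(z)+y(u_j)}$ is absorbed by the recombination $\sum_{I'\uplus I''=I\setminus u_j}D_{I'}(\cdots)D_{I''}(\cdots)=D_{I\setminus u_j}(\cdots)$, promoting the denominator to $(x(z)+y(u_j))^2$—the preimage sum produced in the previous paragraph is exactly the negative of the preimage contribution from the third line of $K^{(1)}_{1,|I|+1}$ recorded in the first paragraph. Hence all $\sum_{k=1}^d$ terms annihilate, and what survives is precisely (\ref{hatFj}). I expect this preimage cancellation, together with the careful bookkeeping of how $D_{I\setminus u_j}$ distributes over the two $\Omega^{(0)}_2(\hat{z}^k,u_j)$ factors and of matching the $(x(z)+y(u_j))$-powers, to be the main obstacle; once it is verified, the collapse to (\ref{hatFj}) is routine algebra.
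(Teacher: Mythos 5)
Your proposal is correct and takes essentially the same route as the paper's own proof: you read off $\hat{F}^{(1)j}$ as the residue of \eqref{F1I} at $x(v)=x(u_j)$, with the $K^{(1)}_{1,|I|+1}$-contribution coming solely from the third line of \eqref{K1AI}, evaluate the limit via \eqref{hH-pole-u} together with Lemma~\ref{lemma:hU1} (cancelling the $\hat{H}^{(1)}$-pieces against the subtracted product term), and then invoke Lemma~\ref{lemma:DIUsH} at $v\mapsto u_j$, differentiated at $s=u_j$, so that the $\sum_{k=1}^d$ preimage sums annihilate after the Leibniz recombination of $D_{I'}\frac{1}{x(z)+y(u_j)}$ promotes the denominator to $(x(z)+y(u_j))^2$ --- exactly the cancellation the paper records. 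All the key identifications (the regularised diagonal $\Omega^{(0)reg}_2(u_j,u_j)$, the $W^{(1)}_1(u_j)$ term, the double $x(u_j)$-derivative, and the $i\neq j$ cross terms) match the paper's argument, so there is no gap.
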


We proceed with the terms $F^{(1)a}_{|I|+1}$ in (\ref{F1aI}) which
contribute to $\hat{P}^1_{|I|+1}$. For them we need
$D_I \frac{Q^{(0)}_{1}(x(v),x(z))}{P^{(0)}_1(x(v),x(z))}$ which in the
case $I\neq \emptyset$ coincide with
$D_I \frac{\hat{Q}^{(0)}_{1}(x(v),x(z))}{P^{(0)}_1(x(v),x(z))}$.
The function $\hat{Q}^{(0)}_{1}(x(v),x(z))$ was given in
(\ref{tQ01}).  By Proposition~\ref{prop:DIlogQ}, the action of $D_I$
on functions of $\log\hat{Q}^{(0)}_{1}(x(v),x(z))$ is the same as
the combined action of $D_I$ on $P^{(0)}_1(x(z),x(z))$ and
$P^{(0)}_1(x(v),x(v))$ and symbolic expressions $D^0x(0)$ as given in
Proposition~\ref{prop:DIlogQ}.  Understanding $D_I^{[0]}$ as $D_I$
when acting on $P$ and $D^0_I$ when acting on $x(0)$, this means
\begin{align*}
  &D_I  \frac{Q^{(0)}_{1}(x(v),x(z))}{P^{(0)}_1(x(v),x(z))}
  \\
  &= -\frac{\lambda}{(x(v)-x(z))^2}
  D_I^{[0]}\Big\{
  \sqrt{1+\frac{(x(v)-x(z))^2}{4(x(v)-x(0))(x(z)-x(0))}}
  \nonumber
  \\
  &\times \exp\Big(\frac{1}{2}\log P^{(0)}_1(x(v),x(v))
  +\frac{1}{2}\log P^{(0)}_1(x(z),x(z))
  -\log P^{(0)}_1(x(v),x(z))\Big)\Big\}.
\end{align*}
We need this expression at and near the diagonal $x(v)=x(z)$.
Taylor expansion gives
\begin{align}
  &D_I  \frac{Q^{(0)}_{1}(x(v),x(z))}{P^{(0)}_1(x(v),x(z))}
  \label{DQPI-QPdiag}
  \\
  &= 
-D_I^0\frac{\lambda}{8(x(z)-x(0))^2}
-  \frac{\lambda}{2}
\frac{\partial^2\big(D_I\log P^{(0)}_1(x(w),x(z))\big)}{
    \partial x(w) \partial x(z)}\Big|_{w=z}
  \nonumber
  \\
&+(x(v)-x(z))
\Big(
  D_I^0\frac{\lambda}{8  (x(z)-x(0))^3}
-  \frac{\lambda}{2}
  \frac{\partial^3\big(D_I \log P^{(0)}_1(x(w),x(z))\big)}{
    \partial x(w) \partial (x(z))^2}\Big|_{w=z}
  \Big)
  \nonumber
  \\
  &+ \mathcal{O}((x(v)-x(z))^2)\;.
  \nonumber
\end{align}
We insert (\ref{DQPI-QPdiag}) into (\ref{F1aI}) and the result
together with Proposition~\ref{prop:hatFj} into (\ref{F1I}) to get the
part $F^{(1)}_{|I|+1}(x(v);x(z);I)$ of
$D_I\frac{\hat{P}^{(1)}_1(x(v),x(z))}{P^{(0)}_1(x(v),x(z))}$ in
(\ref{DIP-g1b}). The other part is $K_{0,|I|+1}(x(v);z;I)$ given in
(\ref{K1AI}), which for $A=0$ is a function of $x(z)$. To simplify the
total expression we write the last line of (\ref{K1AI}) for $A=0$ as
\begin{align*}
&\frac{\lambda^2}{(x(v)-x(z))}\sum_{k=0}^d 
\frac{\partial}{\partial x(w)} D_I \frac{1}{(x(v)+y(\hat{z}^k))
  (x(z)+y(w))}
\Big|_{w=\hat{z}^k}
\\
&=-\frac{\lambda^2}{(x(v)-x(z))}\sum_{k=0}^d 
D_I \frac{y'(\hat{z}^k)}{
  x'(\hat{z}^k)(x(v)+y(\hat{z}^k))
  (x(z)+y(\hat{z}^k))^2}
\\
&=-\frac{\lambda^2}{(x(v)-x(z))^3}\sum_{k=0}^d 
D_I\Big(
\frac{y'(\hat{z}^k)}{
  x'(\hat{z}^k)(x(v)+y(\hat{z}^k))}
-
\frac{y'(\hat{z}^k)}{
  x'(\hat{z}^k)  (x(z)+y(\hat{z}^k))^2}\Big)
\\
&-\frac{\lambda^2}{(x(v)-x(z))^2}\sum_{k=0}^d 
D_I \frac{y'(\hat{z}^k)}{
  x'(\hat{z}^k)(x(z)+y(\hat{z}^k))^2}
\\
&=-\frac{\lambda^2}{(x(v)-x(z))^3}\frac{\partial}{\partial x(w)}
\sum_{k=0}^d 
D_I\Big(\log(x(v)+y(\hat{w}^k))
-\log(x(z)+y(\hat{w}^k))\Big)_{w=z}
\\
&+\frac{\lambda^2}{(x(v)-x(z))^2}\frac{\partial^2}{\partial x(z)\partial x(w)}
\sum_{k=0}^d 
D_I \log (x(z)+y(\hat{w}^k))\Big|_{w=z}
\\
&=-\frac{\lambda^2}{(x(v)-x(z))^3}\frac{\partial}{\partial x(w)}
\Big(D_I \log P^{(0)}_1(x(v),x(w))-D_I \log P^{(0)}_1(x(z),x(w))\Big)_{w=z}
\\
&+\frac{\lambda^2}{(x(v)-x(z))^2}\frac{\partial^2}{\partial x(z)\partial x(w)}
D_I \log P^{(0)}_1(x(z),x(w))\Big|_{w=z}
\\
&-
\sum_{j=1}^{|I|} D_{I\setminus u_j}
\Big(\frac{\lambda^3}{(x(v)-x(z))(x(v)-x(u_j))(x(z)-x(u_j))^2(x(z)+y(u_j))^2}
\Big)\;,
\end{align*}
where Proposition~\ref{prop:DlogHP} has been used. Putting everything together,
we arrive (for $I\neq \emptyset$) at
\begin{theorem}
The auxiliary functions $\hat{P}^{(1)}_{|I|+1}$ of genus $g=1$ are 
determined by
\begin{align}
&D_I\frac{\hat{P}^{(1)}_1(x(v),x(z))}{P^{(0)}_1(x(v),x(z))}
\label{DPI-g1-final}
\\
&=
\sum_{k=0}^d D_I \frac{\lambda W^{(1)}_1(\hat{z}^k)}{x(v)+y(\hat{z}^k)}
+\frac{\lambda^2}{2} \sum_{\substack{j,k=0\\ j\neq k}}^d D_I
\frac{\Omega^{(0)}_2(\hat{z}^j,\hat{z}^k)}{(x(v)+y(\hat{z}^j))
(x(v)+y(\hat{z}^k))}
\nonumber
\\
&- \sum_{j=1}^{|I|}\frac{1}{(x(v)-x(u_j))}
\sum_{k=0}^d D_{I\setminus u_j}\frac{\lambda^3\Omega^{(0)}_2(\hat{z}^k,u_j)}{
  (x(v)+y(\hat{z}^k))  (x(z)+y(u_j))^2}
\nonumber
\\
&
-\frac{\lambda^2}{(x(v)-x(z))^3}\frac{\partial}{\partial x(w)}
\Big(D_I \log P^{(0)}_1(x(v),x(w))-D_I \log P^{(0)}_1(x(z),x(w))\Big)_{w=z}
\nonumber
\\
&+
\frac{\lambda^2}{(x(v)-x(z))^2}
\Big(
-D^0_I\frac{1}{8(x(z)-x(0))^2}
+  \frac{1}{2} 
  \frac{\partial^2 (D_I\log P^{(0)}_1(x(w),x(z)))}{
    \partial x(w) \partial x(z)}\Big|_{w=z}\Big)
\nonumber\\
&+\frac{1}{x(v)-x(z)}
D^{0}_I \frac{\lambda^2}{8(x(z)-x(0))^3}
\nonumber
\\
& +\sum_{j=1}^{|I|} \frac{1}{(x(v)-x(u_j))}
 D_{I\setminus u_j} \Big\{
 \frac{\lambda^3 \Omega^{(0)reg}_2(u_j,u_j)}{(x(z)+y(u_j))^3}
-  \frac{\lambda^2 W^{(1)}_1(u_j)}{(x(z)+y(u_j))^2}
\nonumber
\\
&\qquad\qquad +
\frac{\lambda^3}{2(x(z)+y(u_j))^2}
\frac{\partial^2}{\partial (x(u_j))^2} 
\frac{1}{(x(z)+y(u_j))}
\Big\}
\nonumber
\\
&+   \sum_{\substack{i,j=1 \\ i < j}}^{|I|}
  D_{I\setminus \{u_i,u_j\}} \frac{\lambda^4 \Omega^{(0)}_2(u_i,u_j)}{
 (x(v)-x(u_j))   (x(v)-x(u_i))(x(z)+y(u_i))^2(x(z)+y(u_j))^2}\;.
\nonumber
\end{align}

\end{theorem}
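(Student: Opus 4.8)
The plan is to assemble \eqref{DPI-g1-final} directly from the two contributions already isolated in the splitting \eqref{DIP-g1b}, namely
$D_I\frac{\hat{P}^{(1)}_1(x(v),x(z))}{P^{(0)}_1(x(v),x(z))} = K^{(1)}_{0,|I|+1}(x(v);z;I) + F^{(1)}_{|I|+1}(x(v);x(z);I)$, and then to track the cancellations that collapse the intermediate terms. First I would insert the explicit form \eqref{K1AI} of $K^{(1)}_{0,|I|+1}$ at $A=0$; its first three lines reproduce verbatim the first three lines of the claimed formula. The remaining two lines of \eqref{K1AI} I would rewrite through the displayed computation preceding the theorem, which uses the product representation \eqref{P01} of $P^{(0)}_1$ together with Proposition~\ref{prop:DlogHP} to convert the preimage sums $\sum_{k=0}^d$ into $D_I\log P^{(0)}_1$. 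This produces the third- and second-order logarithmic-derivative terms in $(x(v)-x(z))$, at the cost of a spurious simple-pole sum $-\sum_j D_{I\setminus u_j}\frac{\lambda^3}{(x(v)-x(z))(x(v)-x(u_j))(x(z)-x(u_j))^2(x(z)+y(u_j))^2}$ that must be disposed of later.

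Next I would feed in $F^{(1)}_{|I|+1}$ via its pole expansion \eqref{F1I}. For the diagonal coefficients \eqref{F1aI} I would substitute the near-diagonal Taylor expansion \eqref{DQPI-QPdiag} of $D_I\frac{Q^{(0)}_1}{P^{(0)}_1}$, which introduces the symbolic $D^0_I$-terms at orders $(x(z)-x(0))^{-2}$ and $(x(z)-x(0))^{-3}$ together with further $\partial\log P^{(0)}_1$ derivatives, and for the residue coefficients $\hat{F}^{(1)j}$ I would use Proposition~\ref{prop:hatFj}. The core of the argument is then a verification order by order in $(x(v)-x(z))$. At order $(x(v)-x(z))^{-3}$ the preimage sum $\frac{\lambda^2}{(x(v)-x(z))^3}D_I\sum_k\frac{1}{x(z)+y(\hat{z}^k)}$ coming from $K$ cancels exactly against $F^{(1)3}/(x(v)-x(z))^3$, leaving only the $\partial_{x(w)}\big(D_I\log P^{(0)}_1(x(v),x(w))-D_I\log P^{(0)}_1(x(z),x(w))\big)$ term. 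At order $(x(v)-x(z))^{-2}$ the $-\tfrac12\frac{\partial^2 D_I\log P^{(0)}_1}{\partial x(w)\partial x(z)}$ piece of $F^{(1)2}$ (with the $-\tfrac12$ supplied by \eqref{DQPI-QPdiag}) adds to the $+1$ copy from the rewriting of $K$ to produce the net $+\tfrac12$ prefactor, alongside the $D^0_I(x(z)-x(0))^{-2}$ term. At order $(x(v)-x(z))^{-1}$ the two third-order $\log P^{(0)}_1$ derivatives inside $F^{(1)1}$ cancel, since they differ only by the (immaterial) order of differentiation, leaving $D^0_I\frac{\lambda^2}{8(x(z)-x(0))^3}$.

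The hard part will be the threefold cancellation of the spurious $u_j$-pole terms. These arise from three distinct sources: the $\frac{\lambda}{(x(z)-x(u_j))^3}$ piece inside $\hat{F}^{(1)j}$ in \eqref{hatFj}, carried by the $(x(v)-x(u_j))^{-1}$ pole; the identical piece inside $F^{(1)1}$, carried by the $(x(v)-x(z))^{-1}$ pole; and the spurious mixed-pole sum from the rewriting of $K$, carried by $(x(v)-x(z))^{-1}(x(v)-x(u_j))^{-1}$. I would resolve the mixed pole by the partial fraction $\frac{1}{(x(v)-x(z))(x(v)-x(u_j))} = \frac{1}{x(z)-x(u_j)}\big(\frac{1}{x(v)-x(z)}-\frac{1}{x(v)-x(u_j)}\big)$; after this split the three contributions, each with coefficient $D_{I\setminus u_j}\frac{\lambda^3}{(x(z)-x(u_j))^3(x(z)+y(u_j))^2}$, sum to zero. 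What survives of $\hat{F}^{(1)j}$ is then precisely the $\Omega^{(0)\mathrm{reg}}_2(u_j,u_j)$, $W^{(1)}_1(u_j)$ and $\frac{\partial^2}{\partial(x(u_j))^2}$ terms assembling the penultimate braced $u_j$-sum, together with the off-diagonal $\Omega^{(0)}_2(u_i,u_j)$ double sum, which is exactly \eqref{DPI-g1-final}. The only genuinely delicate bookkeeping beyond these cancellations is ensuring the symbolic $D^0_I$-terms from $F^{(1)1}$ and $F^{(1)2}$ are kept consistent with Proposition~\ref{prop:DIlogQ} throughout; everything else is a routine, if lengthy, collection of like terms.
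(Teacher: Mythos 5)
Your proposal is correct and follows essentially the same route as the paper: the splitting \eqref{DIP-g1b} into $K^{(1)}_{0,|I|+1}$ and $F^{(1)}_{|I|+1}$, the pole data \eqref{F1I}--\eqref{F1aI} fed by the Taylor expansion \eqref{DQPI-QPdiag} and Proposition~\ref{prop:hatFj}, and the rewriting of the last line of \eqref{K1AI} at $A=0$ via Proposition~\ref{prop:DlogHP}. The order-by-order cancellations you spell out --- in particular the threefold cancellation of the spurious $u_j$-pole terms via the partial-fraction split of the mixed pole $\frac{1}{(x(v)-x(z))(x(v)-x(u_j))}$ --- are exactly the bookkeeping the paper compresses into ``putting everything together''.
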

The Theorem also holds for $I=\emptyset$ where it specifies
to Proposition~\ref{prop:hP11}.

\subsection{Loop equations for genus $g=1$}

From (\ref{DPI-g1-final}) we extract
\begin{align}
  [(x(v))^{-1}]D_I\frac{\hat{P}^{(1)}_1(x(v),x(z))}{P^{(0)}_1(x(v),x(z))}
\label{DPI-g1-final-v1}
&=
\lambda\sum_{k=0}^d W^{(1)}_{|I|+1}(\hat{z}^k;I)
+
D^{0}_I \frac{\lambda^2}{8(x(z)-x(0))^3}
\\
& +\sum_{j=1}^{|I|} 
 D_{I\setminus u_j} \Big\{
 \frac{\lambda^3 \Omega^{(0)reg}_2(u_j,u_j)}{(x(z)+y(u_j))^3}
-  \frac{\lambda^2 W^{(1)}_1(u_j)}{(x(z)+y(u_j))^2}
\nonumber
\\
&\qquad -
\frac{\lambda^3}{2(x(z)+y(u_j))^2}
\frac{\partial^2}{\partial (x(u_j))^2} 
\frac{1}{(x(z)+y(u_j))}
\Big\}
\nonumber
\end{align}
and
\begin{align}
&[(x(v))^{-2}]D_I\frac{\hat{P}^{(1)}_1(x(v),x(z))}{P^{(0)}_1(x(v),x(z))}
\label{DPI-g1-final-v2}
\\
&=
-\lambda\sum_{k=0}^d y(\hat{z}^k)W^{(1)}_{|I|+1}(\hat{z}^k;I)
-\lambda^2 \sum_{\substack{I_1\uplus I_2=I\\ I_2 \neq \emptyset}}
  \sum_{k=0}^d W^{(1)}_{|I|+1}(\hat{z}^k;I_1)
  W^{(0)}_{|I|+1}(\hat{z}^k;I_2)
\nonumber
\\
&- \sum_{j=1}^{|I|} 
\sum_{k=0}^d D_{I\setminus u_j} 
\frac{\lambda^3 \Omega^{(0)}_2(\hat{z}^k,u_j)}{(x(z)+y(u_j))^2}
\tag{*}
\\
&  +\frac{\lambda^2}{2}
\sum_{\substack{j,k=0\\ j\neq k}}^d D_I\Omega^{(0)}_2(\hat{z}^j,\hat{z}^k)
+  \frac{\lambda^2}{2} 
  \frac{\partial^2 (D_I\log P^{(0)}_1(x(w),x(z)))}{
    \partial x(w) \partial x(z)}\Big|_{w=z}
\tag{\dag}
  \\
&
-D^0_I\frac{\lambda^2}{8(x(z)-x(0))^2}
+x(z) D^{0}_I \frac{\lambda^2}{8(x(z)-x(0))^3}
  \nonumber
  \\
& +\sum_{j=1}^{|I|} x(u_j)
 D_{I\setminus u_j} \Big\{
 \frac{\lambda^3 \Omega^{(0)reg}_2(u_j,u_j)}{(x(z)+y(u_j))^3}
-  \frac{\lambda^2 W^{(1)}_1(u_j)}{(x(z)+y(u_j))^2}
\nonumber
\\
&\qquad\qquad +
\frac{\lambda^3}{2(x(z)+y(u_j))^2}
\frac{\partial^2}{\partial (x(u_j))^2} 
\frac{1}{(x(z)+y(u_j))}
\Big\}
\nonumber
\\
&+  \frac{1}{2} \sum_{\substack{i,j=1 \\ i \neq  j}}^{|I|}
  D_{I\setminus \{u_i,u_j\}} \frac{\lambda^4 \Omega^{(0)}_2(u_i,u_j)}{
(x(z)+y(u_i))^2(x(z)+y(u_j))^2}\;.
\tag{**}
\end{align}
The combination of terms in the lines (*), $(\dag)$ and (**) simplifies
considerably:
\begin{lemma}
\label{lemma:qle-simplify}
\begin{align}
& - \sum_{j=1}^{|I|} 
\sum_{k=0}^d D_{I\setminus u_j} 
\frac{\lambda^3 \Omega^{(0)}_2(\hat{z}^k,u_j)}{(x(z)+y(u_j))^2}
+  \frac{\lambda^2}{2} 
  \frac{\partial^2 (D_I\log P^{(0)}_1(x(w),x(z)))}{
    \partial x(w) \partial x(z)}\Big|_{w=z}
  \nonumber
  \\
&
+\frac{\lambda^2}{2}\!
\sum_{\substack{j,k=0\\ j\neq k}}^d D_I\Omega^{(0)}_2(\hat{z}^j,\hat{z}^k)
+  \frac{1}{2} \sum_{\substack{i,j=1 \\ i \neq  j}}^{|I|}\!
  D_{I\setminus \{u_i,u_j\}} \frac{\lambda^4 \Omega^{(0)}_2(u_i,u_j)}{
(x(z)+y(u_i))^2(x(z)+y(u_j))^2}
  \nonumber
\\
&=
 -\frac{\lambda^2}{2}
 \sum_{k=0}^d D_I\Omega^{(0)reg}_2(\hat{z}^k,\hat{z}^k)
+\frac{\lambda^3}{6}
\sum_{j=1}^{|I|}
\frac{\partial}{\partial x(u_j)}
\Big(D_{I\setminus u_j}
\frac{1}{(x(z)+y(u_j))^3}\Big)\;.
\label{lemma-qle1}
\end{align}
\begin{proof}
  We start from (\ref{eq:DlogP}) for $v\mapsto z\mapsto w$ and
  differentiate with respect to $x(z)$. In the second step we take 
(\ref{lle-g0}) for $I\mapsto I\cup \hat{w}^l$ into account:
\begin{align}
&  \frac{\partial (D_I\log P^{(0)}_1(x(z),x(w)))}{\partial x(z)}
\label{DPI-partialz}  \\
  &= 
  \sum_{l=0}^d D_I \frac{1}{x(z)+y(\hat{w}^l)}
 -\sum_{j=1}^{|I|} D_{I\setminus u_j}
  \frac{\lambda}{(x(z)-x(u_j))^2(x(w)+y(u_j))}
  \nonumber
  \\
&= -D_I
\sum_{k,l=0}^d \Big(W^{(0)}_2(\hat{z}^k;\hat{w}^l)
-\frac{\delta_{k,l}}{(x(\hat{z}^k)-x(\hat{w}^l))}\Big)
  \nonumber
\\
&-\sum_{j=1}^{|I|}D_{I\setminus u_j}
\Big(
\frac{\lambda}{(x(z)-x(u_j))^2(x(w)+y(u_j))}
+
\sum_{l=0}^d
D_{\hat{w}^l} \frac{1}{x(z)+y(u_j)}
\Big)\;.
  \nonumber
\end{align}
We have 
$\sum_{l=0}^d D_{\hat{w}^l} \frac{1}{x(z)+y(u_j)}
=-\sum_{l=0}^d \frac{\lambda W^{(0)}_2(u_j;\hat{w}^l)}{(x(z)+y(u_j))^2}$.
When differentiating with respect to $x(w)$ at $w=z$, this term is
the first one in (\ref{lemma-qle1}),
but with relative factor
$-2$. Moreover, 
in the limit $w\to z$ the contributions with $l\neq k$ in the third line
of (\ref{DPI-partialz}) cancel
the first term of the second line of (\ref{lemma-qle1}),
whereas for $k=l$ the regularised $\Omega^{(0)reg}_2(\hat{z}^k,\hat{z}^k)$
appears. We thus have
\begin{align*}
& - \sum_{j=1}^{|I|} 
\sum_{k=0}^d D_{I\setminus u_j} 
\frac{\lambda^3 \Omega^{(0)}_2(\hat{z}^k,u_j)}{(x(z)+y(u_j))^2}
+  \frac{\lambda^2}{2} 
  \frac{\partial^2 (D_I\log P^{(0)}_1(x(w),x(z)))}{
    \partial x(w) \partial x(z)}\Big|_{w=z}
  \nonumber
  \\
&
+\frac{\lambda^2}{2}
\sum_{\substack{j,k=0\\ j\neq k}}^d D_I\Omega^{(0)}_2(\hat{z}^j,\hat{z}^k)
  \nonumber
\\
&= -\frac{\lambda^2}{2}
\sum_{k=0}^d D_I\Omega^{(0)reg}_2(\hat{z}^k,\hat{z}^k)
\\
&+\frac{\lambda^2}{2}
\sum_{j=1}^{|I|}\sum_{I_1\uplus I_2 = I\setminus u_j}
\hspace*{-5mm} D_{I_1}
\frac{\lambda}{(x(z)+y(u_j))^2}
D_{I_2}\Big(\frac{\delta_{I_2,\emptyset}}{(x(z)-x(u_j))^2}
-\sum_{l=0}^d \Omega^{(0)}_2(u_j,\hat{z}^l)
\Big)\;.
\end{align*}
In the $D_{I_2}$ operation we use the symmetry under
$u_j\leftrightarrow \hat{z}^l$ to take a
$x(u_j)$ derivative out.
Using (\ref{lle-g0}) we then get
\begin{align*}
&D_{I_2}\Big(\frac{\delta_{I_2,\emptyset}}{(x(z)-x(u_j)^2)}
-\sum_{l=0}^d
\Omega^{(0)}_2(u_j,\hat{z}^l)
\Big)
\\
&=\frac{\partial}{\partial x(u_j)}
\Big(\frac{\delta_{I_2,\emptyset}}{(x(z)-x(u_j))}
-\sum_{l=0}^d
W^{(0)}_2(z^l;I_2\cup u_j)
\Big)
\\
&=\frac{\partial}{\partial x(u_j)}
\Big(D_{I_2} \frac{1}{x(z)+y(u_j)}
+\sum_{i=1}^{|I_2|} D_{I_2\setminus u_i} D_{u_j}\frac{1}{x(z)+y(u_i)}
\Big)
\\
&=\frac{\partial}{\partial x(u_j)}
D_{I_2} \frac{1}{x(z)+y(u_j)}
-\sum_{i=1}^{|I_2|} D_{I_2\setminus u_i} \frac{
  \lambda \Omega^{(0)}_2(u_i,u_j)}{(x(z)+y(u_i))^2}\;.
\end{align*}
Inserted back, the last term leads to a double
sum over pairs $i\neq j$ and a $D_{I\setminus \{u_i,u_j\}}$ operation,
which exactly cancels the second term of the second line of
(\ref{lemma-qle1}). An obvious rearrangement of
$\sum_{I_1\uplus I_2=I\setminus u_j}
D_{I_1}\frac{1}{(x(z)+y(u_j))^2}  \frac{\partial}{\partial x(u_j)}
D_{I_2} \frac{1}{x(z)+y(u_j)}$ confirms the assertion.
\end{proof}
\end{lemma}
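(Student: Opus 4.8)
The plan is to evaluate the left-hand side of \eqref{lemma-qle1} by reducing every term to the explicit genus-zero data, namely the formula \eqref{eq:DlogP}--\eqref{eq:DlogP-poleu} for $D_I\log P^{(0)}_1$ together with the linear loop equations \eqref{lle-g0}. First I would take \eqref{eq:DlogP} with the relabelling $v\mapsto z$, $z\mapsto w$ and differentiate once with respect to $x(z)$. This produces a preimage sum $\sum_{l=0}^d D_I\frac{1}{x(z)+y(\hat{w}^l)}$ together with the $u_j$-pole terms carried by \eqref{eq:DlogP-poleu}. The preimage sum is the crucial object: applying \eqref{lle-g0} with the index set enlarged by each insertion point (i.e.\ $I\mapsto I\cup\hat{w}^l$) rewrites it as $-D_I\sum_{k,l}\big(W^{(0)}_2(\hat{z}^k;\hat{w}^l)-\frac{\delta_{k,l}}{x(\hat{z}^k)-x(\hat{w}^l)}\big)$, trading the $y$-dependence for the bilinear differential while isolating the coincident-point singularity, and simultaneously modifying the $u_j$-pole terms by a contribution $\sum_{l}D_{\hat{w}^l}\frac{1}{x(z)+y(u_j)}$.

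Next I would differentiate a second time in $x(w)$ and pass to the diagonal $w\to z$. The delicate step is the splitting into $k\neq l$ and $k=l$: the off-diagonal contributions reproduce verbatim the term $\frac{\lambda^2}{2}\sum_{j\neq k}D_I\Omega^{(0)}_2(\hat{z}^j,\hat{z}^k)$ on the left of \eqref{lemma-qle1}, whereas the coincident terms survive only through the regularised diagonal $\Omega^{(0)reg}_2(\hat{z}^k,\hat{z}^k)$ appearing on the right. One must also track the factor from $\sum_{l}D_{\hat{w}^l}\frac{1}{x(z)+y(u_j)}=-\sum_{l}\frac{\lambda W^{(0)}_2(u_j;\hat{w}^l)}{(x(z)+y(u_j))^2}$; upon differentiation this matches the first term of the left-hand side of \eqref{lemma-qle1} but with relative factor $-2$, so that the two combine with exactly the coefficient needed.

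Finally I would treat the $u_j$-pole contributions. Using the $u_j\leftrightarrow\hat{z}^l$ symmetry of $\Omega^{(0)}_2$, one pulls an $x(u_j)$-derivative outside the remaining $D$-operation and invokes \eqref{lle-g0} once more. This generates a double sum over distinct pairs $i\neq j$ carrying $D_{I\setminus\{u_i,u_j\}}$, which cancels the last term on the left of \eqref{lemma-qle1} pairwise. The surviving single-index piece is $\sum_{j}\sum_{I_1\uplus I_2=I\setminus u_j}D_{I_1}\frac{1}{(x(z)+y(u_j))^2}\frac{\partial}{\partial x(u_j)}D_{I_2}\frac{1}{x(z)+y(u_j)}$; recognising that the factors $\frac{1}{(x(z)+y(u_j))^2}$ and $\frac{1}{x(z)+y(u_j)}$ together with the Leibniz distribution of $D_{I\setminus u_j}$ equal $\tfrac13$ of $\frac{\partial}{\partial x(u_j)}D_{I\setminus u_j}\frac{1}{(x(z)+y(u_j))^3}$ (by the symmetry among the three factors of $\frac{1}{x(z)+y(u_j)}$) yields the right-hand side of \eqref{lemma-qle1}, the $\tfrac13$ producing the $\tfrac{\lambda^3}{6}$ coefficient.

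The main obstacle is purely bookkeeping: keeping the diagonal regularisation separate from the genuine two-point differential through two differentiations and the limit $w\to z$, and pinning down the relative numerical factors (the $-2$ from the double pole and the $\tfrac13$ from the cube) so that the off-diagonal $\Omega^{(0)}_2$-terms and the $i\neq j$ double sums cancel exactly rather than approximately.
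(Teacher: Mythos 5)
Your proposal is correct and follows essentially the same route as the paper's own proof: differentiating \eqref{eq:DlogP} (with $v\mapsto z$, $z\mapsto w$) in $x(z)$, invoking the linear loop equation \eqref{lle-g0} with $I\mapsto I\cup\hat{w}^l$, splitting $k\neq l$ from $k=l$ on the diagonal $w\to z$ to produce the off-diagonal $\Omega^{(0)}_2$ terms and the regularised diagonal, tracking the relative factor $-2$ from $\sum_l D_{\hat{w}^l}\frac{1}{x(z)+y(u_j)}$, and finally using the $u_j\leftrightarrow\hat{z}^l$ symmetry plus \eqref{lle-g0} to cancel the $i\neq j$ double sum and to recombine the surviving piece into $\frac{1}{3}\frac{\partial}{\partial x(u_j)}D_{I\setminus u_j}\frac{1}{(x(z)+y(u_j))^3}$, whence the coefficient $\frac{\lambda^3}{6}$. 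Your explicit justification of the $\tfrac13$ via the symmetry of the three factors $\frac{1}{x(z)+y(u_j)}$ is exactly the ``obvious rearrangement'' the paper leaves to the reader.
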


On the other hand, comparing (\ref{def-H}) 
with (\ref{DSE-W}) we get as leading coefficient
$[(x(v))^{-1}] H^{(g)}_{|I|+1}(x(v);q;I)
= -\lambda W^{(g)}_{|I|+1}(q;I)$ for any $g\geq 1$. 
Then (\ref{def-P}) and (\ref{def-hP}) show
\begin{align}
&[(x(v))^{-2}]\frac{\hat{P}^{(g)}_{|I|+1}(x(v),x(z);I)}{P^{(0)}_1(x(v),x(z))}  
\\
&= \frac{\lambda}{N}\sum_{l=1}^d \frac{\lambda
W^{(g)}_{|I|+1}(\varepsilon_l;I)}{x(z)-x(\varepsilon_l)}
-\lambda^2 \sum_{j=1}^{|I|} \frac{W^{(g)}_{|I|}(u_j;I\setminus u_j)}{
  x(z)-x(u_j)}\quad \text{for }g\geq 1
\nonumber
\end{align}
as leading coefficient. Comparison with 
(\ref{DPI-g1-final-v1}),
(\ref{DPI-g1-final-v2}) and Lemma~\ref{lemma:qle-simplify}
gives:
\begin{proposition}
  The genus $1$ meromorphic functions $W^{(1)}_{|I|+1}(z;I)$
  satisfy the linear loop equation
\begin{align}
\sum_{k=0}^d W^{(1)}_{|I|+1}(\hat{z}^k;I)
&=
-D^{0}_I \frac{\lambda}{8(x(z)-x(0))^3}
  \label{lleq-g1}
\\
& -\sum_{j=1}^{|I|} 
 D_{I\setminus u_j} \Big\{
 \frac{\lambda^2 \Omega^{(0)reg}_2(u_j,u_j)}{(x(z)+y(u_j))^3}
-  \frac{\lambda W^{(1)}_1(u_j)}{(x(z)+y(u_j))^2}
\nonumber
\\
&\qquad\qquad -
\frac{\lambda^2}{2(x(z)+y(u_j))^2}
\frac{\partial^2}{\partial (x(u_j))^2} 
\frac{1}{(x(z)+y(u_j))}
\Big\}
\nonumber
\end{align}
and the quadratic loop equation
\begin{align}
  &-\sum_{k=0}^d y(\hat{z}^k)W^{(1)}_{|I|+1}(\hat{z}^k;I)
  \label{qleq-g1}
\\
&=\lambda \sum_{\substack{I_1\uplus I_2=I\\ I_2 \neq \emptyset}}
  \sum_{k=0}^d W^{(1)}_{|I|+1}(\hat{z}^k;I_1)
  W^{(0)}_{|I|+1}(\hat{z}^k;I_2)
+\frac{\lambda}{2}
\sum_{k=0}^d D_I\Omega^{(0)reg}_2(\hat{z}^k,\hat{z}^k)
\tag{*}
\\
&  
-\frac{\lambda^2}{6}
\sum_{j=1}^{|I|}
\frac{\partial}{\partial x(u_j)}
\Big(D_{I\setminus u_j}
\frac{1}{(x(z)+y(u_j))^3}\Big)
\tag{\dag}
\\
&
+D^0_I\frac{\lambda}{8(x(z)-x(0))^2}
-x(z) D^{0}_I \frac{\lambda}{8(x(z)-x(0))^3}
\tag{\S}
  \\
& -\sum_{j=1}^{|I|} x(u_j)
 D_{I\setminus u_j} \Big\{
 \frac{\lambda^2 \Omega^{(0)reg}_2(u_j,u_j)}{(x(z)+y(u_j))^3}
-  \frac{\lambda W^{(1)}_1(u_j)}{(x(z)+y(u_j))^2}
\tag{\ddag}
\\
&\qquad\qquad -
\frac{\lambda^2}{2(x(z)+y(u_j))^2}
\frac{\partial^2}{\partial (x(u_j))^2} 
\frac{1}{(x(z)+y(u_j))}
\Big\}
\tag{\ddag}
\\
&+ \frac{\lambda}{N}\sum_{l=1}^d \frac{
W^{(1)}_{|I|+1}(\varepsilon_l;I)}{x(z)-x(\varepsilon_l)}
-\lambda \sum_{j=1}^{|I|} \frac{W^{(1)}_{|I|}(u_j;I\setminus u_j)}{
  x(z)-x(u_j)}\;.
\nonumber
\end{align}
\end{proposition}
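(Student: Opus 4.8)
The plan is to read off the Laurent expansion of $D_I\frac{\hat{P}^{(1)}_1(x(v),x(z))}{P^{(0)}_1(x(v),x(z))}$ about $\frac{1}{x(v)}=0$ in two independent ways and to match the coefficients of $(x(v))^{-1}$ and $(x(v))^{-2}$. On one side the expansion is supplied by the closed-form solution \eqref{DPI-g1-final}, whose coefficients of $(x(v))^{-1}$ and $(x(v))^{-2}$ are exactly \eqref{DPI-g1-final-v1} and \eqref{DPI-g1-final-v2}; these are already recorded, so nothing remains to be done on this side beyond keeping track of powers of $\lambda$.

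On the other side I would compute the same two coefficients directly from the defining relations \eqref{def-H}, \eqref{def-P} and \eqref{def-hP}. Three inputs enter: (i) comparing \eqref{def-H} with the connecting equation \eqref{DSE-W} gives $[(x(v))^{-1}]\,H^{(g)}_{|I|+1}(x(v);z;I)=-\lambda W^{(g)}_{|I|+1}(z;I)$ for every $g\geq 1$; (ii) for $g=1$ all Kronecker-delta seed terms in \eqref{def-P} drop out, so $P^{(1)}_{|I|+1}(x(v),x(z);I)$ carries neither an $x(v)$ nor an $(x(v))^0$ term and therefore starts at order $(x(v))^{-1}$, while the Catalan correction $\frac{\lambda}{(x(v)-x(z))^2}Q^{(0)}_{|I|+1}$ in \eqref{def-hP} is $O((x(v))^{-3})$ because $Q^{(0)}_{|I|+1}(x(v),\cdot)\to 0$; (iii) $P^{(0)}_1(x(v),x(z))\sim x(v)$. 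Dividing a numerator of order $(x(v))^{-1}$ by a denominator of order $x(v)$ forces $[(x(v))^{-1}]\frac{\hat{P}^{(1)}_1}{P^{(0)}_1}=0$ and identifies $[(x(v))^{-2}]\frac{\hat{P}^{(1)}_1}{P^{(0)}_1}$ with the pole expression displayed just before the statement, namely $\frac{\lambda^2}{N}\sum_l\frac{W^{(1)}_{|I|+1}(\varepsilon_l;I)}{x(z)-x(\varepsilon_l)}-\lambda^2\sum_j\frac{W^{(1)}_{|I|}(u_j;I\setminus u_j)}{x(z)-x(u_j)}$.

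Matching the two $(x(v))^{-1}$ coefficients then sets \eqref{DPI-g1-final-v1} equal to $0$; cancelling one power of $\lambda$ reproduces the linear loop equation \eqref{lleq-g1} term by term, the inhomogeneity being carried by the symbolic expression $D^0_I x(0)$ together with the $D_{I\setminus u_j}\{\cdots\}$ sum. Matching the two $(x(v))^{-2}$ coefficients equates \eqref{DPI-g1-final-v2} with the pole expression above; isolating $-\lambda\sum_{k=0}^d y(\hat{z}^k)W^{(1)}_{|I|+1}(\hat{z}^k;I)$ and invoking Lemma~\ref{lemma:qle-simplify} to collapse the block of terms tagged (*), (\dag), (**) into the compact right-hand side of \eqref{lemma-qle1} leaves, after one division by $\lambda$, precisely the quadratic loop equation \eqref{qleq-g1}.

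The $(x(v))^{-1}$ matching is routine; the real work is the $(x(v))^{-2}$ matching. The main obstacle is the bookkeeping there: one must propagate the symbolic quantities $D^0_I x(0)$ consistently through \eqref{DPI-g1-final-v2}, confirm that the $x(z)$- and $x(u_j)$-weighted blocks (the lines that become (\S) and (\ddag) in \eqref{qleq-g1}) assemble with the correct signs, and---above all---rely on the nontrivial cancellation inside Lemma~\ref{lemma:qle-simplify}, where the double preimage sums and the $D_I\Omega^{(0)}_2(\hat{z}^j,\hat{z}^k)$ contributions conspire with $\frac{\partial^2}{\partial x(w)\,\partial x(z)}D_I\log P^{(0)}_1$ so that only the regularised diagonal $\sum_k D_I\Omega^{(0)reg}_2(\hat{z}^k,\hat{z}^k)$ and a single total $x(u_j)$-derivative survive.
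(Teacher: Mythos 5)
Your proposal is correct and follows the paper's own route exactly: the paper likewise obtains \eqref{lleq-g1} by setting the coefficient \eqref{DPI-g1-final-v1} to zero (since $[(x(v))^{-1}]\,\hat{P}^{(1)}_{|I|+1}= -\lambda W^{(1)}$-type data divided by $P^{(0)}_1\sim x(v)$ kills the $(x(v))^{-1}$ order of the ratio), and obtains \eqref{qleq-g1} by equating \eqref{DPI-g1-final-v2} with the $(x(v))^{-2}$ coefficient read off from \eqref{def-H}, \eqref{def-P}, \eqref{def-hP} via the connecting equation \eqref{DSE-W}, with Lemma~\ref{lemma:qle-simplify} collapsing the lines (*), (\dag), (**) precisely as you describe. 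Your observations that the Kronecker-delta seeds vanish for $g=1$ and that the Catalan correction $\frac{\lambda\,Q^{(0)}}{(x(v)-x(z))^2}=\mathcal{O}((x(v))^{-3})$ is invisible at these orders are exactly the (implicit) ingredients of the paper's argument.
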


\section{The recursion formula}

\label{sec:recursion}

We learn from the loop equations (\ref{qle-g0}) and (\ref{lle-g0})
for $g=0$, the loop equations 
(\ref{qleq-g1}) and (\ref{lleq-g1}) for $g=1$
and the identity $y(u)=-x(-u)$, see (\ref{glInvolution}):
\begin{itemize}
\item The only poles of
  $z\mapsto x'(z) W^{(g)}_{|I|+1}(z;I)$ for $2g+|I|>1$ are located at the
  ramification points $z=\beta_i$ of $x$, at the opposite diagonals $z=-u_i$
  for $u_i\in I$ and (for $g\geq 1$) at $z=0$. The pole at $z=u_i$
  in the last line of
  (\ref{qleq-g1}) is due to the particular case $I_2=\{u_i\}$ in the
  first term of the line (*) of (\ref{qleq-g1}); it is not a pole of
  $x'(z) W^{(1)}_{|I|+1}(z;I)$.  Similarly, the pole at
  $z=\varepsilon_k$ in the last line of (\ref{qleq-g1}) is due to the
  pole of $y(z)$ at $z=\varepsilon_k$ in the first line of
  (\ref{qleq-g1}). The same discussion applies to (\ref{qle-g0}).

  Moving the integration contour to the
  complementary poles, we get
\begin{align}
&x'(z) W^{(g)}_{|I|+1}(z;I)dz
= \Res\displaylimits_{q\to z}
\frac{x'(q) W^{(g)}_{|I|+1}(q;I)dqdz}{q-z}
\label{Wg1-residue-1}
\\
&= \sum_{i=1}^{2d} \Res\displaylimits_{q\to \beta_i}
\frac{dz}{z-q}
x'(q) W^{(g)}_{|I|+1}(q;I)dq
\nonumber
\\
& +\sum_{i=1}^{|I|} \Res\displaylimits_{q\to -u_i}
\frac{dz}{z-q}
x'(q) W^{(g)}_{|I|+1}(q;I)dq
+\Res\displaylimits_{q\to 0}
\frac{dz}{z-q}
x'(q) W^{(g)}_{|I|+1}(q;I)dq\;.
\nonumber
\end{align}

\item The linear loop equations (\ref{lle-g0}) and (\ref{lleq-g1}) imply
  that the poles
  of $z\mapsto x'(z) W^{(g)}_{|I|+1}(I;z)dz$ at 
  $z=-u_i$ and $z=0$ have vanishing residue,
  \[
    \Res\displaylimits_{z\to u_i} x'(z) W^{(g)}_{|I|+1}(I;z)dz=0\;,\qquad
    \Res\displaylimits_{z\to 0} x'(z) W^{(g)}_{|I|+1}(I;z)dz=0
\]    
(since the integrands are total differentials in $z$).
Renaming $z\mapsto q$, we thus have
\begin{align}
 0&= -\sum_{i=1}^{|I|} \Res\displaylimits_{q\to -u_i}
\frac{dz}{z+u_i}
x'(q) W^{(g)}_{|I|+1}(I;q)dq
-\Res\displaylimits_{q\to 0}
\frac{dz}{z}
x'(q) W^{(g)}_{|I|+1}(I;q)dq\;,
\nonumber
\end{align}
which we can safely add to (\ref{Wg1-residue-1}).

\item Let 
$\sigma_i(z)$ be the local Galois conjugate near $\beta_i$, i.e.\ the
unique preimage 
$\hat{z}^{j_i}=\sigma_i(z) \in x^{-1}(x(z))$ with
$\lim_{z\to \beta_i}\sigma_i(z)=\beta_i$ and $\sigma_i(z)\not\equiv z$.
A residue at $\beta_i$ is invariant
under Galois involution,
$\Res\displaylimits_{q\to \beta_i} f(q)dq=
\Res\displaylimits_{q\to \beta_i} f(\sigma_i(q))d\sigma_i(q)$, so that
\begin{align*}
&\Res\displaylimits_{q\to \beta_i}
\frac{dz}{z-q}
x'(q) W^{(g)}_{|I|+1}(q;I)dq
\\
&=\frac{1}{2}
\Res\displaylimits_{q\to \beta_i}
\Big(\frac{dz}{z-q}
x'(q) W^{(g)}_{|I|+1}(q;I)dq
+\frac{dz}{z-\sigma_i(q)}
x'(\sigma_i(q)) W^{(1)}_{|I|+1}(\sigma_i(q);I)d\sigma_i(q)\Big)
\\
&=\frac{1}{2}
\Res\displaylimits_{q\to \beta_i}
\Big(\frac{dz}{z-q}-\frac{dz}{z-\sigma_i(q)}\Big)
x'(q) W^{(g)}_{|I|+1}(q;I)dq
\\
&+\frac{1}{2}
\Res\displaylimits_{q\to \beta_i}
\frac{dz}{z-\sigma_i(q)}
\Big(x'(q) W^{(g)}_{|I|+1}(q;I)dq
+x'(\sigma_i(q)) W^{(g)}_{|I|+1}(\sigma_i(q);I)d\sigma_i(q)\Big)\;.
\end{align*}  
The last line vanishes because
$x'(q)dq=x'(\sigma_i)d\sigma_i(q)$
and $W^{(g)}_{|I|+1}(q;I)+W^{(g)}_{|I|+1}(\sigma_i(q);I)$
is holomorphic at $q=\beta_i$ as consequence of the
linear loop equations (\ref{lle-g0}) and (\ref{lleq-g1}).

\end{itemize}
In summary,
\begin{align}
x'(z) W^{(g)}_{|I|+1}(z;I)dz
&= \sum_{i=1}^{2d} \Res\displaylimits_{q\to \beta_i}
\Big(\frac{1}{2}\Big(\frac{dz}{z-q}-\frac{dz}{z-\sigma_i(q)}\Big)
x'(q) W^{(g)}_{|I|+1}(q;I)dq\Big)
\nonumber
\\
& +\sum_{i=1}^{|I|} \Res\displaylimits_{q\to -u_i}
\Big(
\Big(\frac{dz}{z-q}-\frac{dz}{z+u_i}\Big)
x'(q) W^{(g)}_{|I|+1}(q;I)dq\Big)
\nonumber
\\
&+\Res\displaylimits_{q\to 0}
\Big(\Big(\frac{dz}{z-q}-\frac{dz}{z}\Big)
x'(q) W^{(g)}_{|I|+1}(q;I)dq\Big)\;,
\label{Wg1-residue}
\end{align}
where the last line vanishes identically for $g=0$.

We can eventually complete the
\begin{proof}[Proof of Theorem \ref{thm:mainrecursion} for $g\leq 1$]
We identify the three contributions on the rhs of (\ref{Wg1-residue}).
\begin{enumerate}
\item  \emph{(poles at $z=\beta_i$)}
For $g=1$,  only the lhs and the line (*) of the
rhs of (\ref{qleq-g1}) have poles at
$z=\beta_i$; more precisely only the principal preimage $k=0$ and
the Galois preimage $k=k_i\in\{1,...,d\}$ with $\hat{z}^{k_i}=\sigma_i(z)$.
All terms with other $k$ and all other lines in  
(\ref{qleq-g1}) are holomorphic at $z=\beta_i$.
Similarly, only the lhs and the first term of the rhs of 
(\ref{qle-g0}) have poles at
$z=\beta_i$; more precisely only the principal preimage $k=0$ and
the Galois preimage $k=k_i\in\{1,...,d\}$ with $\hat{z}^{k_i}=\sigma_i(z)$.
The lhs of the
quadratic loop equations for $z\mapsto q$, multiplied by
$x'(q)dq=x'(\sigma_i)d\sigma_i(q)$, is written as
\begin{align*}
&-\sum_{k=0}^d   y(\hat{q}^k)  W^{(g)}_{|I|+1}(\hat{q}^k;I)  x'(q)dq
\\
&=
\big(-y(q)  W^{(g)}_{|I|+1}(q;I)  
-y(\sigma_i(q)) W^{(g)}_{|I|+1}(\sigma_i(q);I)\big)  x'(q)dq
+ \mathcal{O}((q-\beta_i)^1)dq
\\
&=
-(y(q)-y(\sigma_i(q))) W^{(g)}_{|I|+1}(q;I)  x'(q)dq
\\
&- y(\sigma_i(q)) (W^{(g)}_{|I|+1}(q;I)  +W^{(g)}_{|I|+1}(\sigma_i(q);I))  x'(q)dq
+ \mathcal{O}((q-\beta_i)^1)dq
\tag{**}
\\
&=
-(y(q)-y(\sigma_i(q))) W^{(g)}_{|I|+1}(q;I)  x'(q)dq
+ \mathcal{O}((q-\beta_i)^1)dq\;,
\end{align*}
where we used that (\ref{lle-g0}) and  (\ref{lleq-g1}) make
the whole line (**)
of order $\mathcal{O}((q-\beta_i)^1)dq$.
Thus, the quadratic loop equations multiplied by $
\frac{x'(q)dq}{-(y(q)-y(\sigma_i(q)))}$ (which is of order
$\mathcal{O}((q-\beta_i)^0)dq$), takes the form
\begin{align}
  & W^{(g)}_{|I|+1}(q;I)x'(q)dq
  \label{W1-nearbeta}
\\
&=-\frac{\lambda x'(q)dq}{ (y(q)-y(\sigma_i(q)))}
\sum_{q'\in \{q,\sigma_i(q)\}}
\Big(
\frac{1}{2}\sum_{\substack{I_1\uplus I_2=I\\ g_1+g_1=g\\
    (g_i,I_i)  \neq (0,\emptyset)}}
W^{(g_1)}_{|I_1|+1}(q';I_1)
W^{(g_2)}_{|I_2|+1}(q';I_2)
\nonumber
\\[-1ex]
&\hspace*{5cm} +\frac{1}{2}D_I \Omega^{(g-1)reg}_{2}(q',q')\Big)
+ \mathcal{O}((q-\beta_i)^0)dq\;.
\nonumber
\end{align}
When inserting this into (\ref{Wg1-residue}), both cases $q'=q$ and
$q'=\sigma_i(q)$ give the same contribution since the residue at
$\beta_i$ is invariant under local Galois involution considering the
linear loop equation.  When translating to
$\omega^{(g)}_{n+1}(z,u_1,...,u_n) =\lambda^{2g+n-1} d_{u_1}\cdots
d_{u_n} W^{(g)}_{n+1}(z;u_1,...,u_n) dx(z)$ the first line of
(\ref{eq:mainformula}) with recursion kernel $K_i(z;q)$ results.

\item \emph{(poles at $z=-u_j$)}
For $g=1$, these are present on the lhs and the line (*) of the rhs of
(\ref{qleq-g1}), there only in the principal preimage $k=0$, 
and in the lines (\dag) and
(\ddag) of (\ref{qleq-g1}), there only in the $j$-summands.
The line (\dag), by the linear loop equation
(\ref{lle-g0}) at genus $g=0$, takes the form 
\begin{align*}
  \eqref{qleq-g1}_\dag
&=-\frac{\lambda^2}{12}
\frac{\partial^3}{\partial x(u_j)\partial (x(z))^2}
\Big(D_{I\setminus u_j}
\frac{1}{(x(z)+y(u_j))}\Big)
+\mathcal{O}((z+u_j)^0)
\\
&=\frac{\lambda^2}{12}
\frac{\partial^3W^{(0)}_{|I|+1}(z;I)}{\partial x(u_j)\partial (x(z))^2}
+\mathcal{O}((z+u_j)^0)
\\
&=\frac{\lambda^2}{12}
\frac{\partial^2 D_{I\setminus u_j} \Omega^{(0)}_{2}(z,u_j)}{\partial (x(z))^2}
+\mathcal{O}((z+u_j)^0)\;.
\end{align*}
By the
linear loop equation (\ref{lleq-g1}), the $j$-summand of the lines (\ddag)
can be written as
\begin{align*}
  \eqref{qleq-g1}_\ddag
&= x(u) W^{(1)}_{|I|+1}(z;I) + \mathcal{O}((z+u_j)^0)\;.
\end{align*}
Similarly, only the first two lines of
(\ref{qle-g0}) have poles at $z=-u_j$, again only the principal
preimages $k=0$ and the $j$-summand of the last term of the second line of 
(\ref{qle-g0}). The latter is with 
(\ref{lle-g0}) also of the form 
$x(u) W^{(0)}_{|I|+1}(z;I) + \mathcal{O}((z+u_j)^0)$ that we noticed for $g=1$.
We bring these terms
$x(u) W^{(g)}_{I|+1}(z;I)$ to the lhs of the quadratic loop equations,
rename $z\mapsto q$ and
multiply by $\frac{x'(q)dq}{-(y(q)+x(u_j))}$ to get
\begin{align}
&W^{(g)}_{|I|+1}(q;I) x'(q) dq
\label{W1-nearu}
\\
&=\frac{\lambda x'(q)dq}{-(y(q)+x(u_j))}
\Big(\frac{1}{2}
\sum_{\substack{I_1\uplus I_2=I\\ g_1+g_1=g\\
    (g_i,I_i)  \neq (0,\emptyset)}}
W^{(g_1)}_{|I_1|+1}(q;I_1)
  W^{(g_2)}_{|I_2|+1}(q;I_2)
+\frac{1}{2}
D_I\Omega^{(g-1)reg}_2(q,q)
\nonumber
\\[-2ex]
&\hspace*{5.5cm}
+\frac{\lambda}{12}
\frac{\partial^2 D_{I\setminus u_j} \Omega^{(g-1)}_{2}(q,u_j)}{\partial (x(q))^2}
\Big)
+\mathcal{O}((q+u_j)^{-1})dq\;.
\nonumber
\end{align}
Note that the undetermined residue poses no problem since it is
projected away in (\ref{Wg1-residue}).
When translating to $\omega^{(g)}_{n+1}(z,u_1,...,u_n)
=\lambda^{2g+n-1} d_{u_1}\cdots d_{u_n}  W^{(g)}_{n+1}(z;u_1,...,u_n) dx(z)$,
the second and third lines of (\ref{eq:mainformula}) with recursion kernel
$K_{u_j}(z;q)$ result. Here one has take into account that the differential
$d_{u_j}$ does not commute with $K_{u_j}(z;q)$. This makes it necessary to
keep the primitive $d_{u_j}^{-1}$ inside the residue.

\item \emph{(pole at $z=0$)}
There is no such pole for $g=0$. 
For $g=1$,  this pole is present on the lhs and the line (*) of the rhs of
(\ref{qleq-g1}), there only in the principal preimage $k=0$, 
and in both terms of the line (\S). We write the first term as
\begin{align*}
D^0_I\frac{\lambda}{8(x(z)-x(0))^2}
&=-\frac{\lambda}{8}\frac{\partial^2 (D_I^0 \log (x(z)-x(0)))}{
  \partial (x(z))^2}
\\
&= -\frac{\lambda}{8}\frac{\partial^2 (D_I \log P^{(0)}_1(x(z),x(z)))}{
  \partial (x(z))^2} + \mathcal{O}(z^0)
\end{align*}
by Proposition~\ref{prop:DIlogQ}. Next,
from (\ref{DPI-partialz}) we know
\begin{align*}
\frac{\partial (D_I \log P^{(0)}_1(x(z),x(z)))}{
    \partial (x(z))} &=\lim_{w\to z}
2 \frac{\partial (D_I \log P^{(0)}_1(x(z),x(w)))}{
    \partial (x(z))} \Big|_{w=z}
  \\
  &= -2D_I W^{(0)reg}_2(z;z) + \mathcal{O}(z^0)\;.
\end{align*}
By the linear loop equation (\ref{lleq-g1}), the second term in the line
(\S) of (\ref{qleq-g1}) can be written as
\[
-x(z)D_I^0\frac{\lambda}{8(x(z)-x(0))^3}
= x(z) W^{(1)}_{|I|+1}(z;I)+ \mathcal{O}(z^0)\;.
\]
We bring this term to the lhs, rename $z\mapsto q$ and
multiply by $\frac{x'(q)dq}{-(y(q)+x(q))}$ to get with
the previous considerations
\begin{align}
&W^{(1)}_{|I|+1}(q;I) x'(q) dq
\label{W1-near0}
\\
&=\frac{\lambda x'(q)dq}{-(y(q)+x(q))}
\Big(\sum_{\substack{I_1\uplus I_2=I\\ I_2 \neq \emptyset}}
W^{(1)}_{|I_1|+1}(q;I_1)
  W^{(0)}_{|I_2|+1}(q;I_2)
+\frac{1}{2}
D_I\Omega^{(0)reg}_2(q,q)
\nonumber
\\[-2ex]
&\hspace*{5cm} + \frac{1}{4} 
\frac{\partial (D_I W^{(0)reg}_2(q;q))}{\partial x(q)}
\Big)
+ \mathcal{O}(q^{-1})\;.
\nonumber
\end{align}
Again the undetermined residue poses no problem since it is
projected away in (\ref{Wg1-residue}).
When translating to $\omega^{(g)}_{n+1}(z,u_1,...,u_n)
=\lambda^{2g+n-1} d_{u_1}\cdots d_{u_n}  W^{(g)}_{n+1}(z;u_1,...,u_n) dx(z)$,
the last two lines of (\ref{eq:mainformula}) with recursion kernel
$K_{0}(z;q)$ result.
\end{enumerate}
\end{proof}

\section{Outlook}

We continued the work of \cite{Grosse:2019jnv,Schurmann:2019mzu,
  Branahl:2020yru,Hock:2021tbl} and pushed the proof of the conjecture
\cite[Conj.~6.1]{Branahl:2020yru} that the quartic Kontsevich model
obeys blobbed topological recursion \cite{Borot:2015hna} to genus
$g\leq 1$.  The method that we developed in this paper is general and
powerful enough to achieve the proof to any $g$. The Dyson-Schwinger
equations (\ref{DSE-hHP}) provide the difference
$D_I D_g \log P^{(0)}_1(x(v),x(z)) -D_I D_g \log H^{(0)}_1(x(v);z)$,
where $D_I$ is the loop insertion operator of
Definition~\ref{def:loopins} and $D_g$ a `genus insertion' still to
make precise. Then $D_I D_g \log P^{(0)}_1(x(v),x(z))$ is this
difference symmetrised in all preimages $\hat{z}^k$ plus a function of
$(x(v),x(z);I)$ with simple poles at $x(v)=x(u_i)$ and poles at
$x(v)=x(z)$ up to order $4g+3$. The principal part of the
corresponding Laurent series is uniquely determined by (\ref{def-H}),
(\ref{def-P}), (\ref{def-hH}) and (\ref{def-hP}) in terms of Taylor
expansions of $Q^{(h)}_1(x(v),x(z);I)$ and $P^{(h)}_1(x(v),x(z);I)$
for $h<g$ at the diagonal $x(v)=x(z)$.  The required functions $Q$ are
determined before via a similar analysis of
$D_I D_{g-1} \log \hat{Q}^{(0)}_1(x(v),x(z)) -D_I D_{g-1} \log
\hat{M}^{(0)}_1(x(v);z)$. After all, there is no principal obstacle to
produce the solution $D_I D_g \log P^{(0)}_1(x(v),x(z))$ from which by
expansion about $x(v)=\infty$ we get global linear and quadratic loop
equations for $W^{(g)}_{|I|+1}$. This shows that the quartic
Kontsevich model satisfies blobbed TR.  Working out the details will
be challenging, however, as the formulae become increasingly lengthy
with larger $g$ and Taylor expansions to order $4g+1$ are necessary.
Furthermore, in order to be a meaningful extension of TR, the
multidifferentials $ \omega^{(g)}_n $ generated by eq.\
\eqref{eq:mainformula} in Theorem \ref{thm:mainrecursion} must be
symmetric in their arguments. This symmetry is highly non-trivial and
significantly harder to prove than in ordinary TR. In \cite{hock2025},
the symmetry of all $\omega^{(0)}_n$ was proved.

\bigskip

Instead of following the increasingly complicated computations
described above, two alternative approaches can be pursued:

Firstly, the involution identity (not used in this article but proven
to follow from the DSE for \( \Omega^{(g)}_n \) at least for
\( g=0 \)) coincides with a symmetry identity in the theory of
\( x \)-\( y \) duality in TR
\cite{Hock:2022wer,Alexandrov:2022ydc,Alexandrov:2023oov}. These
\( x \)-\( y \) duality identities were recently used in
\cite{Alexandrov:2024tjo} to define the so-called \textit{Generalised
  Topological Recursion}, which, however, is only valid for the
classical Bergman kernel \( B(z,w) \) and not for our extended
\( \omega_{0,2}(z,w) = B(z,w) - B(z,-w) \). Due to the similarity
between the involution identity and the \( x \)-\( y \) duality
formula, the quartic Kontsevich model might be solvable using
Generalised Topological Recursion with a deformed \( \omega_{0,2} \),
implying additional poles at the anti-diagonal and at \( z=0 \), which
must be included in the set of \textbf{key} points (in addition to the
ramification points of \( x \)) in Generalised Topological Recursion.

Secondly, as mentioned earlier, an important assumption made in this
article and in previous works on the quartic Kontsevich model was the
rationality condition (i.e., assuming a genus zero spectral curve) for
\( x \) and \( y \), which implies the rationality of
\( \Omega^{(g)}_n \). Allowing for a higher-genus or even a
non-algebraic spectral curve might reduce the solution of the quartic
Kontsevich model to ordinary TR. This presumption is supported by the
solution of the \( O(N) \) model on random lattices obtained in
\cite{Borot:2009ia}, where the loop equation can also be interpreted
as an extended version of the loop equations of the Hermitian 1-matrix
model.


\newcommand{\etalchar}[1]{$^{#1}$}

\end{document}